\newtheorem{thm}{Theorem}
\newtheorem{defi}{Definition}
\newtheorem{prop}{Proposition}
\newtheorem{lem}{Lemma}
\newtheorem{exmpl}{Example}
\begin{document}

\title{Context-aware Data Aggregation with Localized Information Privacy}

\author{\IEEEauthorblockN{Bo Jiang, \textit{Student Member, IEEE,}\quad
Ming Li, \textit{Senior Member, IEEE}\quad and Ravi Tandon, \textit{Senior Member, IEEE}
}



\thanks{
This paper is an extension of our previous work in IEEE CNS2018\cite{Jian1805:Context}.

B. Jiang  M. Li and R. Tandon are with the Department
of Electrical and Computer Engineering, University of Arizona, Tucson,
AZ, 85742 USA e-mail: {\{bjiang, lim, tandonr\}}@email.arizona.edu}
}

\markboth{Journal of \LaTeX\ Class Files,~Vol.~14, No.~8, August~2015}%
{Shell \MakeLowercase{\textit{et al.}}: Bare Demo of IEEEtran.cls for IEEE Journals}

\maketitle

\begin{abstract}
In this paper, localized information privacy (LIP) is proposed, as a new privacy definition, which allows statistical aggregation while protecting users' privacy without relying on a trusted third party. The notion of context-awareness is incorporated in LIP by the introduction of priors, which enables the design of privacy-preserving data aggregation with knowledge of priors. We show that LIP relaxes the Localized Differential Privacy (LDP) notion by explicitly modeling the adversary's knowledge. However, it is stricter than $2\epsilon$-LDP and $\epsilon$-mutual information privacy. The incorporation of local priors allows LIP to achieve higher utility compared to other approaches. For four different applications in privacy-preserving data aggregation, including survey, summation, weighted summation and histogram, we present an optimization framework, with the goal of minimizing the mean square error while satisfying the LIP privacy constraints. Utility-privacy tradeoffs are obtained under each model in closed-form, we then theoretically compare with the centralized information privacy and LDP. At last, we validate our analysis by simulations using both synthetic and real-world data. Results show that our LIP mechanism provides better utility-privacy tradeoffs than LDP and when the prior is not uniformly distributed, the advantage of LIP is even more significant. 
\end{abstract}

\begin{IEEEkeywords}
privacy-preserving data aggregation, differential privacy, information-theoretic privacy
\end{IEEEkeywords}

\IEEEpeerreviewmaketitle

\section{Introduction}

\IEEEPARstart{P}{rivacy} issues are crucial in this big data era, as users' data are collected both intentionally or unintentionally by a large number of private or public organizations. Most of the collected data are used for ensuring high quality of service, but may also put one's sensitive information at potential risk. For instance, when someone is rating a movie, his/her preferences may be leaked; when someone is searching for a parking spot nearby using a smartphone, his/her real location is uploaded and may be prone to leakage. To mitigate such privacy leakage, it is desirable to design privacy-preserving mechanisms that provide strong privacy guarantees without affecting data utility.

Traditional privacy notions such as  $k$-anonymity \cite{SS98} do not provide rigorous privacy guarantee and are prone to various attacks. On the other hand, Differential Privacy (DP) ~\cite{Dwork2008,Dwork20061} has become the $de facto$ standard for ensuring data privacy in the database community~\cite{Dwork2006}. The definition of DP assures each user's data has minimal influence on the output of certain types of queries on a database. In the classical DP setting, it is assumed that there is a trusted server which perturbs users' data while answering queries. However, more often then not, organization collecting users' data may not be trustworthy and the database storage system may not be secure \cite{Yahoo}. 

Recently, localized privacy protection mechanisms have gained attention as this setting allows local data aggregation while protecting each user's data without relying on a trusted third party \cite{aggregation}. In localized privacy-preserving data release, each user perturbs his or her data before uploading it; organizations that want to take advantage of users' data then aggregate with collected users' published results. { Earliest such mechanism is randomized response\cite{randomresponse}, which randomly perturbs each user's data. However, the original randomized response does not have formal privacy guarantees.
Later, Localized Differential Privacy (LDP) was proposed as a local variant of DP that quantifies the privacy leakage in the local setting\cite{Freudiger:2011:EPR:2186383.2186387}. Many schemes were proposed under the notion of LDP. For example, \cite{Extreme_ldp,ldp_lalitha,rr_ldp}, and Google's RAPPOR \cite{Rappor}.
LDP based data aggregation mechanisms have already been deployed in the real-world.} For example, in June 2016, Apple announced that it would deploy LDP-based mechanisms to collect user's typing data \cite{Apple}. However, Tang \textit{et al}. show that although each user's perturbation mechanism satisfies LDP, the privacy budget is too large ($\epsilon=43$)\footnote{The parameters, $\epsilon\ge{0}$, measures the privacy level. A smaller $\epsilon$ corresponds to a higher privacy level.} to provide any useful privacy protection. Wang \textit{et al}. provide a variety of LDP protocols for frequency estimation \cite{Tianhao} and compare their performance with Google's RAPPOR. However, for a given reasonable privacy budget, these protocols provide limited utility. Intuitively, compared with the centralized DP model, it is more challenging to achieve a good utility-privacy tradeoff  under the LDP model. The main reasons are two-fold: (1) LDP requires introducing noise at a significantly higher level than what is required in the centralized model. That is, a lower bound of noise magnitude of $\Omega_{\epsilon}(\sqrt{N})$ is required for LDP, where $N$ is the number of users. In contrast, only $O_{\epsilon}(1)$ is required for centralized DP \cite{Lowerbound}. 
(2)
LDP does not assume a neighborhood constraint on users' data as inputs, thus when the domain of data is very large, LDP leads to a significantly reduced utility \cite{Raef}.
 
In general, both localized and centralized DP provide strong context-free theoretical guarantees against worst-case adversaries \cite{context}. Context-free means that there is no knowledge of users' data (either instantaneous or statistical). On the other hand, context-aware privacy notions such as ones where statistical knowledge is available are favorable, as the utility can be increased by explicitly modeling the  adversary's knowledge. Information-theoretic privacy notions \cite{DBLP:journals/corr/WangYZ14b}\cite{ITP1}   that incorporate  statistical (prior) knowledge fall  into this category, which use mutual information (MI) to measure the information leaked about the original database in the released data \cite{ExtendingDP,7498650,MIP2}. Compared with context-free privacy notions, context-aware privacy notions, especially prior-aware notions achieve a better utility-privacy tradeoff \cite{context}.

We next discuss a simple illustrative example to motivate the need and advantages of context-aware privacy notions.

\begin{exmpl}
Consider taking a survey over $N=100$ individuals, where each person is independently asked whether he/she has been infected by some kind of disease. It is known based on clinical studies that this disease infects 1 out of 10 people on average. Each individual holds a local true answer $X_i$, where $i$ is the individual's index and $X_i=1$ if his/her answer is yes, $X_i=0$ if the answer is no. For privacy consideration, each individual perturbs his/her data by a randomized response mechanism (shown in Fig. \ref{fig:example_channel}) before publishing it. The goal is to estimate the aggregate $\sum_{i=1}^N X_i$ based on $Y_i, i=1,...,N$.

\begin{figure}[htp]
\centering
\includegraphics[width=3cm]{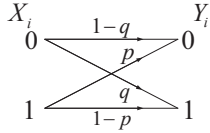}
\caption{Randomized response mechanism for each individual.}
\centering
\label{fig:example_channel}
\end{figure}
\end{exmpl}

We first assume that the perturbation mechanism satisfies the context-free $\epsilon$-LDP for a given privacy budget $\epsilon$. By the definition of LDP, for each user: $\max\{\frac{p}{1-q},\frac{q}{1-p},\frac{1-q}{p},\frac{1-p}{q}\}\le{\epsilon}$. In \cite{Tianhao}, each user's input $X_i$ is treated as a fixed instance and a pair of valid solution for ($p, q$) is $p=q=\frac{1}{e^{\epsilon}+1}$. Using  the unbiased estimator adopted in \cite{Tianhao}, the  expected error of the aggregate is $\mathcal{E}_{LDP}=\frac{Np(1-p)}{(p-q)^2}=\frac{100e^{\epsilon}}{(e^{\epsilon}+1)^2}$.  The authors then derived an optimal solution for ($p, q$) by minimizing this error while subject to the privacy constraints. Their optimal values of $p$ and $q$ are different, where  $q^*=1-e^{\epsilon}/2$ and $p^*=0.5$, resulting in an expected estimation error of $\mathcal{E}_{Opt-LDP}=\frac{25}{(0.5-e^{\epsilon}/2)^2}$ in our example, which is  smaller than $\mathcal{E}_{LDP}$ (as shown in Fig. \ref{fig:example_variance}). This approach increases utility by  implicitly using prior knowledge, as it is based on the fact that the answers of a majority of users are zeros ($q^*$ is smaller than $p^*$). Unfortunately, the assumption that $X_i$, $i=1,2,...,N$ are instances rather than random variables prohibits introducing prior in a principled manner. In addition, the definition of LDP is independent of the priors, which is unable to adjust the perturbation parameters based on different priors. To explicitly introduce prior knowledge, a new privacy definition  is needed.

\setlength{\abovecaptionskip}{-0.05cm}
\begin{figure}[t]
\centering
\includegraphics[width=5cm]{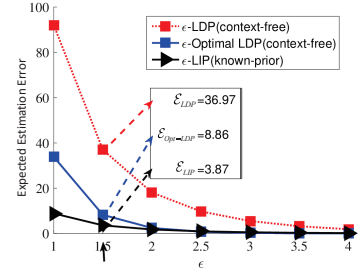}
\caption{Comparison among the expected estimation error of different approaches in Example 1:  considering prior in the perturbation mechanism significantly reduces the error.}
\centering
\label{fig:example_variance}
\end{figure}

 Assuming that each $X_i$ is a random variable with priors $P_1=Pr({X_i=1})$ and $P_0=Pr({X_i=0})$, we propose a context-aware LIP notion which imposes a bound on the ratio between the prior and posterior. In this example, we have: $-\epsilon\le{\{\frac{Pr(Y_i=0)}{1-q},\frac{Pr(Y_i=0)}{q},\frac{Pr(Y_i=1)}{1-p},\frac{Pr(Y_i=1)}{p}}\}\le{\epsilon}$.
 This notion guarantees that taking observations on the published data provides limited additional information of the real data. Restricted by this privacy notion, the optimal $p^*$ and $q^*$ which minimize the mean square error (MSE) can be derived as: $q^*=P_1/e^{\epsilon}$ and $p^*=P_0/e^{\epsilon}$. 
 Intuitively, it reduces MSE by carefully adjusting $p$ and $q$ to different priors. By this perturbation mechanism, the resulting error is $\mathcal{E}_{LIP}=\frac{10e^{\epsilon}-1}{0.64}$ which is significantly smaller than $\mathcal{E}_{LDP}$ (as shown in Fig. \ref{fig:example_variance}).

As we discussed above, introducing priors can provide higher utility. However, this comes at the overhead of estimating or learning prior knowledge. This can be obtained in two ways:
(1) Sometimes, each individual user's local prior is available (e.g., can be obtained by training based on historical published data)  \cite{Geo,7930028}. For instance, when Google wants to survey  multiple users' current locations to construct a traffic heat-map, it is  possible that it already possesses  past reported (unperturbed) locations of each user. 
(2)  However, local-prior is  relatively strong knowledge on users' data that may not always be attainable, and one may only be able to learn a global prior (assuming that users' data are identically distributed). For example, when taking a periodic survey, aggregated results in the recent past can be used as the global prior. Another example is when estimating the frequency of a rare disease,   one can leverage the results of past clinical research    to obtain a global prior \cite{bound}.

The  main contributions of this paper are three-fold:

(1)
We propose a new notion of Localized Information Privacy (LIP) for the local data release setting (without a trusted third party), which relaxes the notion of LDP by introducing priors to   increase data utility. We formally show that, $\epsilon$-LIP implies $\epsilon$-Mutual Information Privacy (MIP), and $\epsilon$-LIP is sandwiched between  $\epsilon$-LDP    and 2$\epsilon$-LDP.

(2) We apply the LIP notion to  privacy-preserving data aggregation. Focusing on four applications including: survey, (weighted) summation and histogram, we present a utility-privacy optimization framework with the goal of minimizing the mean squared error while  satisfying the LIP constraints. We deploy the randomized response type of perturbation mechanisms. We formulate  two perturbation and aggregation models, including: binary-input binary-output (BIBO) model, multiple-input multiple-output (MIMO) model, and derive the corresponding optimal perturbation parameters in closed form and discuss how each application can be realized. We show that one set of the optimal solution can minimize the MSE and the Mean absolute error (MAE) between the raw data and the perturbed data simultaneously. We theoretically demonstrate the advantages of the proposed mechanism and compare with the LDP mechanisms. For comparison, we also study a centralized version of data  aggregation under information privacy (CIP), and derive its optimal utility-privacy tradeoff. 

(3) We validate our analysis using simulations on  both synthetic and real-world datasets (i.e., Karosak, a website-click stream data set, and Gowalla, a location aggregation data set). Both theoretical and simulation results  show that optimal perturbation mechanisms under  $\epsilon$-LIP always achieve  a better utility-privacy tradeoff  than those under $\epsilon$-LDP when $\epsilon>0$, especially when the prior is not uniformly distributed. In the MIMO case, We show that when the domain size increases, the advantage of the LIP is enhanced. We show that the advantages of the context-aware LIP are two-fold: on one hand, users are setting the perturbation parameters according to the prior knowledge; on the other hand, the adversary is allowed to build prior-related estimators that can minimize the expected errors. When compared with CIP,  it always leads to a better utility-privacy tradeoff than LIP,  as the server possesses global information. However, the utility gap is not large. 

The remainder of the paper is organized as follows. In Section~\ref{sec:privacy_model}, we describe the proposed LIP notion and its relationship with other existing privacy notions. In Section~\ref{sec:model}, we introduce the system model and problem formulation. In Section~\ref{tradeoff}, we derive the utility-privacy tradeoff under the several applications. In Section~\ref{sec:sim}, we present the simulation results and   compare utility-privacy tradeoffs among different models. In Section~\ref{sec:con}, we offer concluding remarks and discuss  future directions.

\section{Privacy Definitions}\label{sec:privacy_model}
In local privacy-preserving data release, each user  uploads its perturbed data directly to an untrusted aggregator. In this section,  we first recap two existing privacy notions in localized settings, and then present our new LIP definition.

The traditional LDP definition guarantees that each user's perturbed data has a similar probability to result in the same output for any two inputs from the data domain $\mathbb{D}$:

\begin{defi}($\epsilon$-Localized Differential Privacy (LDP))\cite{NIPS2014_5392}\label{def:LDP}
A mechanism $\mathcal{M}$ which takes input $X$ and outputs $Y$ satisfies  $\epsilon$-LDP for some $\epsilon\in{\mathbb{R}^+}$, if $\forall{x, x'\in{\mathbb{D}}}$ and $\forall{y\in{Range(\mathcal{M})}}$: 
\begin{equation}
    \frac{Pr(\mathcal{M}(x)=y)}{Pr(\mathcal{M}(x')=y)}\le{e^{\epsilon}},
\end{equation}
\end{defi}

LDP provides strong context-free privacy guarantee against worst-case   adversaries.  However, there are many scenarios where some context of $X$ is available (e.g., prior distribution). In such situations,   introducing  context   provides  relaxed privacy guarantees. 
One such definition is mutual information privacy, which uses the mutual information between $Y$, $X$ to measure the average information leakage of $X$ contained in $Y$: 

\begin{defi}($\epsilon$-Mutual Information Privacy (MIP))\cite{7498650}
A mechanism $\mathcal{M}$ which takes input $X$ and outputs $Y$, satisfies $\epsilon$-MIP for some $\epsilon\in{\mathbb{R}^+}$, if the mutual information between $X$ and $Y$ satisfies $I(X;Y)\le{\epsilon}$, where $I(X;Y)$ is: 
\begin{equation}\label{mutualinfo}
    \sum_{x,y\in{\mathbb{D}}}Pr(X=x, Y=y)\log\frac{Pr(X=x,Y=y)}{Pr(X=x)Pr(Y=y)}.
\end{equation}
\end{defi}

Originally, MIP was proposed under the centralized setting where $X$ is the database or individual items and $Y$ is a query output. Here we can adapt it to the local setting, where $X$ and $Y$ are each individual user's input and output.

Although MIP is  context-aware, it is a relative weak privacy notion since it only bounds   the average information leakage. There may exist some $(x,y)$ pair that makes the ratio between the joint and product of marginal distributions very large (while the  joint probability is very small). 

In order to  limit the information leakage of every pair of realizations of $X$ and $Y$, we consider a bound on the ratio between the prior $Pr(X)$ and posterior $Pr(X|Y)$, which leads to our proposed localized information privacy notion:

\begin{defi}($\epsilon$-Localized Information Privacy (LIP))\label{def:LIP}
A mechanism $\mathcal{M}$ which takes input $X$ and output $Y$ satisfies $\epsilon$-LIP for some $\epsilon\in{\mathbb{R}^+}$, if $\forall{x,y\in{\mathbb{D}}}$:
\begin{equation}\label{eq1}
    e^{-\epsilon}\le{\frac{Pr(X=x)}{Pr(X=x|Y=y)}}\le{e^{\epsilon}}.
\end{equation}
\end{defi}

 Intuitively,   LIP guarantees that having the knowledge of users' priors, the adversary can't infer too much additional information about each input $x$ by observing each output $y$. Note that,  when $\epsilon$ is small, this ratio   is bounded close to 1. Definition (\ref{def:LIP}) can be viewed as the localized version of information privacy, which focused on a centralized setting \cite{Centrlized_IP}, and the main  differences are in the  definitions of input and output. Again, here $X$ and $Y$ stand for each user's input and output variables, respectively. 

In the following, We   show that LIP is a stronger privacy notion than MIP, since the latter only provides an average privacy guarantee while LIP bounds the leakage on every pair of realizations of  $X$ and $Y$.
\begin{prop}
If a mechanism $\mathcal{M}$ satisfies $\epsilon$-LIP, it also satisfies $\epsilon$-MIP.
\end{prop}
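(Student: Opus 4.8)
The plan is to rewrite the mutual information so that the prior--posterior ratio controlled by LIP appears inside the logarithm, and then bound the sum termwise. The key starting observation is that, by Bayes' rule, the integrand of $I(X;Y)$ in (\ref{mutualinfo}) is exactly the quantity LIP constrains: for every pair $(x,y)$ with $Pr(X=x,Y=y)>0$,
\[
\frac{Pr(X=x,Y=y)}{Pr(X=x)Pr(Y=y)}=\frac{Pr(X=x|Y=y)}{Pr(X=x)}.
\]

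Next I would convert Definition~\ref{def:LIP} into a bound on this ratio. Since the bounds $e^{-\epsilon}$ and $e^{\epsilon}$ are symmetric about $1$, taking reciprocals in (\ref{eq1}) gives $e^{-\epsilon}\le \frac{Pr(X=x|Y=y)}{Pr(X=x)}\le e^{\epsilon}$ for all $x,y$. Interpreting $\log$ as the natural logarithm (consistent with the $e^{\epsilon}$ appearing in the privacy definitions), every log-term in (\ref{mutualinfo}) then satisfies
\[
\log\frac{Pr(X=x|Y=y)}{Pr(X=x)}\le \epsilon .
\]

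Finally I would substitute this pointwise bound back into the mutual information and factor $\epsilon$ out of the probability-weighted sum, using that the joint distribution sums to one:
\[
I(X;Y)=\sum_{x,y}Pr(X=x,Y=y)\log\frac{Pr(X=x|Y=y)}{Pr(X=x)}\le \epsilon\sum_{x,y}Pr(X=x,Y=y)=\epsilon,
\]
which is precisely the statement of $\epsilon$-MIP. This makes transparent the intuition stated in the text: LIP bounds the leakage \emph{pointwise} on every realization $(x,y)$, so it automatically controls the \emph{average} leakage measured by $I(X;Y)$.

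The argument is short and presents no serious obstacle; it is essentially a uniform pointwise bound promoted to an averaged one. The only point genuinely requiring care is the logarithm base: the clean conclusion $I(X;Y)\le\epsilon$ depends on reading $\log$ as $\ln$, so that the multiplicative LIP bound $e^{\epsilon}$ becomes the additive bound $\epsilon$; with a different base one would instead obtain $I(X;Y)\le \epsilon\log e$. A minor secondary detail is the treatment of pairs with zero joint probability, but under the standard convention $0\log 0=0$ these contribute nothing to the sum and can be discarded without affecting the bound.
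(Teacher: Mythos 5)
Your proposal is correct and follows essentially the same route as the paper: both convert the LIP bound via Bayes' rule into the pointwise bound $e^{-\epsilon}\le \frac{Pr(X=x,Y=y)}{Pr(X=x)Pr(Y=y)}\le e^{\epsilon}$, substitute it into the mutual information sum, and use $\sum_{x,y}Pr(x,y)=1$ to conclude $I(X;Y)\le\epsilon$. Your added remarks on the logarithm base and the $0\log 0=0$ convention are sensible housekeeping that the paper leaves implicit.
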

\begin{proof}
Assume that $\mathcal{M}$ satisfies $\epsilon$-LIP. By Bayes rules, we have that, $\forall x, y\in{\mathbb{D}}$:
\begin{equation}\label{eq3}
\setlength{\abovedisplayskip}{3pt}
\setlength{\belowdisplayskip}{3pt}
    e^{-\epsilon}\le{\frac{Pr(X=x,Y=y)}{Pr(X=x)Pr(Y=y)}}\le{e^{\epsilon}}.
\end{equation}

Substituting \eqref{eq3} into \eqref{mutualinfo}, we get:
\begin{equation*}
\begin{aligned}
\setlength{\abovedisplayskip}{3pt}
\setlength{\belowdisplayskip}{3pt}
&I(X,Y)\le{\epsilon\sum_{x,y\in{\mathbb{D}}}Pr(x,y)}=\epsilon,
\end{aligned}
\end{equation*}
where $\sum_{x,y\in{\mathbb{D}}}Pr(x,y)=1$.\end{proof}

Furthermore, the following theorem shows the relationship between   LIP and LDP:
\begin{thm}
If a mechanism $\mathcal{M}$ satisfies $\epsilon$-LIP, then it also satisfies $2\epsilon$-LDP;
if a mechanism $\mathcal{M}$ satisfies $\epsilon$-LDP, then it also satisfies $\epsilon$-LIP.
\end{thm}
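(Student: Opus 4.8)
The plan is to collapse both privacy conditions onto a single quantity, the ratio of each input's likelihood to the output marginal, after which both implications become short. The bridge is Bayes' rule: for all $x,y\in\mathbb{D}$,
\[
\frac{Pr(X=x)}{Pr(X=x\mid Y=y)}=\frac{Pr(Y=y)}{Pr(Y=y\mid X=x)},
\]
so by the same manipulation used to pass from \eqref{eq1} to \eqref{eq3}, the $\epsilon$-LIP condition is equivalent to
\[
e^{-\epsilon}\le\frac{Pr(Y=y\mid X=x)}{Pr(Y=y)}\le e^{\epsilon}\qquad\forall\,x,y\in\mathbb{D}.
\]
Since $Pr(Y=y\mid X=x)=Pr(\mathcal{M}(x)=y)$, this recasts LIP as a two-sided comparison of each input likelihood against the marginal $Pr(Y=y)=\sum_{x'}Pr(\mathcal{M}(x')=y)\,Pr(X=x')$, which is a prior-weighted \emph{average} of the per-input likelihoods. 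I would prove both directions from this reformulation.

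For $\epsilon$-LDP $\Rightarrow$ $\epsilon$-LIP, I would fix $x,y$ and sandwich the marginal. Applying the LDP inequality in both orderings gives $e^{-\epsilon}Pr(\mathcal{M}(x)=y)\le Pr(\mathcal{M}(x')=y)\le e^{\epsilon}Pr(\mathcal{M}(x)=y)$ for every $x'$; multiplying by $Pr(X=x')$, summing over $x'$, and using $\sum_{x'}Pr(X=x')=1$ yields $e^{-\epsilon}Pr(\mathcal{M}(x)=y)\le Pr(Y=y)\le e^{\epsilon}Pr(\mathcal{M}(x)=y)$. Rearranging gives exactly the reformulated LIP bound, with the same $\epsilon$.

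For $\epsilon$-LIP $\Rightarrow$ $2\epsilon$-LDP, I would apply the reformulated bound to two inputs at once. Writing
\[
\frac{Pr(\mathcal{M}(x)=y)}{Pr(\mathcal{M}(x')=y)}=\frac{Pr(Y=y\mid X=x)/Pr(Y=y)}{Pr(Y=y\mid X=x')/Pr(Y=y)},
\]
the numerator is at most $e^{\epsilon}$ and the denominator is at least $e^{-\epsilon}$ by LIP, so the ratio is at most $e^{2\epsilon}$, which is the $2\epsilon$-LDP condition.

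The one step I would treat with care, rather than as a triviality, is the LDP $\Rightarrow$ LIP direction, since it is where the asymmetry in the constants originates. The saving comes precisely because the marginal is an average of conditionals rather than an arbitrary second conditional: averaging keeps the comparison one-sided and preserves the exponent $\epsilon$, whereas the converse must chain two LIP bounds and therefore doubles it to $2\epsilon$. I would also record the degenerate cases where $Pr(X=x)=0$ or $Pr(Y=y)=0$: the conditioning event then has zero probability and the ratios are interpreted by continuity, so the required inequalities hold vacuously.
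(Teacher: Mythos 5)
Your proposal is correct and follows essentially the same route as the paper: both directions hinge on the Bayes-rule reformulation of LIP as a two-sided bound on $Pr(Y=y\mid X=x)/Pr(Y=y)$, with the $2\epsilon$-LDP direction obtained by combining two such bounds and the LDP $\Rightarrow$ LIP direction obtained by expressing the marginal $Pr(Y=y)$ as a prior-weighted average of the conditionals and summing. Your explicit handling of the zero-probability degenerate cases is a small refinement the paper omits, but the core argument is identical.
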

\begin{proof}
To prove the first part, consider a mechanism $\mathcal{M}$  that takes any two inputs $X=x$, $X=x'$ and outputs the same $Y=y$.

When $\mathcal{M}$ satisfies $\epsilon$-LIP, using the Bayes rule, Definition (\ref{def:LIP}) is equivalent to:
 \begin{equation}\label{eq2}
 \setlength{\abovedisplayskip}{3pt}
\setlength{\belowdisplayskip}{3pt}
    e^{-\epsilon}\le{\frac{Pr(Y=y)}{Pr(Y=y|X=x)}}\le{e^{\epsilon}}. 
\end{equation}

Since the above also holds for  $X=x'$, we have:
 \begin{equation}
 \setlength{\abovedisplayskip}{3pt}
\setlength{\belowdisplayskip}{3pt}
    e^{-\epsilon}\le{\frac{Pr(Y=y)}{Pr(Y=y|X=x')}}\le{e^{\epsilon}}.
\end{equation}
Inequality (\ref{eq2}) is equivalent to:
 \begin{equation*}
 \setlength{\abovedisplayskip}{3pt}
\setlength{\belowdisplayskip}{3pt}
    e^{-\epsilon}\le{\frac{Pr(Y=y|X=x)}{Pr(Y=y)}}\le{e^{\epsilon}}.
\end{equation*}
Since both of the metrics in above inequalities are positive, by multiplying these two inequalities, we   get:
\begin{equation*}
\setlength{\abovedisplayskip}{3pt}
\setlength{\belowdisplayskip}{3pt}
   e^{-2\epsilon}\le\frac{Pr(Y=y|X=x)}{Pr(Y=y|X=x')}\le{e^{2\epsilon}}.
\end{equation*}
Since we can switch $x$ and $x'$, it is equivalent to the definition of $2\epsilon$-LDP.

To prove the second part, if $\mathcal{M}$ satisfies $\epsilon$-LDP, we have:
\begin{equation*}
\setlength{\abovedisplayskip}{3pt}
\setlength{\belowdisplayskip}{3pt}
   Pr(Y=y|X=x')\le{e^{\epsilon}}Pr(Y=y|X=x).
\end{equation*}
On the other hand:
\begin{equation*}
\begin{aligned}
Pr(Y=y)=&\sum_{x'\in{\mathbb{D}}}Pr(Y=y|X=x')Pr(X=x')\\
\le&e^{\epsilon}Pr(Y=y|X=x)\sum_{x'\in{\mathbb{D}}}Pr(X=x')\\
=&e^{\epsilon}Pr(Y=y|X=x),
\end{aligned}
\end{equation*}
by switching inputs, we can also get:
\begin{equation*}
\setlength{\abovedisplayskip}{3pt}
\setlength{\belowdisplayskip}{3pt}
   Pr(Y=y)\ge{e^{-\epsilon}}Pr(Y=y|X=x),
\end{equation*}
which means that $\mathcal{M}$ also satisfies $\epsilon$-LIP.
\end{proof}

\begin{table}[t]
\caption{List of symbols}
\footnotesize
\centering
\begin{tabular}{c{l} c}
\hline
$\mathbb{D}$ & The universe of input values\\
$X$ & Input random variable\\
$P$ & User's prior distribution\\
$Y$ & Output random variable\\
$\bar{X}$ &  Set of input data\\
$\bar{Y}$ & Set of output data\\
$i$ & Index User ($i\in{\{1,2,3,...,N\}}$)\\
$N$ & Total number of users\\
$\mathcal{M}$ & Privacy preserving mechanism\\
$\mathbf{q}$ & Set of perturbation parameters\\
$f(\cdot)$ & Aggregation function\\
$\hat{X}$ & Aggregated data of user\\
$\hat{\mathbf{S}}$ & Minimized mean square error estimation\\ 
$\epsilon$ & Privacy budget\\
$U$ & Utility measurement\\
$\mathcal{E}$ & Mean square error function\\
$\mathcal{T}$ & Feasible region of $\mathbf{q}$ \\
\hline
\end{tabular}
\end{table}

Thus, $\epsilon$-LIP is a more relaxed privacy notion than $\epsilon$-LDP. However, it is stronger than $2\epsilon$-LDP. Intuitively, LIP relaxes LDP because LDP results in the same output $y$ for every input $x$, no matter what his/her prior is. On the other hand, for inputs with different priors, LIP perturbs differently. For example, when a user with $Pr(X=1)=0.99$, if he holds $X=1$, which means his real data is consistent with the prior knowledge. As LIP bounds on the ratio between prior and posterior, it has a large probability to output $1$ to make the posterior probability similar to its prior; if he holds $X=0$, which has a small prior to happen, he also has large probability to output 1 to make the posterior probability small.

\section{Models  and Problem Formulation}\label{sec:model}
\subsection{System and Threat Models}

\begin{figure}[t]
\centering
\includegraphics[width=8cm]{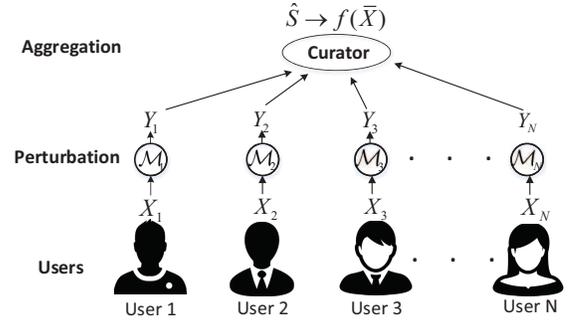}
\caption{System Model of Privacy-Preserving Data Aggregation.}
\vspace{-10pt}
\centering
\label{fig:Model}
\end{figure}

Consider a data aggregation system with  $N$ users and a data curator. Each user $i$ locally generates  private data which is denoted as random variable $X_i$,  taking value $x_i$ from the domain $\mathbb{D}=\{a_1,a_2...a_d\}$ with probability $P^i_k=Pr(X_i=k)$. We assume that $X_i$s  are independent from each other. Before publishing his/her data to the curator, each user locally perturbs it by a privacy-preserving mechanism $\mathcal{M}_i$. The output is denoted as  $Y_i$  which takes value $y_i$ from $\mathbb{D}$. The mechanism $\mathcal{M}_i$   maps each possible input to each possible output with certain probability, the set of them are called perturbation parameters (denoted as $\mathbf{q}^i$). After receiving each user's perturbed data, the curator computes a statistical function on those data (for example, estimating the frequency of certain input which will be useful for data mining).  The  system model is depicted in Fig. \ref{fig:Model}.


The curator is considered as untrusted due to both internal  and external threats. On the one hand, users' private data is profitable  and companies can be interested in user tracking or selling their data. On the other hand, data breaches may happen from time to time due to hacking activities.   Denote the true aggregated result by $f(\bar{X})$, where $\bar{X}=\{X_1,X_2,...,X_N\}$. The curator (adversary) observes all the users' perturbed outputs $\bar{Y}=\{Y_1,Y_2,...,Y_N\}$ and tries to obtain an estimate of $\mathbf{S}=f(\bar{X})$. Furthermore, we assume that the adversary  possesses knowledge of  prior distributions of   users' inputs, and the algorithms/perturbation mechanisms that users adopt to publish their data. The  curator aims at performing accurate estimations using all the information above, but is also interested in inferring each user's real input   $X_i$.


For different applications of data aggregation, the definition of $f(\cdot)$s varies. In this paper, four most common applications are considered: 
\begin{itemize}
\item Survey: the curator is interested in estimating how many people in the survey  possess a private value $v$. Each user's data can be mapped into a binary bit as: $f_i(X_i)=\mathbbm{1}_{\{v={X_i}\}}$, where $\mathbbm{1}_{\{a=b\}}$ is an indicator function, which is 1 if $a=b$; 0 if $a\neq{b}$. Then the curator just sum up all the collected data to get the answer. Thus $\mathbf{S}=f(\bar{X})=\sum^N_{i=1}\mathbbm{1}_{\{v={X_i}\}}$.
\item Summation: Summation results are usually used to measure an average property of the surveyed individuals. For example, the census bureau wants to survey the average income over a group of people, mathematically, $\mathbf{S}=f(\bar{X})=\frac{1}{N}\sum^N_{i=1}{X_i}$. 
\item Weighted summation (linear combination): It's straightforward to extend the direct summation to a weighted summation (plus an offset) for more general applications. For example, assume that the curator is interested in some particular users more than others, such as employer v.s. employees; adults v.s. children; the professionals v.s. amateurs. Thus, these important users' data are assigned a larger coefficient than others. The offset can be used as a correction to the raw data. For example, avoiding summation blowing up or controlling data in a particular range. Thus $\mathbf{S}=f(\bar{X})=\sum^N_{i=1}{(a_iX_i+b_i)}$.  
\item  Histogram: In this application, the curator is interested in estimating how many people possess each of the data category in $\mathbbm{D}$, or classify people according to their data value. For example, the curator wants to statistically estimate a location frequency matrix with each entry standing for how many people are within a particular location. For this application, $\mathbf{S}$ is a set of "categorized" data: $\{S_{1},S_{2},...,S_{d}\}$, such that, $\forall{{k}\in\mathbb{D}}$, $S_{k}=\sum^N_{i=1}\mathbbm{1}_{\{X_i=k\}}$.
\end{itemize}

\subsection{Privacy and Utility Definitions}

The \textit{privacy}   of each user's input satisfies LIP and is parameterized by the privacy budget ($\epsilon$) in Definition (\ref{def:LIP}). The smaller $\epsilon$ is, the higher privacy level the mechanism satisfies. Note that, for simplicity, we consider $\epsilon$ to be the same for all the users; however it is straightforward to extend our model and results to different $\epsilon$ for different users. Under LIP, the privacy constraints can be formulated as: $\forall i\in{\{1,2,...,N\}}$ and $\forall x_i,y_i\in\mathbb{D}$, there is 

\begin{small}
\begin{equation}\label{eqlip2}
\setlength{\abovedisplayskip}{3pt}
\setlength{\belowdisplayskip}{3pt}
e^{-\epsilon}\le\frac{Pr(Y_i=y_i|X_i=x_i)}{Pr(Y_i=y_i)}\le{e^{\epsilon}}.
\end{equation}
\end{small}

Note that, $\mathbf{q}^i\triangleq\{Pr(Y_i=y_i|X_i=x_i), \forall x_i, y_i\in\mathbb{D}\}$. When $\epsilon$ is given, the set of inequalities in Eq. (\ref{eqlip2}) constrains $\mathbf{q}^i$ to be within  a feasible region $\mathcal{T}_i$,  $\forall{i\in{1,2,..N}}$.

 The definition of \textit{utility} depends on the application scenario. For example, in statistical aggregation, estimation accuracy is often measured by absolute error or mean square error \cite{MMSE}\cite{heavyhitter}; in location tracking, it is typically measured by Euclidean distance \cite{Geo}; in privacy-preserving data publishing, distortion is usually used to measure the utility \cite{7498650}.

Focus on the four applications discussed above, we define utility as the inverse of the Mean Square Error (MSE): 
$U(\mathbf{S},\hat{\mathbf{S}})=-\mathcal{E}(\mathbf{S},\hat{\mathbf{S}})$, where $\mathcal{E}(\mathbf{S},\hat{\mathbf{S}})=E[(\mathbf{S}-\hat{\mathbf{S}})^2]$, $\hat{\mathbf{S}}$ is the estimated $\mathbf{S}$. Notice that, given $P^i$, $\mathcal{E}(\mathbf{S},\hat{\mathbf{S}})$ depends only on each user's perturbation parameters: $\mathbf{q}^1,..., \mathbf{q}^N$, as any estimator $\hat{\mathbf{S}}$ will depend on the output $\bar{Y}$ whose distribution is a function of $\mathbf{q}^i$. Thus, maximizing  the utility  is equivalent to find the optimal parameters to minimize the MSE.

\subsection{Problem Formulation}\label{sec:problem}

  In general, there is a tradeoff between utility and privacy. We can   formulate  the following   optimization problem  to find the optimal perturbation mechanism that yields the optimal tradeoff:  
\begin{equation}\label{utility}
\begin{aligned}
&\quad\quad\quad\min \mathcal{E}(\mathbf{q}^1,..., \mathbf{q}^N),\\
 &s. t. ~~~\mathbf{q}^i\in{\mathcal{T}_i},~~ \forall{i\in{1,2,..N}}. 
 \end{aligned}
\end{equation}

From \cite{papoulis2002probability}, it is well known that the optimal estimator that results in the minimized mean square error (MMSE) is $\hat{\mathbf{S}}=g(\bar{Y})=E[\mathbf{S}|\bar{Y}]$. Since $E[E[\mathbf{S}|\bar{Y}]]=E[\mathbf{S}]$, $\hat{\mathbf{S}}$ is an unbiased estimator. Therefore, we use the MMSE estimator in Eq. (\ref{utility}).

\textbf{Problem Decomposition:} Next, we show how the problem defined in Eq. (\ref{utility}) can be decomposed into the local cases. 

Since  we assume that each user's input is independent from each other, all the $f(\cdot)$ functions above can be decomposed into local functions of each $X_i$: local functions in application 1 and application 4 are indicator functions (or vector);  local functions in application 3 is a linear function. Without loss of generality, we denote the local function for user $i$ as $f_i$.

In the local setting, users independently perturb their data, thus each of them results in a MSE in aggregation, which is denoted by $\mathcal{E}_i=E[(f_i(X_i)-E[f_i(X_i)|Y_i])^2]$ (for the fourth application, denote $\mathcal{E}^k_i=E[(f^k_i(X_i)-E[f^k_i(X_i)|Y_i])^2]$) as the MSE of user $i$ when aggregating the $k$-th data, and the overall utility defined in Eq. \eqref{utility} satisfies decomposition proposition:

\begin{figure*}[htp]
\centering 
\subfigure[The binary model: different local priors and asymmetric perturbation] 
{ \includegraphics[width=5cm]{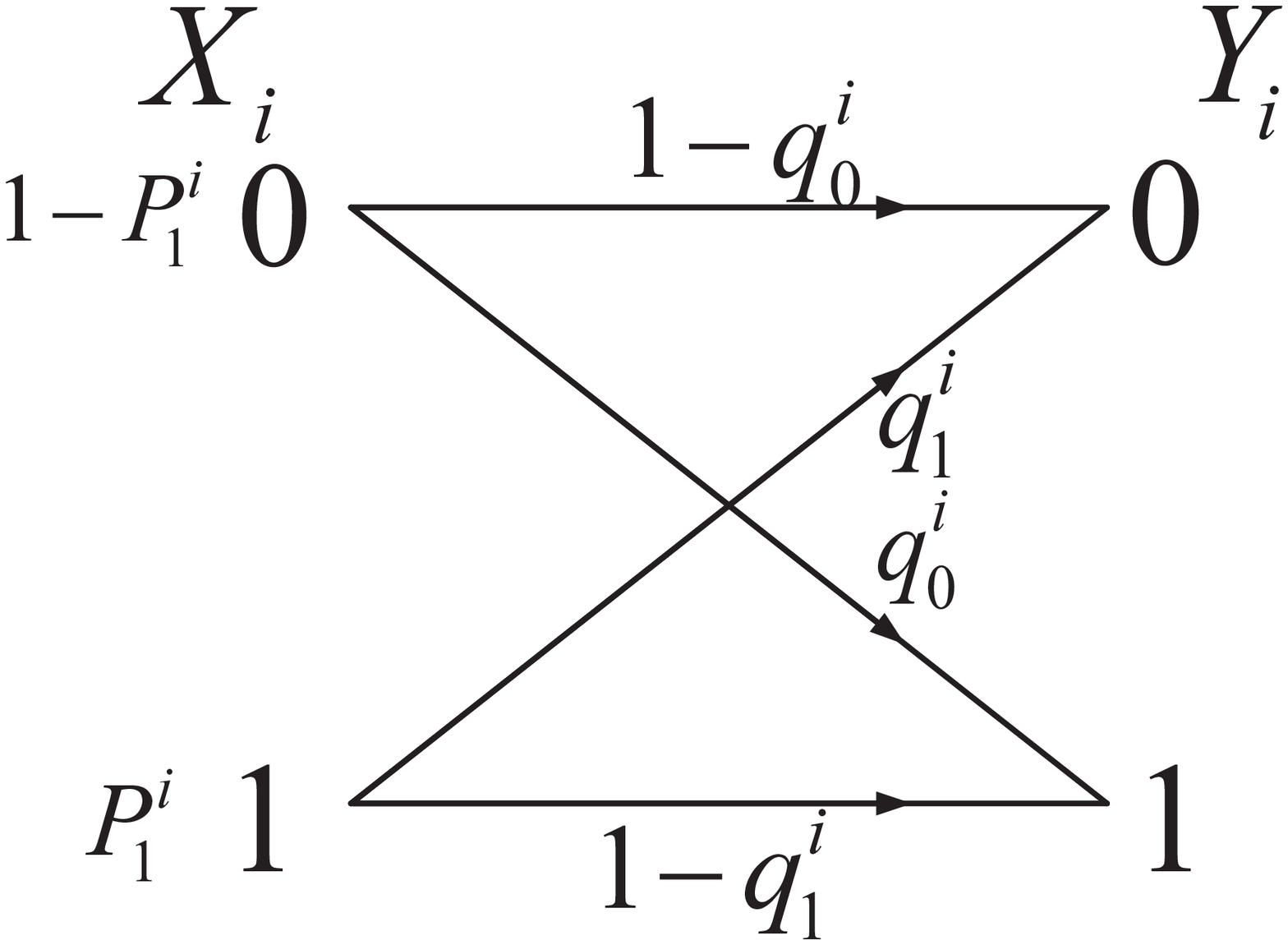} 
\label{fig:General_Model} } 
\subfigure[Model with multiple input and multiple output] 
{ \includegraphics[width=5cm]{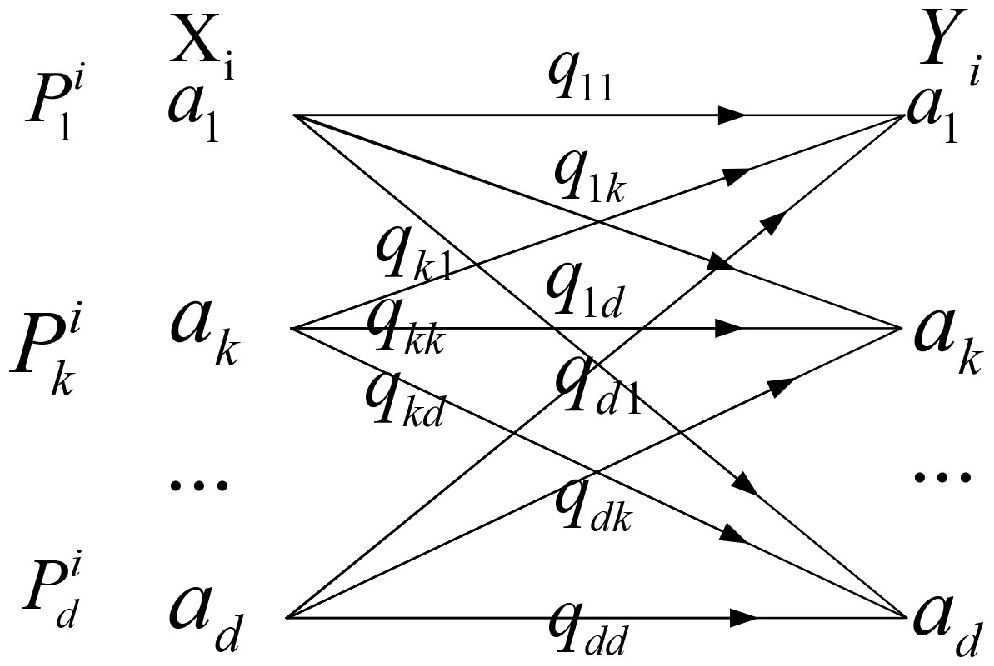} 
\label{fig:MIMO_Model} } 
\subfigure[Centralized binary information privacy model (trusted server based)] 
{ \includegraphics[width=5cm]{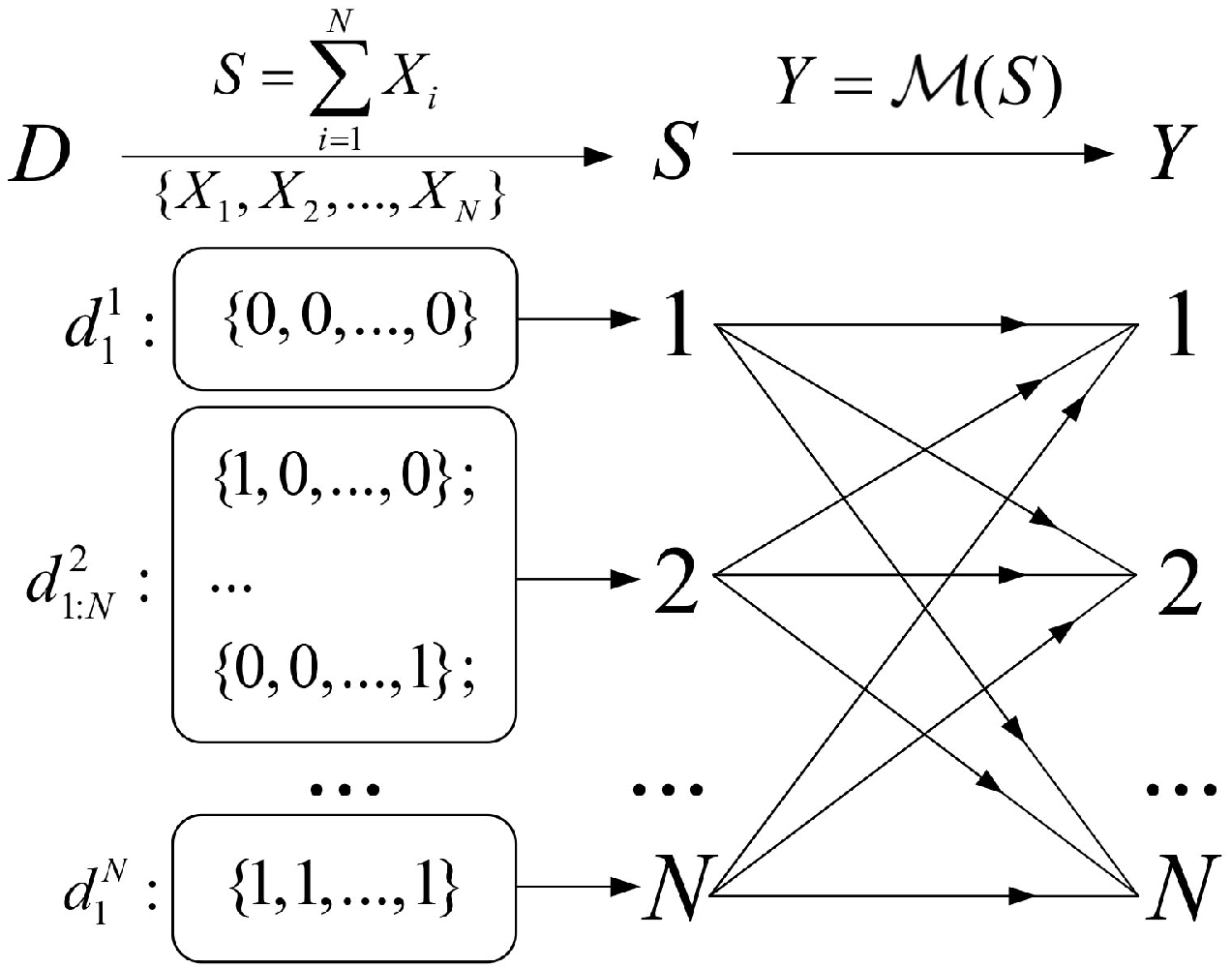} 
\label{fig:Central} } 
\caption{Different models for the perturbation mechanism considered in this paper ((a) and (b) are for the $i$-th user).} 
\label{model_compare} 
\end{figure*}

\begin{prop}\label{thm:decompose}
The global optimization problem defined in Eq. \eqref{utility} can be decomposed into $N$ local optimization problems, under independent user inputs.
\begin{equation}
\setlength{\abovedisplayskip}{3pt}
\setlength{\belowdisplayskip}{3pt}
\min_{(\mathbf{q}^i)\in{\mathcal{T}_i}}\mathcal{E}(\mathbf{q}^1,..., \mathbf{q}^N)=\sum_{i=1}^N\min_{(\mathbf{q}^i)\in{\mathcal{T}_i}}\mathcal{E}_i(\mathbf{q}^i).
\end{equation}
\end{prop}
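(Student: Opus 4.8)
The plan is to exploit two structural facts already established in the setup: the aggregation function is additive across users, $\mathbf{S}=f(\bar{X})=\sum_{i=1}^N f_i(X_i)$ (and component-wise additive in the histogram case), and the pairs $(X_i,Y_i)$ are mutually independent, since the inputs $X_i$ are independent and each $Y_i$ is produced from $X_i$ alone through the local mechanism $\mathcal{M}_i$. Because the MMSE estimator $\hat{\mathbf{S}}=E[\mathbf{S}\mid\bar{Y}]$ is already fixed by the excerpt, the entire argument reduces to showing that the global MSE splits into a sum of per-user MSEs and that the resulting optimization separates over users.

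First I would decompose the estimator itself. By linearity of conditional expectation, $\hat{\mathbf{S}}=\sum_{i=1}^N E[f_i(X_i)\mid\bar{Y}]$, and because the pair $(X_i,Y_i)$ is independent of $\{Y_j:j\neq i\}$, conditioning on the full vector $\bar{Y}$ adds nothing beyond conditioning on $Y_i$; hence $E[f_i(X_i)\mid\bar{Y}]=E[f_i(X_i)\mid Y_i]$. Writing $Z_i := f_i(X_i)-E[f_i(X_i)\mid Y_i]$, this gives $\mathbf{S}-\hat{\mathbf{S}}=\sum_{i=1}^N Z_i$, so $\mathcal{E}=E\big[(\sum_i Z_i)^2\big]=\sum_i E[Z_i^2]+\sum_{i\neq j}E[Z_iZ_j]$.

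Next I would eliminate the cross terms. Each $Z_i$ is a function of $(X_i,Y_i)$ only, so mutual independence across users gives $E[Z_iZ_j]=E[Z_i]E[Z_j]$ for $i\neq j$, and the tower property yields $E[Z_i]=E[f_i(X_i)]-E\big[E[f_i(X_i)\mid Y_i]\big]=0$. Every cross term therefore vanishes, and $\mathcal{E}=\sum_{i=1}^N E[Z_i^2]=\sum_{i=1}^N \mathcal{E}_i$, where $\mathcal{E}_i=E[(f_i(X_i)-E[f_i(X_i)\mid Y_i])^2]$ depends only on the joint law of $(X_i,Y_i)$, i.e. on $\mathbf{q}^i$ and the fixed prior $P^i$. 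Since the objective is now a sum $\sum_i\mathcal{E}_i(\mathbf{q}^i)$ whose $i$-th summand involves $\mathbf{q}^i$ alone, and the feasible region is the product $\prod_i \mathcal{T}_i$, the minimization separates coordinate-wise, giving $\min_{\mathbf{q}^1,\dots,\mathbf{q}^N}\sum_i\mathcal{E}_i(\mathbf{q}^i)=\sum_i\min_{\mathbf{q}^i\in\mathcal{T}_i}\mathcal{E}_i(\mathbf{q}^i)$, which is the claim.

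I expect the main obstacle to be the careful justification of the collapse $E[f_i(X_i)\mid\bar{Y}]=E[f_i(X_i)\mid Y_i]$ together with the vanishing of the cross terms: both hinge precisely on the independence-of-inputs assumption, so rather than treating it as obvious I would make the conditional structure $X_i\to Y_i$ and the mutual independence of the pairs $(X_i,Y_i)$ explicit before invoking them. The vector-valued histogram case, where $\mathbf{S}=\{S_1,\dots,S_d\}$, follows by applying the same argument to each coordinate $S_k$ separately.
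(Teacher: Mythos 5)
Your proposal is correct and follows essentially the same route as the paper's proof: decompose the MMSE estimator via independence (so conditioning on $\bar{Y}$ collapses to conditioning on $Y_i$), expand the MSE, kill the cross terms using independence plus the unbiasedness $E[Z_i]=0$ from the tower property, and then separate the minimization over the product feasible region $\prod_i\mathcal{T}_i$. Your write-up is in fact somewhat cleaner than the paper's, which asserts the estimator collapse and handles the separation of minima by an explicit exchange argument, but the underlying ideas are identical.
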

\begin{proof}

Observe that, the $f(\cdot)$s in the four  basic applications above can be expressed as a summation over all the $f_i(X_i)$s, as the semantic of aggregation implies a summation operation. thus the summation based MMSE estimator $\hat{\mathbf{S}}$ can be expressed as:
\begin{small}
\begin{equation}\label{estimator}
\setlength{\abovedisplayskip}{3pt}
\setlength{\belowdisplayskip}{3pt}
\begin{aligned}
&E[\mathbf{S}|\bar{Y}]=E[f(\bar{X})|\bar{Y}]
=E[\{f(X_1,X_2,...,X_N)\}|\bar{Y}]\\
\overset{(a)}{=}&E[f_1(X_1)|\bar{Y}]+E[f_2(X_2)|\bar{Y}],...,+E[f_N(X_N)|\bar{Y}]\}\\
\overset{(b)}{=}&\sum^{N}_{i=1}\{E[f_i(X_i)|Y_i]\}.
\end{aligned}
\end{equation}
\end{small}

where (a) in Eq. \eqref{estimator} is due to the independence of $X_i$s, and (b) is because $X_i$ is only correlated with $Y_i$ in the output sequence.  Thus $\mathcal{E}({\mathbf{S},\hat{\mathbf{S}}})$  can be derived as:
\begin{equation}\label{error_1}
\mathcal{E}({\mathbf{S},\hat{\mathbf{S}}})=E[(\sum^{N}_{i=1}\{f_i(X_i)-E[f_i(X_i)|Y_i]\})^2].
\end{equation}

Note that, for the application of histogram, the error forms a error vector that $(S_k,\hat{S_k})_{k=1}^d$. By the definition of second order norm. the mean square error of this case is: 
\begin{equation}\label{error_11}
\mathcal{E}(S_k,\hat{S_k})_{k=1}^d=\sum^d_{k=1}E[(\sum^{N}_{i=1}\{f^k_i(X_i)-E[f^k_i(X_i)|Y_i]\})^2],
\end{equation}
where $f^k_i(X_i)=\mathbbm{1}_{\{X_i=k\}}$.

For the first three applications, we next show that the total MSE can be decompose into the summation of local MSEs. (the proof for histogram is shown in Appendix.D.
\begin{small}

\begin{equation*}
\setlength{\abovedisplayskip}{3pt}
\setlength{\belowdisplayskip}{3pt}
\begin{aligned}
&\mathcal{E}({\mathbf{S},\hat{\mathbf{S}}})=E[(\sum^{N}_{i=1}\{f^k_i(X_i)-E[f^k_i(X_i)|Y_i]\})^2]\\
&=(\sum^{N}_{i=1}E\{f^k_i(X_i)-E[f^k_i(X_i)|Y_i]\})^2\\
&-2\sum^{N}_{j=1,l\neq{j}}E\{(f^k_j(X_i)-E[f^k_j(X_i)|Y_j])(f^k_l(X_i)-E[f^k_l(X_i)|Y_l])\}.\\
\end{aligned}
\end{equation*}
\end{small}
The cross terms equal to 0 because $\forall{j,l}\in\{1,N\}$ and $j\neq{l}$:
\begin{small}

\begin{equation}\label{eq:crossterm}
\begin{aligned}
&E\{(f^k_j(X_i)-E[f^k_j(X_i)|Y_j])(f^k_l(X_i)-E[f^k_l(X_i)|Y_l])\}]\\
=&E[(f^k_j(X_i)-E[f^k_j(X_i)|Y_j])]E[(f^k_l(X_i)-E[f^k_l(X_i)|Y_l])]\\
=&[E(f^k_j(X_i))-E\{E[f^k_j(X_i)|Y_j]\}][E(f^k_l(X_i))-E\{E[f^k_l(X_i)|Y_l]\}].
\end{aligned}
\end{equation}

\end{small}
In \eqref{eq:crossterm}, $E(f^k_j(X_i))-E\{E[f^k_j(X_i)|Y_j]\}$ and $E(f^k_l(X_i))-E\{E[f^k_l(X_i)|Y_l]\}$ are 0, thus, $\mathcal{E}({\mathbf{S},\hat{\mathbf{S}}})=\sum^N_{i=1}\mathcal{E}^k_i(\mathbf{q}^i)$ 

We next show that the overall optimal solutions (perturbation parameters) satisfy each local privacy constraints: 

Assume that for each user, the minimized $\mathcal{E}_i(\mathbf{q}^i)=e_i$ is achieved at $\mathbf{q}^{i*}\in{\mathcal{T}_i}$, then $\mathcal{E}(\mathbf{q}^{1*},...,\mathbf{q}^{N*})=\sum_{i=1}^Ne_i$. 

If for some $user_k$ who takes parameters $\mathbf{q}^{k}\in{\mathcal{T}_k}$, by assumption, we know that $\mathcal{E}_k(\mathbf{q}^k)\ge{e_k}$. Thus
\begin{equation*}
\setlength{\abovedisplayskip}{3pt}
\setlength{\belowdisplayskip}{3pt}
\sum_{i=1}^k\mathcal{E}_i(\mathbf{q}^{i*})+\mathcal{E}_k(\mathbf{q}^k)+\sum_{i=k+1}^N\mathcal{E}_i(\mathbf{q}^{i*})\ge{\sum_{i=1}^Ne_i}.
\end{equation*}
That means the minimal value of $\mathcal{E}(\mathbf{q}^1,...,\mathbf{q}^N)$, where $\mathbf{q}^i\in{\mathcal{T}_i}$, 
$\forall{i\in{[1,N]}}$ can be achieved if for each user, $\mathbf{q}^i=\mathbf{q}^{i*}$.

\end{proof}


By proposition \ref{thm:decompose}, when the perturbation parameters of each user are optimal, the overall MSE of the mechanism achieves its minimum. In addition, each user can perform its local optimization independent from each other, which well suits the local setting.

Notice that the resulted MSE by a  MMSE estimator for each user is $\mathcal{E}_i(\mathbf{q}^i)=E[Var(f_i(X_i)|Y_i)]$.
By the law of total variance:
 \begin{equation}\label{largerlaw}
 \begin{aligned}
   &E[Var(f_i(X_i)|Y_i)]\\
     =&Var[f_i(X_i)]-Var[E(f_i(X_i)|Y_i)]\\
     =&Var[f_i(X_i)]-Var[f_i(\hat{X}_i)].\\
 \end{aligned}
\end{equation}

In the context-aware setting, $Var[f_i(X_i)]$ is a constant, thus the MSE is a function of the variance of each user's estimator. 


\section{Privacy-Utility Trade-off}\label{tradeoff}
In this section, we study the privacy-utility tradeoffs by solving the optimization problems defined in Eq. (\ref{utility}). We start by a binary input binary output (BIBO) model where each user has an input/output range of $\{0,1\}$, which well suits the application of survey. In the BIBO model, we illustrate the way of perturbation by analysis on the optimal solutions. we first assume that  $f_i(X_i)=X_i$ due to settings of the first three applications, we discuss optimal solutions for each of the four applications in section \ref{model_applications}.


\subsection{Utility-Privacy Tradeoff under the Binary Input Binary Output Model}
The binary model is widely used for survey: each individual's data is first mapped to one bit, than randomly perturbed before publishing to the curator.

For a binary input/output model, each user has a binary input range, \textit{i.e,} $\mathbb{D}=\{0,1\}$ (shown in Fig. \ref{fig:General_Model}). As a direct result, the $Var(X_i)$ in \eqref{largerlaw} becomes $P_1^{i}(1-P_1^{i})$.
Denote the perturbation parameters by:
\begin{equation}
\setlength{\abovedisplayskip}{3pt}
\setlength{\belowdisplayskip}{3pt}
\begin{aligned}
&Pr(Y_i=1|X_i=0)=q^{i}_0, \\
&Pr(Y_i=0|X_i=1)=q^{i}_1.
\end{aligned}    
\end{equation}

Next we derive the concrete optimization objective and constraints. 
By Eq. \eqref{estimator}, the MMSE estimator $\hat{X}_i$ for user $i$ is derived as:
\begin{equation}
\setlength{\abovedisplayskip}{3pt}
\setlength{\belowdisplayskip}{3pt}
\begin{aligned}
    \hat{X}_i=E[X_i|Y_i]
    =&P^{i}_1[\frac{q^i_1}{\lambda^i_0}(1-Y_i)+\frac{1-q^i_1}{\lambda^i_1}Y_i],
\end{aligned}
\end{equation}
where $\lambda^i_0=Pr(Y_i=0)=(1-P_1)(1-q^i_0)+P_1q^i_1$ and $\lambda^i_1=Pr(Y_i=1)=(1-P_1)q^i_0+P_1(1-q^i_1)$.
On the other hand, 
\begin{equation}\label{eq9}
\begin{aligned}
    Var(\hat{X}_i)=&Var\{P^{i}_1[\frac{q^i_1}{\lambda^i_0}(1-Y_i)+\frac{1-q^i_1}{\lambda^i_1}Y_i]\}\\
    =&P^{i}_1\frac{(\lambda^i_0-q_1^{i})^2}{\lambda^i_0\lambda^i_1}.
\end{aligned}
\end{equation}
Take \eqref{eq9} into \eqref{largerlaw}, the each user's MSE function $ \mathcal{E}_i(q^i_0,q^i_1)$ can be derived as
\begin{equation}\label{eq11}
 \mathcal{E}_i(q_0^{i},q_1^{i})=P_1^{i}(1-P_1^{i})-\frac{[P^{i}_1(\lambda^i_0-q_1^{i})]^2}{\lambda^i_0\lambda^i_1}.\\
\end{equation}
For the privacy constraints, by Eq. \eqref{eqlip2}, when the perturbation mechanism satisfies $\epsilon$-LIP: we have:
\begin{equation}\label{eq13}
\setlength{\abovedisplayskip}{3pt}
\setlength{\belowdisplayskip}{3pt}
  e^{-\epsilon}\le{\{F^i_1,F^i_2,F^i_3,F^i_4\}}\le{e^{\epsilon}}, ~~~ \forall{i=1,2...N}.
\end{equation}
where $F^i_1,F^i_2,F^i_3,F^i_4$ are directly derived from Definition (\ref{def:LIP}): $F^i_1(q_0^{i},q_1^{i})=\frac{\lambda^i_0}{q^i_1}$, $F^i_2(q_0^{i},q_1^{i})=\frac{\lambda^i_1}{1-q^i_1}$, $F^i_3(q_0^{i},q_1^{i})=\frac{\lambda^i_0}{1-q^i_0}$, $F^i_4(q_0^{i},q_1^{i})=\frac{\lambda^i_1}{q^i_0}$.
Then, the feasible region  $\mathcal{T}_i$ is defined as those $(q_0^{i},q_1^{i})$ pairs  satisfying constraints in Eq. \eqref{eq13}.

By proposition \ref{thm:decompose}, the optimization problem of Opt-binary-LIP can be reformulated as:
\begin{equation}
\setlength{\abovedisplayskip}{3pt}
\setlength{\belowdisplayskip}{3pt}
\begin{aligned}\label{opt-2}
&\min{\mathcal{E}_i(q^i_0,q^i_1)},\\
 s. t.  &\text{   (\ref{eq13})}, \forall{i=1,2...N}.  
 \end{aligned}
\end{equation}

We have the following result:
\begin{thm}\label{the}
In Opt-binary-LIP, for the $i$-th user, the optimal $(q_0^{i}, q_1^{i})$ pairs that minimize   $\mathcal{E}_i(q_0^{i},q_1^{i})$ in problem (\ref{opt-2}) are: either $q_0^{i*}=P_1^i/e^{\epsilon}$ and $q_1^{i*}=(1-P_1^i)/e^{\epsilon}$, or $q_0^{i*}=1-P_1^i/e^{\epsilon}$ and $q_1^{i*}=1-(1-P_1^i)/e^{\epsilon}$, for any given $\epsilon\geq 0$. The resulting MSE by ($q_0^{i*},q_1^{i*}$) is:
\begin{equation}
\setlength{\abovedisplayskip}{3pt}
\setlength{\belowdisplayskip}{3pt}
\mathcal{E}^*_{bi-LIP}=\sum_{i=1}^{N}\{P_1^{i}(1-P_1^{i})(2e^{-\epsilon}-e^{-2\epsilon})\}.
\end{equation}
\end{thm}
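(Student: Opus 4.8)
The plan is to recast the constrained MSE minimization as a variance maximization and then exploit the hidden linearity of the LIP constraints. Since $Var[f_i(X_i)] = P_1^i(1-P_1^i)$ is a constant fixed by the prior, the law of total variance in \eqref{largerlaw} shows that minimizing $\mathcal{E}_i(q_0^i,q_1^i)$ is exactly equivalent to maximizing the estimator variance $Var(\hat X_i)$ of \eqref{eq9}. First I would simplify this objective: using the identity $\lambda_0^i - q_1^i = (1-P_1^i)(1-q_0^i-q_1^i)$, the variance becomes $Var(\hat X_i) = (P_1^i)^2(1-P_1^i)^2\, t^2/(\lambda_0^i\lambda_1^i)$ with $t = 1-q_0^i-q_1^i$, so the whole problem reduces to maximizing the scalar ratio $R = t^2/(\lambda_0^i\lambda_1^i)$ over the feasible region $\mathcal{T}_i$.

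The key structural observation I would use next is that each of the four LIP constraints in \eqref{eq13}, although written as ratios, is equivalent to a linear inequality in $(q_0^i,q_1^i)$ (for instance $\lambda_0^i \le e^{\epsilon}q_1^i$ rearranges to $(1-P_1^i)(1-q_0^i) \le (e^{\epsilon}-P_1^i)q_1^i$). Hence $\mathcal{T}_i$ is a convex polygon. I would then argue that the maximizer of $R$ lies on the boundary where two constraints bind simultaneously: holding $t$ fixed, $\lambda_0^i = (1-P_1^i)t + q_1^i$ is affine in $q_1^i$ and the denominator $\lambda_0^i\lambda_1^i = \lambda_0^i(1-\lambda_0^i)$ is a downward parabola in $q_1^i$, so $R$ is maximized by pushing $q_1^i$ to an extreme of its admissible range; this forces a constraint to bind, and repeating the argument in the other coordinate pins the optimum to a vertex of $\mathcal{T}_i$.

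Then I would identify the active pair. Setting the two constraints $q_1^i/\lambda_0^i = e^{-\epsilon}$ and $q_0^i/\lambda_1^i = e^{-\epsilon}$ to equality (equivalently $F_1^i = F_4^i = e^{\epsilon}$) and solving the resulting $2\times 2$ linear system gives $q_0^{i*} = P_1^i/e^{\epsilon}$ and $q_1^{i*} = (1-P_1^i)/e^{\epsilon}$, which one checks is consistent with both defining equations. The output-relabeling symmetry $Y_i \mapsto 1-Y_i$ maps this to the complementary vertex $q_0^{i*} = 1-P_1^i/e^{\epsilon}$, $q_1^{i*} = 1-(1-P_1^i)/e^{\epsilon}$ with the same objective value, explaining the two cases in the statement. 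Substituting $q_0^{i*},q_1^{i*}$ back yields the clean values $\lambda_0^i = 1-P_1^i$ and $\lambda_1^i = P_1^i$, so that $Var(\hat X_i) = P_1^i(1-P_1^i)(1-e^{-\epsilon})^2$ and therefore $\mathcal{E}_i^* = P_1^i(1-P_1^i)\bigl[1-(1-e^{-\epsilon})^2\bigr] = P_1^i(1-P_1^i)(2e^{-\epsilon}-e^{-2\epsilon})$; summing over users via Proposition~\ref{thm:decompose} delivers the stated $\mathcal{E}^*_{bi\text{-}LIP}$.

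The main obstacle I anticipate lies between the second and third steps: rigorously certifying that this particular vertex is the global maximizer rather than merely one boundary stationary point. Because $R$ is a non-concave rational function, maximizing it over the polygon requires comparing all candidate active-constraint pairs and checking, at each candidate, that the remaining two LIP inequalities actually hold, i.e.\ that the point lies in $\mathcal{T}_i$ and not just on the line of a single active constraint. Carrying out this finite case analysis --- enumerating the vertices of $\mathcal{T}_i$, evaluating $R$ at each, and confirming feasibility of the claimed optimum across the admissible range of $\epsilon$ and $P_1^i$ --- is the technical heart of the argument; once the active pair is fixed, the algebra producing the closed form is routine.
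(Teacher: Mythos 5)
Your skeleton --- recasting MSE minimization as maximization of $Var(\hat X_i)$ via the law of total variance, rewriting the objective as $R=t^2/(\lambda_0^i\lambda_1^i)$ with $t=1-q_0^i-q_1^i$, observing that every LIP constraint is linear in $(q_0^i,q_1^i)$ so that $\mathcal{T}_i$ is a convex polygon, solving $F_1^i=F_4^i=e^{\epsilon}$ to get the candidate vertex, and using the relabeling symmetry plus closing algebra --- reproduces the paper's candidate solution and its closed-form MSE. However, the step that takes you from ``the maximizer is on the boundary'' to ``the maximizer is a vertex'' fails as written: your slicing argument holds $t$ fixed and exploits concavity of $\lambda_0^i(1-\lambda_0^i)$ along that slice, but once you are on an edge of $\mathcal{T}_i$ you cannot ``repeat the argument in the other coordinate,'' because the edges are not constant-$t$ segments, so the same slicing direction is no longer available to move you along an edge. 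The paper closes this hole differently: it proves its objective $L^i$ is coordinate-wise decreasing in $q_0^i$ and $q_1^i$ on $\{q_0^i+q_1^i\le 1\}$ (after a symmetry reduction about $(0.5,0.5)$), concludes that for each $q_0^i$ the optimum lies on the lower envelope formed by the lines $F_1^i=e^{\epsilon}$ and $F_4^i=e^{\epsilon}$, and then differentiates $L^i$ restricted to each of those two lines to show both restrictions are monotone toward their intersection. If you prefer to keep your framing, the clean repair is quasi-convexity: $t^2$ is convex and nonnegative, $\lambda_0^i\lambda_1^i=\lambda_0^i(1-\lambda_0^i)$ is concave and positive, hence $R$ is quasi-convex and its maximum over a compact convex polygon is attained at an extreme point; only then is your proposed vertex enumeration legitimate.

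The deeper problem is the feasibility check you defer and describe as routine: it is precisely where the argument breaks, and it does not always pass. At the candidate vertex one has $\lambda_0^i=1-P_1^i$ and $\lambda_1^i=P_1^i$, so $F_2^i=\lambda_1^i/(1-q_1^{i*})=P_1^i/(1-(1-P_1^i)e^{-\epsilon})$, and the lower LIP constraint $e^{-\epsilon}\le F_2^i$ reduces to $P_1^ie^{2\epsilon}-e^{\epsilon}+(1-P_1^i)\ge 0$, i.e.\ $(e^{\epsilon}-1)(P_1^ie^{\epsilon}-(1-P_1^i))\ge 0$, which for $\epsilon>0$ requires $e^{\epsilon}\ge (1-P_1^i)/P_1^i$; symmetrically, $e^{-\epsilon}\le F_3^i$ requires $e^{\epsilon}\ge P_1^i/(1-P_1^i)$. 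Concretely, for $P_1^i=0.1$ and $\epsilon=1$ the claimed optimum gives $Pr(X_i=1|Y_i=1)\approx 0.67$ against a prior of $0.1$, a posterior-to-prior ratio of about $6.7>e^{\epsilon}$, so the point lies outside $\mathcal{T}_i$ and the stated MSE is not achievable there. This is not a defect of your plan alone: the paper's own proof (Appendix A, Step 3) builds the lower boundary of $\mathcal{T}_i$ from $F_1^i\le e^{\epsilon}$ and $F_4^i\le e^{\epsilon}$ only, silently dropping $e^{-\epsilon}\le F_2^i$ and $e^{-\epsilon}\le F_3^i$, and so it certifies a point that can violate the definition of $\epsilon$-LIP. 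Carrying out the case analysis you postponed therefore shows the theorem, as stated ``for any $\epsilon\ge 0$,'' only holds when $e^{\epsilon}\ge\max\{P_1^i/(1-P_1^i),\,(1-P_1^i)/P_1^i\}$; a complete proof must either add this hypothesis or re-solve the program with the lower-bound constraints active in the small-$\epsilon$, skewed-prior regime.
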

\begin{proof}
Here we outline the proof sketch  (detailed proofs are shown in Appendix. A.

(1) We show that $\mathcal{E}_i$ is monotonically increasing with $q_0^{i}$ and $q_1^{i}$ within the region of $\{q_0^{i}\ge0\}\cap\{q_1^{i}\ge0\}\cap\{q_0^{i}+q_1^{i}\le{1}\}$;

(2)
We simplify the feasible region $\mathcal{T}_i$ by showing that both $\mathcal{E}_i(q_0^{i},q_1^{i})$ and $\mathcal{T}_i$ are symmetric w.r.t. point $(0.5,0.5)$; Then we change $\mathcal{T}_i$ to the monotonic region in step. (1).

(3)
By the monotonicity, showing that the optimal solution is at the boundary of $\mathcal{T}_i$, which is a linear function of $(q_0^{i},q_1^{i})$.

(4)
The final step is to show that optimal solution is at the intersection of two linear functions in step (3) by testing the monotonicity of $\mathcal{E}_i(q_0^{i},q_1^{i})$ on the boundary.
\end{proof}
Note that, the optimal solution of each user is achieved when $F^i_1(q_0^{i},q_1^{i})=F^i_4(q_0^{i},q_1^{i})=e^{\epsilon}$ (or $F^i_2(q_0^{i},q_1^{i})=F^i_3(q_0^{i},q_1^{i})=e^{\epsilon}$). Intuitively, to increase utility, we need the probability of perturbation as small as possible (when $q_0^{i}+q_1^{i}\le{0.5}$), and the smallest perturbation probability is bounded by the privacy constraints. As a result, the optimal solution is at the point where the privacy requirement is just met. The two optimal $(q_0^{i*},q_1^{i*})$ pairs are symmetric w.r.t. $(0.5, 0.5)$. This is due to the symmetric properties of the binary input/output model. The symmetric properties can also be explained as: if we do not consider privacy, utility is maximized in two ways: the first way is each user publishes his/her data directly; the second way is swapping his/her data from 0 to 1 and 1 to 0 before publishing it. 

From $q_0^{i*}=P_1^i/e^{\epsilon}$ and $q_1^{i*}=(1-P_1^i)/e^{\epsilon}$, we can see that $q_0^{i*}$ is proportional to $P_1^i$, and $q_1^{i*}$ is   proportional to $1-P_1^i$. Intuitively,  from the perspective of one user, when his/her true input value $x_i$'s prior is small, directly revealing $x_i$ will leak too much  information about it. In such cases, to satisfy LIP constraints, a large   perturbation probability is needed to limit the posterior about $x_i$. On the contrary, if  $x_i$ happens with a large prior, directly releasing $x_i$ will reveal little additional information. Thus a small perturbation probability can be used for $x_i$ in this case.

\textbf{Minimizing mean absolute error between the raw data and the perturbed data:} In many applications, the published data by each user should contain certain information even for the parties with no prior knowledge. For example, the reported locations from smart phones: facilities or companies want to take advantages from the aggregated location data frequencies on one hand, the users are uploading locations for location based service on the other. As a result, although the locally published data are perturbed, it should remain a certain level of accuracy. As a result, We want $Y_i$ as close to $X_i$ as possible as long as it still satisfies the privacy constraints. As $P^i$ is given, we want the mean absolute error (MAE) between $X_i$ and $Y_i$ as small as possible, that is, we also want to minimize:
\begin{equation}\label{eqMAE}
E[|X_i-Y_i|], \forall{i\in\{1,2,...,N\}}.
\end{equation}

Regarding at the optimal solutions of the binary model, we have the following proposition:
\begin{prop}\label{propMAE}
The optimal solution that minimize the MSE defined in \eqref{eq11} and the MAE defined in \eqref{eqMAE} simultaneously is: $q_0^{i*}=P_1^i/e^{\epsilon}$ and $q_1^{i*}=(1-P_1^i)/e^{\epsilon}$.
\end{prop}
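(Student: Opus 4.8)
The plan is to exploit the fact that in the binary model the MAE collapses to a simple linear form in the perturbation parameters, and then to use Theorem~\ref{the} to restrict attention to its two MSE-optimal points and break the tie between them. First I would note that since $X_i,Y_i\in\{0,1\}$, the random variable $|X_i-Y_i|$ is just the indicator of the disagreement event $\{X_i\neq Y_i\}$, so that
\begin{equation*}
E[|X_i-Y_i|]=Pr(X_i\neq Y_i)=(1-P_1^{i})q_0^{i}+P_1^{i}q_1^{i}.
\end{equation*}
This linearization is the key simplification: the objective \eqref{eqMAE} becomes a linear form in $(q_0^{i},q_1^{i})$ with nonnegative coefficients, and is therefore increasing in each of $q_0^{i}$ and $q_1^{i}$.

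Next I would invoke Theorem~\ref{the}, which states that the MSE \eqref{eq11} is minimized at exactly two feasible points: $A=(P_1^{i}/e^{\epsilon},\,(1-P_1^{i})/e^{\epsilon})$ and its reflection about $(0.5,0.5)$, $B=(1-P_1^{i}/e^{\epsilon},\,1-(1-P_1^{i})/e^{\epsilon})$, both attaining the same minimal MSE. Since any joint minimizer of MSE and MAE must in particular minimize the MSE, it suffices to evaluate the linear MAE at $A$ and $B$. Direct substitution gives $E[|X_i-Y_i|]\big|_{A}=2P_1^{i}(1-P_1^{i})/e^{\epsilon}$ and $E[|X_i-Y_i|]\big|_{B}=1-2P_1^{i}(1-P_1^{i})/e^{\epsilon}$, so that
\begin{equation*}
E[|X_i-Y_i|]\big|_{B}-E[|X_i-Y_i|]\big|_{A}=1-\frac{4P_1^{i}(1-P_1^{i})}{e^{\epsilon}}.
\end{equation*}
Because $4P_1^{i}(1-P_1^{i})\le 1$ for every $P_1^{i}\in[0,1]$ and $e^{\epsilon}\ge 1$ whenever $\epsilon\ge 0$, this difference is nonnegative; hence $A$ yields the smaller MAE and is the point that minimizes MSE and MAE simultaneously.

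To upgrade this tie-break into the stronger statement that $A$ actually minimizes the MAE over the whole feasible region $\mathcal{T}_i$, I would reuse the boundary argument from the proof of Theorem~\ref{the}. After clearing the positive denominators, the constraints \eqref{eq13} cut out a polytope, and a linear objective increasing in both coordinates attains its minimum at the ``lower-left'' vertex; the constraints that lower-bound $q_1^{i}$ and $q_0^{i}$ are precisely $F^i_1=\lambda^i_0/q^i_1\le e^{\epsilon}$ and $F^i_4=\lambda^i_1/q^i_0\le e^{\epsilon}$, whose simultaneous tightness $F^i_1=F^i_4=e^{\epsilon}$ is solved by $A$ (the same active-constraint computation as in steps (3)--(4) of that proof). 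The main obstacle is exactly this last point: one must verify that $A$, and not some other vertex of $\mathcal{T}_i$, is the minimizing corner, i.e.\ that the remaining constraints $F^i_2,F^i_3$ and all the lower bounds $e^{-\epsilon}\le F^i_j$ are slack at $A$ so that they cannot force a smaller feasible value of the linear objective. Everything preceding this verification is a routine substitution.
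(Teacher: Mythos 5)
Your first two paragraphs are correct and follow essentially the same route as the paper's proof: linearize the MAE as $E[|X_i-Y_i|]=(1-P_1^i)q_0^i+P_1^iq_1^i$, restrict attention to the two MSE-optimal points $A$ and $B$ guaranteed by Theorem~\ref{the}, and break the tie in favor of $A$. In fact your explicit evaluation $\mathrm{MAE}(B)-\mathrm{MAE}(A)=1-4P_1^i(1-P_1^i)/e^{\epsilon}\ge 0$ is tighter than the paper's own argument, which only remarks that minimizing the MAE amounts to pushing $q_0^i,q_1^i$ down against the privacy constraints. This tie-break is all the proposition requires, so your proof is complete after the second paragraph.

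The final paragraph, however, asserts something false, and the verification you defer there would not go through: $A$ does \emph{not} minimize the MAE over the whole feasible region $\mathcal{T}_i$. The region $\mathcal{T}_i$ always contains the entire segment $q_0^i+q_1^i=1$, because on that segment $Pr(Y_i=1|X_i=0)=Pr(Y_i=1|X_i=1)=q_0^i$, so $Y_i$ is independent of $X_i$ and every ratio in \eqref{eq13} equals $1\in[e^{-\epsilon},e^{\epsilon}]$. Near the endpoint of that segment the MAE tends to $\min(P_1^i,1-P_1^i)$, which beats $\mathrm{MAE}(A)=2P_1^i(1-P_1^i)/e^{\epsilon}$ whenever $e^{\epsilon}<2\max(P_1^i,1-P_1^i)$. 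Concretely, with $P_1^i=0.1$ and $e^{\epsilon}=1.1$, the feasible points $(q_0^i,q_1^i)=(\delta,1-\delta)$ give $\mathrm{MAE}\to 0.1$ as $\delta\to 0$, while $\mathrm{MAE}(A)\approx 0.164$. The lower-left-vertex argument fails precisely because the lower boundary of $\mathcal{T}_i$ is not just the two lines $F^i_1=e^{\epsilon}$ and $F^i_4=e^{\epsilon}$; it also includes the independence segment, whose (nearly deterministic) endpoints have very poor MSE but small MAE when the prior is skewed. So "simultaneously" in the proposition can only mean the lexicographic reading your first two paragraphs establish — minimize MSE first, then MAE among those minimizers — and the attempted global strengthening should simply be deleted.
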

\begin{proof}

Notice that there are two set of optimal solutions under the Opt-binary-LIP model. 

The MAE defined in $\eqref{eqMAE}$ is
\begin{equation}\label{eqMMAE}
\setlength{\abovedisplayskip}{3pt}
\setlength{\belowdisplayskip}{3pt}
\begin{aligned}
&\sum^1_{x_i=0}\sum^1_{y_i=0}|x_i-y_i|Pr(X_i=x_i)Pr(Y_i=y_i|X_i=x_i)\\
=&P^i_1q^i_1+(1-P^i_1)q^i_0.
\end{aligned}
\end{equation}
As $P^i_1$ is given, minimizing \eqref{eqMMAE} is equivalent to minimize $q^i_1$ and $q^i_0$, which are also restricted by the privacy constraint, thus the optimal solution $q_0^{i*}=P_1^i/e^{\epsilon}$ and $q_1^{i*}=(1-P_1^i)/e^{\epsilon}$ can also minimize the MAE.
\end{proof}

From the result of proposition. \ref{propMAE}, we can see a trend from the optimal perturbation parameters that minimize the MAE and MSE at the same time, that is when $\epsilon$ increase, the raw data is more likely to be directly published. Notice that, with the optimal solutions,
\begin{equation*}
\setlength{\abovedisplayskip}{3pt}
\setlength{\belowdisplayskip}{3pt}
\begin{aligned}
\lambda^i_0
=&P^i_1\frac{(1-P^i_1)}{e^{\epsilon}}+(1-P^i_1)(1-\frac{P^i_1}{e^{\epsilon}})=(1-P^i_1),\\
\lambda^i_1
=&P^i_1(1-\frac{(1-P^i_1)}{e^{\epsilon}})+(1-P^i_1)\frac{P^i_1}{e^{\epsilon}}=P^i_1.\\
\end{aligned}
\end{equation*}
Thus the distribution of each $X_i$ is identical to $Y_i$, this also guarantees that the output $Y_i$ is very close to  $X_i$.

\textbf{Comparison with Optimized LDP:} We next compare our optimal LIP-based perturbation mechanism with the optimal LDP-based one. Define the Opt-binary-LDP problem to be the same with Opt-binary-LIP in \eqref{opt-2}, except having different privacy constraints of LDP: $\{R_1^i,R_2^i,R_3^i,R_4^i\}\le{e^{\epsilon}}$, where $R_1^i,R_2^i,R_3^i,R_4^i$ are  derived from Definition  (\ref{def:LDP}): $R_1^i=\frac{1-q_1^{i}}{q_0^{i}}$, $R_2^i=\frac{q_0^{i}}{1-q_1^{i}}$, $R_3^i=\frac{q_1^{i}}{1-q^i_0}$ and $R_4^i=\frac{1-q_0^{i}}{q_1^{i}}$.
\begin{prop}
In Opt-binary-LDP, for the $i$-th user, the optimal solution for $(q^i_0,q^i_1)$ is $q_0^{i*}=q_1^{i*}=\frac{1}{e^{\epsilon}+1}$, which results in a MSE $\mathcal{E}^*_{bi-LDP}$ of:
\begin{equation}
\sum_{i=1}^{N}\{P_1^{i}(1-P_1^{i})-\frac{[P_1^{i}(1-P_1^{i})(1-e^{\epsilon})]^2}{(1-P_1^{i}+P_1^{i}e^{\epsilon})(e^{\epsilon}-P_1^{i}e^{\epsilon}+P_1^{i})}\}.
\end{equation}
Given any fixed $\epsilon\ge0$, we have $\mathcal{E}^*_{bi-LDP}\ge{\mathcal{E}^*_{bi-LIP}}$.
\end{prop}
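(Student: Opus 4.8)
The statement splits into two claims: (i) that the Opt-binary-LDP minimizer is $q_0^{i*}=q_1^{i*}=\tfrac{1}{e^\epsilon+1}$ with the stated MSE, and (ii) the comparison $\mathcal{E}^*_{bi\text{-}LDP}\ge\mathcal{E}^*_{bi\text{-}LIP}$. My plan is to obtain (i) by reusing the structural facts already established for Theorem \ref{the}, since the objective $\mathcal{E}_i(q_0^i,q_1^i)$ in \eqref{eq11} is unchanged and only the feasible region differs. I would invoke that $\mathcal{E}_i$ is monotonically increasing in both $q_0^i$ and $q_1^i$ on $\{q_0^i\ge0\}\cap\{q_1^i\ge0\}\cap\{q_0^i+q_1^i\le1\}$, and that the LDP region is symmetric about $(0.5,0.5)$: the map $(q_0^i,q_1^i)\mapsto(1-q_0^i,1-q_1^i)$ sends $R_1^i$ to $R_3^i$ and $R_2^i$ to $R_4^i$, so it permutes the four constraints among themselves. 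Together with the symmetry of $\mathcal{E}_i$ this lets me restrict attention, without loss of generality, to the monotone triangle $q_0^i+q_1^i\le1$.

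Within that triangle I would linearize the four LDP constraints: $R_1^i\le e^\epsilon$ and $R_4^i\le e^\epsilon$ become the lower bounds $q_0^ie^\epsilon+q_1^i\ge1$ and $q_0^i+q_1^ie^\epsilon\ge1$, whereas $R_2^i\le e^\epsilon$ and $R_3^i\le e^\epsilon$ give upper bounds that stay slack near the origin. Because $\mathcal{E}_i$ increases in each coordinate, the minimizer sits at the corner where the two lower-bound lines meet; solving $q_0^ie^\epsilon+q_1^i=1$ and $q_0^i+q_1^ie^\epsilon=1$ forces $q_0^i=q_1^i$ by the symmetry of the pair of equations, yielding $q_0^{i*}=q_1^{i*}=\tfrac{1}{e^\epsilon+1}$, which indeed satisfies $q_0^i+q_1^i=\tfrac{2}{e^\epsilon+1}\le1$ for every $\epsilon\ge0$. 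Substituting this pair into $\lambda_0^i,\lambda_1^i$ and then into \eqref{eq11} — using $\lambda_0^i-q_1^i=(1-P_1^i)(e^\epsilon-1)/(e^\epsilon+1)$, so that the $(e^\epsilon+1)$ factors cancel in numerator and denominator — reproduces the claimed closed form, after checking that $1-P_1^i+P_1^ie^\epsilon$ and $e^\epsilon-P_1^ie^\epsilon+P_1^i$ are exactly $(e^\epsilon+1)\lambda_1^i$ and $(e^\epsilon+1)\lambda_0^i$.

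For (ii), both $\mathcal{E}^*_{bi\text{-}LDP}$ and $\mathcal{E}^*_{bi\text{-}LIP}$ are already written as sums over the same index of per-user terms depending only on $P_1^i$ and $\epsilon$ (consistent with the decomposition in Proposition \ref{thm:decompose}), so it suffices to prove the inequality termwise. Writing $t=e^\epsilon\ge1$, $P=P_1^i$, $a=1-P$, and $V=Pa$, the per-user claim reduces after dividing by $V$ to $\tfrac{(t-1)^2}{t^2}\ge\tfrac{V(t-1)^2}{(a+Pt)(at+P)}$; for $\epsilon=0$ both sides vanish, and for $\epsilon>0$ cancelling $(t-1)^2$ and cross-multiplying collapses everything to $(a^2+P^2)t+aP\ge0$, which is manifestly true, with strictness whenever $P\in(0,1)$. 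Hence the sum inequality holds, with equality exactly at $\epsilon=0$. I expect the only mildly delicate step to be the optimization in part (i): one must justify the reduction to the monotone region and confirm that the two lower-bound LDP constraints (not the upper ones) are active at the optimum. Everything downstream — the MSE substitution and the final comparison — collapses to elementary algebra, so I anticipate no substantive obstacle there.
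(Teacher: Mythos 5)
Your proposal is correct, and it splits into a part that matches the paper and a part that genuinely improves on it. For part (i) you take essentially the paper's route: the paper's own proof just says the argument is ``similar to that of Opt-binary-LIP,'' i.e., rerun the machinery of Theorem \ref{the} on the fixed LDP feasible region, which is exactly your plan, and your linearization of the four constraints and the algebraic substitution into \eqref{eq11} are all accurate. The one inference you should not leave implicit is ``increasing in each coordinate $\Rightarrow$ minimizer at the corner'': along either lower-bound line one coordinate rises while the other falls, so coordinate-wise monotonicity alone is silent there. The fix is the analogue of Step 4 of Appendix A: on the line $q_1^i=1-e^{\epsilon}q_0^i$ one has $\lambda_1^i=cq_0^i$ and $\lambda_0^i=1-cq_0^i$ with $c=1-P_1^i+P_1^ie^{\epsilon}$, so $L^i=(1-P_1^i)^2(e^{\epsilon}-1)^2q_0^i/[c(1-cq_0^i)]$ is increasing in $q_0^i$, hence $\mathcal{E}_i$ decreases toward the corner, and symmetrically on the other line; you flagged exactly this as the delicate step, and it does go through. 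For part (ii) your route differs from the paper's and is cleaner: the paper differentiates $\Delta\mathcal{E}^*=\mathcal{E}^*_{bi-LDP}-\mathcal{E}^*_{bi-LIP}$ with respect to $e^{\epsilon}$ and $P_1^i$ and argues from the stationary point at $P_1^i=0.5$, a sketch that as written yields only local information; your termwise reduction (divide by $V=P_1^i(1-P_1^i)$, use $1-(2t-1)/t^2=(t-1)^2/t^2$, cancel $(t-1)^2$, and cross-multiply down to $(a^2+P^2)t+aP\ge0$) is complete, elementary, works user-by-user so it requires no common prior, and additionally delivers strict inequality whenever $P_1^i\in(0,1)$ and $\epsilon>0$, which the paper's argument never quite establishes.
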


\begin{proof}
The proof of the optimal solution is similar to that of Opt-binary-LIP, the only difference is the feasible region in Opt-binary-LIP is flexible for different priors, while the feasible region in Opt-binary-LDP is fixed. The optimal solution also coincides with the one used in \cite{Tianhao}.

For the second part, it's easy to check by taking derivative over $e^{\epsilon}$ that $\mathcal{E}^*_{bi-LIP}\le{\mathcal{E}^*_{bi-LDP}}$, where $\mathcal{E}^*_{bi-LIP}={\mathcal{E}^*_{bi-LDP}}$ if $\epsilon=0$ or $\epsilon=\infty$. 
This means LIP provides increased utility given any $\epsilon$. We then taking derivative of $P^i_1$ over $\Delta\mathcal{E}^*=\mathcal{E}^*_{bi-LDP}-\mathcal{E}^*_{bi-LIP}$, result shows that $\frac{\partial\Delta\mathcal{E}^*}{\partial{P^i_1}}=0$ when $P^i_1=0.5$. As $\Delta\mathcal{E}^*_{(P^i_1=0.5)}\ge{0}$, $\mathcal{E}^*_{bi-LDP}\ge\mathcal{E}^*_{bi-LIP}$ for any $P^i_1$. Result also shows that as $|P^i_1-0.5|$ grows, $\Delta\mathcal{E}^*$ also increases. 
\end{proof}
The above result shows that,  Opt-binary-LIP always achieves a better utility than Opt-binary-LDP under any $\epsilon$, this is because explicitly considering prior in the privacy definition allows a larger search space for optimal parameters than that in Opt-binary-LDP. We can also learn this result from the aspect of information theory, when $|P^i_1-0.5|$ is small, then the $H(X_i)$ is large, which means $X_i$ has the largest amount of uncertainty, thus knowing the prior of $X_i$ does not help in perturbation; However, when $H(X_i)$ is small, prior knowledge of $X_i$ provides a clearer indication of the real value, thus the context-aware model achieves enhanced advantages.

The binary case is an illustrative example that shows how we derive the optimal solutions, as the problem is equivalent to finding the maximum in a monotonically increasing region, thus the methods with goal to find the station points in the feasible region such as Lagrangian multiplier and KKT conditions are not applicable. On the other hand, thanks to the theorem that the maximum value of the monotocity function can be found at the boundary, the problem can be first regarded as developing the monotocity region and then finding the boundary values of the parameters. 
Moreover, with the goal to minimize the MAE, we can further restrict the optimal solutions to make the model more practical.

 When each user has a same prior and when the perturbation channel is symmetric, these can be considered as special cases of the BIBO model, and details are referred to our conference version in \cite{Jian1805:Context}.
\subsection{Utility-Privacy Tradeoff under MIMO Model}

More generally, we study the case in which $\mathbb{D}$ has a large domain, \textit{i.e.}, $\mathbb{D}=\{a_1, a_2,...,a_d\}$ with prior distribution $Pr(X_i=a_m)=P^i_m$. This case well suits the application of summation and weighted summation and we denote this model as MIMO model. The optimization of the MIMO model is obviously obscure, as there are $d^2$ parameters that need to optimize for a single user.  (shown in \ref{fig:MIMO_Model})

Assume that a random-response perturbation mechanism which satisfies $\epsilon$-LIP takes input $X_i$ and output $Y_i$ ($Y_i$ has the same range, we discuss the optimal output range after deriving the main results) with probability $Pr(Y_i=a_k|X_i=a_m)=q^i_{mk}$. Denote $Pr(Y_i=a_k)=\lambda^i_k$  . Thus, by \eqref{eqlip2}:
\begin{equation}\label{multi-privacy}
    e^{-\epsilon}\le\frac{Pr(X_i=a_m)}{Pr(X_i=a_m|Y_i=a_k)}\le{e^{\epsilon}}, \forall{m,k\in{1,2,...,d}} 
\end{equation}
By Bayes rules, \eqref{multi-privacy} can be transferred to:
\begin{equation}
    e^{-\epsilon}\le\frac{\lambda^i_k}{q^i_{mk}}\le{e^{\epsilon}}, \forall{m,k\in{1,2,...,d}} 
\end{equation}
In \eqref{largerlaw}, $Var[X_i]=\sum^d_{m=1}a_m^2P^i_m-(\sum^d_{m=1}a_mP^i_m)^2$, and
\begin{equation}\label{Multiconstraint}
\begin{aligned}
   \hat{X_i}=&E[X_i|Y_i]
   =\sum^d_{m=1}a_jPr(X_i=a_m|Y_i)\\
   =&\sum^d_{m=1}\sum^d_{k=1}a_mPr(X_i=a_m|Y_i=a_k)\mathbbm{1}^i_{k},
\end{aligned}
\end{equation}
where $\mathbbm{1}^i_{k}$ is the indicator function of $\mathbbm{1}^i_{\{Y_i=a_k\}}$, which is 1 if $Y_i=a_k$ and 0 if not, thus $\mathbbm{1}^i_{k}$ can be regarded as a binary random variable which has the distribution of: $Pr(\mathbbm{1}^i_{k}=1)=\lambda^i_{k}$ and $Pr(\mathbbm{1}^i_{k}=0)=1-\lambda^i_{k}$, as a result: $Var[\mathbbm{1}^i_k]=\lambda^i_k(1-\lambda^i_k)$ and $Cov[\mathbbm{1}^i_k,\mathbbm{1}^i_l]=-\lambda^i_k\lambda^i_l$.

\begin{small}

\begin{equation}
\setlength{\abovedisplayskip}{3pt}
\setlength{\belowdisplayskip}{3pt}
\begin{aligned}
   &Var[\hat{X_i}]
   =\sum^d_{m=1}\sum^d_{n=1}\sum^d_{k=1}a_ma_nq^i_{mk}q^i_{nk}Var[\mathbbm{1}^i_{k}]\\
 +&\sum^d_{m=1}\sum^d_{n=1}\sum^d_{k=1}\sum^d_{l=1;l\neq{k}}a_ma_nq^i_{mk}q^i_{nl}Cov[\mathbbm{1}^i_{k},\mathbbm{1}^i_{l}]\\
   =&\sum^d_{m=1}\sum^d_{n=1}\sum^d_{k=1}a_ma_nP^i_mP^i_nq^i_{mk}(\frac{q^i_{nk}(1-\lambda^i_{k})}{\lambda^i_k}-\sum^d_{l=1;l\neq{k}}q^i_{nl})\\
   =&\sum^d_{m=1}\sum^d_{n=1}\sum^d_{k=1}a_ma_nP^i_mP^i_nq^i_{mk}(\frac{q^i_{nk}}{\lambda^i_k}-1).
\end{aligned}
\end{equation}
\end{small}

Denote $\mathbf{q^i}$ as the set of $\{q^i_{11},q^i_{12},...,q^i_{1d},...,q^i_{dd}\}$, By proposition. \ref{thm:decompose}. The optimization problem of the MIMO model, Opt-mino-LIP is formulated as:
\begin{equation}\label{multi-opt}
\setlength{\abovedisplayskip}{3pt}
\setlength{\belowdisplayskip}{3pt}
\begin{aligned}
            &\min{\mathcal{E}_i(\mathbf{q^i})}, \text{\eqref{eqMAE}}\\
 s. t.  &\text{   (\ref{multi-privacy})},  \forall{i=1,2...N}.  
 \end{aligned}
\end{equation}
\begin{thm}\label{thm3}
For the constraint optimization problem defined in \eqref{multi-opt}. The optimal solutions is: $q^{i*}_{mm}=1-(1-P^i_m)/e^{\epsilon}$, $q^{i*}_{mk}=P^i_k/e^{\epsilon}$, for all $m,k\in\{1,2,...,d\}$, $m\neq{k}$.
\end{thm}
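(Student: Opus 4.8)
The plan is to recast the constrained program \eqref{multi-opt} as a variance-maximization problem and then exploit the linearity of both the feasible region and the auxiliary MAE objective, mirroring the strategy used for the binary Theorem~\ref{the} and Proposition~\ref{propMAE}. First I would invoke the law of total variance \eqref{largerlaw}, which gives $\mathcal{E}_i(\mathbf{q}^i)=Var[X_i]-Var[\hat{X}_i]$ with $Var[X_i]$ a constant; hence minimizing the MSE is equivalent to maximizing $Var[\hat{X}_i]$. Collapsing the triple sum already derived for $Var[\hat{X}_i]$, I would write it in the compact form $Var[\hat{X}_i]=\sum_k c_k^2/\lambda^i_k - E[X_i]^2$, where $c_k=\sum_m a_m P^i_m q^i_{mk}$ and $\lambda^i_k=\sum_m P^i_m q^i_{mk}$. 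I would also record that, by Bayes' rule, the privacy constraint \eqref{multi-privacy} is equivalent to $e^{-\epsilon}\lambda^i_k\le q^i_{mk}\le e^{\epsilon}\lambda^i_k$; since $\lambda^i_k$ is itself linear in $\mathbf{q}^i$, these inequalities together with $q^i_{mk}\ge 0$ and the row-stochasticity $\sum_k q^i_{mk}=1$ carve out a polytope $\mathcal{T}_i$.

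The heart of the argument is to show that the optimum is attained when every off-diagonal entry $q^i_{mk}$ (with $m\neq k$) sits at its smallest feasible value $e^{-\epsilon}\lambda^i_k$. I would argue this on two fronts. For the MAE in \eqref{eqMAE}, note that $E[|X_i-Y_i|]=\sum_{m\neq k}P^i_m|a_m-a_k|\,q^i_{mk}$ is linear with nonnegative coefficients in the off-diagonal variables, so it decreases monotonically as each off-diagonal is driven down to its bound. For the MSE, I would establish the analogous statement: increasing any off-diagonal away from $e^{-\epsilon}\lambda^i_k$ (compensated through the row-sum) cannot increase $Var[\hat{X}_i]$. Equivalently, since each term $c_k^2/\lambda^i_k$ is a quadratic-over-linear function and hence convex, $Var[\hat{X}_i]$ is convex on $\mathcal{T}_i$ and its maximum is attained at a vertex; the remaining task is then to identify the off-diagonal-minimizing point as that maximizing vertex.

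Having fixed $q^i_{mk}=e^{-\epsilon}\lambda^i_k$ for $m\neq k$, the row-stochasticity constraint forces $q^i_{mm}=1-\sum_{k\neq m}q^i_{mk}=1-e^{-\epsilon}(1-\lambda^i_m)$. Substituting these expressions back into $\lambda^i_k=\sum_m P^i_m q^i_{mk}$ produces a linear fixed-point system; a short computation (using $\sum_k\lambda^i_k=1$) collapses it to $\lambda^i_k(1-e^{-\epsilon})=P^i_k(1-e^{-\epsilon})$, whose unique solution for $\epsilon>0$ is $\lambda^i_k=P^i_k$. Back-substitution then yields $q^{i*}_{mk}=P^i_k/e^{\epsilon}$ and $q^{i*}_{mm}=1-(1-P^i_m)/e^{\epsilon}$, exactly the claimed solution, and simultaneously minimizes both objectives in \eqref{multi-opt} just as in Proposition~\ref{propMAE}.

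Finally I would verify the candidate lies in $\mathcal{T}_i$: the off-diagonals meet the LIP upper bound with equality by construction, nonnegativity and the row sums hold automatically, and one checks the diagonal obeys $q^{i*}_{mm}\le e^{\epsilon}\lambda^i_m=e^{\epsilon}P^i_m$, which reduces to the condition $P^i_m\ge 1/(e^{\epsilon}+1)$. I expect \textbf{the main obstacle} to be the MSE monotonicity step: unlike the two-dimensional binary case, the $d^2$ variables here are coupled through the shared column marginals $\lambda^i_k$, so neither coordinate-wise monotonicity nor the identification of the correct vertex is immediate. I would attack it either by a careful exchange argument handling one output column at a time, or by observing that the claimed solution is value-independent and therefore must maximize $Var[\hat{X}_i]$ simultaneously for every assignment of the values $\{a_m\}$, a property strong enough to single out the maximizing vertex uniquely.
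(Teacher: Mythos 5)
Your high-level strategy tracks the paper's own proof quite closely: reduce MSE minimization to maximizing $Var[\hat{X}_i]$ via the law of total variance \eqref{largerlaw}, argue that the optimum sits on the boundary of the privacy constraints, solve the resulting fixed-point system to obtain $\lambda^i_k=P^i_k$ and hence the closed form $q^{i*}_{mk}=P^i_k/e^{\epsilon}$, $q^{i*}_{mm}=1-(1-P^i_m)/e^{\epsilon}$, and use the MAE objective \eqref{eqMAE} to single out the identity-type solution among the $d!$ permuted ones. Your fixed-point computation is correct, and your one genuinely different ingredient—writing $Var[\hat{X}_i]=\sum_k c_k^2/\lambda^i_k-E[X_i]^2$ as a sum of quadratic-over-linear terms composed with affine maps, hence convex, so that the maximum over the polytope $\mathcal{T}_i$ is attained at a vertex—is a cleaner structural observation than the paper's coordinate-wise derivative analysis. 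However, the decisive step is missing, and you acknowledge as much: knowing the maximizer is \emph{some} vertex does not identify \emph{which} vertex, and there are many candidates (e.g., vertices where some off-diagonals sit at their upper bounds $e^{\epsilon}\lambda^i_k$). Your assertion that ``increasing any off-diagonal away from $e^{-\epsilon}\lambda^i_k$ cannot increase $Var[\hat{X}_i]$'' is precisely the content of the theorem, and the two attacks you sketch (a column-wise exchange argument, or a value-independence argument) are proposals, not proofs. The paper closes this hole with two lemmas: its Lemma 2 computes $\partial Var[\hat{X}_i]/\partial q^i_{lk}$, shows the only stationary point is the global minimum $q^i_{lk}=\lambda^i_k$ with monotone behavior on either side (so unconstrained maximizers are permutation-type matrices), and its Lemma 3 shows that, starting from such a maximizer and tightening $\epsilon$, the shrinking off-diagonal parameters hit their lower privacy bounds $\lambda^i_k/e^{\epsilon}$ strictly before the growing diagonals hit their upper bounds, so the binding constraints are exactly the ones you posited. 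Without an argument of this kind, your proof establishes feasibility and a closed form for one particular boundary point, not its optimality.

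A secondary issue, which your own verification surfaces but does not resolve: the diagonal entry of the claimed solution satisfies the upper LIP constraint $q^{i*}_{mm}\le e^{\epsilon}\lambda^i_m$ only when $P^i_m\ge 1/(e^{\epsilon}+1)$, a condition that fails for sufficiently skewed priors or small $\epsilon$. Notably, the paper's own feasibility check in Appendix B hides this restriction behind a sign error (its numerator should read $P^i_k(e^{\epsilon}-e^{-\epsilon})-(1-e^{-\epsilon})$, which is not unconditionally nonnegative), so your computation is actually the more honest one. A complete proof must either impose this condition on the priors or characterize the optimal solution in the regime where it fails; simply noting the condition, as you do, leaves the theorem unproven exactly where it is most delicate.
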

Brief steps of proof (detailed proof is shown in Appendix. B.
\begin{itemize}
    \item Regardless of the privacy constraints, we first show that the maximum value of $Var[\hat{X_i}]$ can be reached when the $q^i_{mm}=1$, for $m\in\{1,2,...,d\}$. also, the minimum value of $Var[\hat{X_i}]$ can be reached when $q^i_{mk}=\lambda_k$, for any $m,k\in\{1,2,...,d\}$. 
    \item We then take derivative over a randomly chosen parameter: $q^i_{mk}$ and derive the monotonicity region.
    \item In each monotonicity region we show that the parameters that decrease will first reach the boundary. Thus the optimal solutions lies at the lower boundaries. As we can permute the sequence of $Y_i$, we can find $d!$ optimal solutions.
    \item Considering minimizing the MAE defined in \eqref{eqMAE}, there is only one set of optimal solution remaining.
\end{itemize}

From Theorem \ref{thm3}, we can see that the optimal solutions of the Opt-mimo-LIP also lies at the boundary of the privacy constraints. As $\epsilon$ increases, $\forall{m\in\{1,2,...,d\}}$, all the $q^i_{mm}$s are increasing while all the $q^i_{mk}$s are decreasing ($m\neq{k}$), and the value of $q^i_{mk}$s are proportional to $P^i_k$s. thus an input value that has a larger prior should also be output with a large probability.

For example, consider a model in which  $\mathbb{D}=\{1,2,3\}$ the prior of one of the users is given as: $P_1=0.1$, $P_2=0.2$, $P_3=0.7$. By Theorem \ref{thm3} $q^*_{11}=1-0.9/\epsilon$, $q^*_{22}=1-0.8/\epsilon$, $q^*_{33}=1-0.3/\epsilon$, $q^*_{21}=q^*_{31}=0.1/\epsilon$, $q^*_{12}=q^*_{32}=0.2/\epsilon$, $q^*_{13}=q^*_{23}=0.7/\epsilon$. When $\epsilon$ increases, each value is more likely to be directly published. As 3 has a larger prior than 1 and 2, if 1 and 2 are not directly published, they are more likely to be published as a 3.

Notice the main goal of \eqref{multi-opt} is to minimize the MSE rather than the MAE, so, we still deploy the MMSE estimator. Regardless of the goal to minimize MAE, there are $d!$ optimal solutions for a single user in the problem of Opt-mimo-LIP, as we can randomly permute the order of $Y_i$. When minimizing the MAE, there is a unique solution remaining, as : $\lambda^i_k=P^i_k(1-(1-P^i_k))/e^{\epsilon}+\sum^d_{j\neq{k}}P^i_jP^i_k/e^{\epsilon}=P^i_k$.

Similarly, we formulate the optimization problem for the LDP, which is the same goal defined in \eqref{multi-opt} while subject to the constraints of LDP: $\forall m,n,k\in\{1,2,...,d\}$, $m\neq{n}$, there is $\frac{q^i_{mk}}{q^i_{nk}}\le{e^{\epsilon}}$. We derive the optimal solutions for Opt-mimo-LDP as $q^{i*}_{mm}=\frac{e^{\epsilon}}{e^{\epsilon}+d-1}$, $q^{i*}_{mk}=\frac{1}{e^{\epsilon}+d-1}$, $\forall{m,k\in{1,2,...,d}}$, $m\neq{k}$. Direct comparison between the two mechanisms involves large amount of calculation, thus we compare them in simulation . 

\textbf{Optimal Output Range:} In terms of the optimal range for each $Y_i$, previous models are considering $Y_i$ has the same domain with $X_i$, now we consider that the outputs take values from a different domain size. Denote the domain of the input as $\mathbb{D}_x=\{a_1,a_2,...,a_d\}$; the domain of  the output as $\mathbb{D}_y=\{a_1,a_2,...,a_f\}$. When $d$ is fixed, we want to find the optimal value of $f$.

\begin{thm}\label{thm:samerange}
In the Opt-mimo-LIP problem, when the input range of $d$ is fixed, the optimal output range $f^*$ is $f^*=d$.
\end{thm}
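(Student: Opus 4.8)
The plan is to recast the problem so that the role of the output alphabet size becomes transparent. Since $Var[X_i]$ is fixed and, by the law of total variance in \eqref{largerlaw}, the MMSE equals $Var[X_i]-Var[\hat X_i]$, minimizing the MSE is equivalent to maximizing $Var[\hat X_i]=Var[E[X_i\mid Y_i]]$, i.e.\ to \emph{minimizing} the expected posterior variance $\Phi=\sum_{k}\lambda^i_k\, v(\pi_k)$, where $\pi_k=\big(Pr(X_i=a_m\mid Y_i=a_k)\big)_{m=1}^d$ is the posterior induced by output $a_k$ and $v(\pi)=\sum_m a_m^2\pi(m)-\big(\sum_m a_m\pi(m)\big)^2$. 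The LIP constraint \eqref{multi-privacy} is exactly $\pi_k(m)\in[P^i_m e^{-\epsilon},P^i_m e^{\epsilon}]$, so every admissible posterior lives in the polytope $\Pi$ obtained by intersecting this box with the probability simplex $\Delta_{d-1}$; in addition the weights must reconstruct the prior, $\sum_k\lambda^i_k\pi_k=P^i$ and $\sum_k\lambda^i_k=1$. In this language, choosing an $f$-ary output mechanism is the same as writing the prior $P^i$ as a convex combination of at most $f$ points of $\Pi$, and the objective $\Phi$ is a \emph{linear} functional of the resulting discrete mixture.

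The key structural fact I would exploit is that $v$ is strictly concave on $\Delta_{d-1}$ (the linear term contributes nothing and $-\big(\sum_m a_m\pi(m)\big)^2$ is strictly concave once the $a_m$ are distinct). Minimizing the linear functional $\Phi$ over all mixtures with fixed barycenter $P^i$ therefore pushes every atom toward an extreme point of $\Pi$, and its optimal value is precisely the lower convex envelope of $v$ over $\Pi$ evaluated at $P^i$. I would then invoke Carath\'eodory's theorem inside the $(d-1)$-dimensional simplex: any barycenter, in particular $P^i$, is a convex combination of at most $d$ vertices of $\Pi$. Hence this global minimum of $\Phi$ is already attainable with $f=d$ output symbols, and no choice of $f>d$ can do better, since the envelope value is a lower bound on $\Phi$ that is independent of $f$. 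To close this direction I would verify that the explicit optimizer of Theorem~\ref{thm3} indeed realizes $d$ distinct vertices of $\Pi$ (each posterior having $d-1$ coordinates pinned at the lower box face $P^i_m e^{-\epsilon}$), so the bound is met with equality and $f=d$ is sufficient.

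For the reverse inequality, i.e.\ that $f<d$ is strictly suboptimal, I would again use strict concavity. Because $v$ is strictly concave, its lower convex envelope over $\Pi$ is piecewise affine, affine exactly on the cells of a triangulation of $\Pi$ whose full-dimensional cells are $(d-1)$-simplices with $d$ vertices. For a prior $P^i$ lying in the relative interior of such a cell, the only mixtures attaining the envelope value are the unique, strictly positive barycentric representations by the $d$ vertices of that cell; any mixture supported on fewer than $d$ points of $\Pi$ lies on a lower-dimensional flat and yields a value of $\Phi$ strictly above the envelope. Consequently an $f<d$ mechanism incurs strictly larger MSE, which together with the previous paragraph pins the optimum at $f^{*}=d$.

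The main obstacle is the $f<d$ direction: turning ``fewer than $d$ atoms cannot reproduce $P^i$ at the envelope value'' into a clean, rigorous statement requires care about the triangulation, about $P^i$ being interior to a full-dimensional cell, and about the strict gap coming from strict concavity. Two degeneracies must also be handled separately: priors for which some $P^i_m$ forces $\Pi$ onto a lower-dimensional face (so the effective dimension, and hence the critical alphabet size, drops), and the feasibility range of $\epsilon$ under which the boundary posteriors used by Theorem~\ref{thm3} actually satisfy the upper LIP bound $\pi_k(m)\le P^i_m e^{\epsilon}$. By contrast, the $f\ge d$ direction is comparatively routine once the concavity and extreme-point viewpoint is in place.
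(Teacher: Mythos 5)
Your reformulation is correct, and for half of the theorem it works: writing a mechanism as a mixture of posteriors $\pi_k\in\Pi$ with weights $\lambda^i_k$ and barycenter $P^i$, noting that the MSE equals $\sum_k\lambda^i_k v(\pi_k)$, and invoking Carath\'eodory in the $(d-1)$-dimensional set $\Pi$ does show that the optimum over all output alphabets is attainable with at most $d$ atoms, so $f>d$ can never help. This is a genuinely different route from the paper, which disposes of $f>d$ by viewing it as the $f=d$ problem with padded zero priors (so that the optimal parameters $q^{i*}_{kj}=P^i_j/e^{\epsilon}=0$ for the extra outputs), and your version of this direction is arguably cleaner and more general.

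The gap is in the $f<d$ direction, and it is not merely the technical care you flag at the end: the premise of that argument is false. The function $v(\pi)=\sum_m a_m^2\pi(m)-\bigl(\sum_m a_m\pi(m)\bigr)^2$ is \emph{not} strictly concave on $\Delta_{d-1}$ for $d\ge 3$. Its Hessian is $-2aa^{T}$ with $a=(a_1,\dots,a_d)$, a rank-one matrix, so $v$ is affine along every direction $u$ in the tangent space of the simplex with $a^{T}u=0$, and such nonzero directions exist whenever $d\ge 3$; distinctness of the $a_m$ only guarantees $a$ is not parallel to $\mathbf{1}$, not positive curvature in all directions. Consequently the lower convex envelope of $v$ over $\Pi$ is not ``affine exactly on $(d-1)$-simplices of a triangulation''; it is affine on positive-dimensional flats orthogonal to $a$, its minimizing mixtures need not be unique, and your claim that any mixture with fewer than $d$ atoms sits strictly above the envelope does not follow. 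Equivalently: since $\sum_k\lambda^i_k v(\pi_k)=\sum_m a_m^2P^i_m-\sum_k\lambda^i_k\bigl(a^{T}\pi_k\bigr)^2$, the objective sees each posterior only through its scalar mean $a^{T}\pi_k$, so curvature alone cannot force $d$ distinct atoms; a correct strictness proof must exploit the coupling between those scalar means and the full vector constraint $\sum_k\lambda^i_k\pi_k=P^i$ inside the box $\Pi$ --- which is in essence what the paper's (informal) argument does when it observes that with $f<d$ outputs at least one input symbol has no output whose posterior concentrates on it, so some mass is irretrievably merged. Repairing your proof means replacing strict concavity with an argument of that type (e.g., showing that $d-1$ points of $\Pi$ whose means attain the envelope value cannot have barycenter $P^i$), or restricting to $d=2$, where $v$ genuinely is strictly concave.
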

Detailed proof is shown in Appendix. C.

From Theorem 4, we know the optimal output range is the same with the input. This property is helpful for model setting. 

\subsection{Applications to Histogram Estimation}\label{model_applications}

Obviously, for the application of survey, the Opt-binary-LIP is suitable, and Opt-mimo-LIP can deal with the problem of direct summation and weighted summation. We now consider the  problem of estimating a histogram.

The difference between the application of the histogram and other applications is even though for each user, before uploading data, the perturbation mechanism still takes one input data and outputs one data, the estimator of a histogram is a random vector rather than a random variable, as the value of each data stands for a category.

Derive the optimal estimator of the histogram vector, $\hat{\mathbb{S}}=\{\hat{S_1},\hat{S_2},...,\hat{S_d}\}=\{E[S_1|\bar{Y}],E[S_2|\bar{Y}],...,E[S_d|\bar{Y}]\}$, with each entry:
\begin{equation}
\setlength{\abovedisplayskip}{3pt}
\setlength{\belowdisplayskip}{3pt}
\begin{aligned}
    E[S_k|\bar{Y}]=E[\sum^N_{i=1}\mathbbm{1}_{\{X_i=a_k\}}|\bar{Y}]
    =\sum^N_{i=1}E[\mathbbm{1}_{\{X_i=a_k\}}|Y_i].
\end{aligned}
\end{equation}
Thus the mean square error of the estimation is
\begin{small}
\begin{equation}\label{Error}
\begin{aligned}
    &E[\sum^d_{k=1}(S_k-\hat{S}_k)^2]\\
    =&\sum^d_{k=1}E[(\sum^N_{i=1}\{\mathbbm{1}_{\{X_i=a_k\}}-E[\mathbbm{1}_{\{X_i=a_k\}}|Y_i]\})^2]\\
    \overset{a}{=}&\sum^d_{k=1}\{\sum^N_{i=1}E[(\{\mathbbm{1}_{\{X_i=a_k\}}-E[\mathbbm{1}_{\{X_i=a_k\}}|Y_i]\})^2]\\
    =&\sum^N_{i=1}\sum^d_{k=1}\{Var(\mathbbm{1}_{\{X_i=a_k\}})-Var(E[\mathbbm{1}_{\{X_i=a_k\}}|Y_i])\}.\\
\end{aligned}
\end{equation}
\end{small}
The (a) of \eqref{Error} is because each user's local error is independent, and the expectation of the unbiased estimator is identical to that of the estimated value.

Thus the problem of estimating a histogram can be formulated as:
\begin{equation}\label{eq:histogram}
\setlength{\abovedisplayskip}{3pt}
\setlength{\belowdisplayskip}{3pt}
\begin{aligned}
&\min\eqref{Error}, \eqref{eqMAE}\\
 s. t.  &\text{   (\ref{Multiconstraint})},  \forall{i=1,2...N}.  
 \end{aligned}
\end{equation}

\begin{thm}\label{mimo-bianry}
The optimal perturbation parameters of problem defined in \eqref{eq:histogram} is $q^{i*}_{mm}=1-(1-P^i_m)/e^{\epsilon}$, $q^{i*}_{mk}=P^i_k/e^{\epsilon}$, for all $m,k\in\{1,2,...,d\}$, $m\neq{k}$.
\end{thm}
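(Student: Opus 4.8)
The plan is to follow the same route as Theorem~\ref{thm3}, after recasting the histogram error as an equivalent per-user posterior-concentration problem. First I would invoke Proposition~\ref{thm:decompose} to reduce the vector MSE in \eqref{Error} to $N$ independent per-user subproblems, so that it suffices to fix a user $i$ and minimize $\sum_{m=1}^d [Var(\mathbbm{1}_{\{X_i=a_m\}}) - Var(E[\mathbbm{1}_{\{X_i=a_m\}}|Y_i])]$ subject to the LIP constraints $e^{-\epsilon}\le \lambda^i_k/q^i_{mk}\le e^{\epsilon}$. Since each $Var(\mathbbm{1}_{\{X_i=a_m\}})=P^i_m(1-P^i_m)$ is a constant once the prior is fixed, minimizing the error is equivalent to maximizing $\sum_{m=1}^d Var(E[\mathbbm{1}_{\{X_i=a_m\}}|Y_i])$.

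Next I would simplify this objective. Writing $Z_m=E[\mathbbm{1}_{\{X_i=a_m\}}|Y_i]=Pr(X_i=a_m|Y_i)$ for the posterior and using $E[Z_m]=P^i_m$ together with the Bayes identity $Pr(X_i=a_m|Y_i=a_k)=P^i_m q^i_{mk}/\lambda^i_k$, the objective collapses to an expected posterior collision probability: maximize $\sum_{k=1}^d \frac{1}{\lambda^i_k}\sum_{m=1}^d (P^i_m q^i_{mk})^2$, up to the additive constant $-\sum_m (P^i_m)^2$. This is the same ``quadratic-in-$q^i_{mk}$ over $\lambda^i_k$'' structure, over the same feasible polytope $\mathcal{T}_i$, that drives the proof of Theorem~\ref{thm3}; only the numerator differs, being a diagonal sum of squares $\sum_m (P^i_m q^i_{mk})^2$ rather than the weighted square $(\sum_m a_m P^i_m q^i_{mk})^2$. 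I would therefore reuse the identical machinery: establish monotonicity of the objective in each $q^i_{mk}$ over the region cut out by $\sum_k q^i_{mk}=1$ and $q^i_{mk}\ge 0$, conclude that the maximum is attained on the LIP boundary where the off-diagonal entries meet their lower bound $q^i_{mk}=\lambda^i_k e^{-\epsilon}$, and then identify the resulting corner. Imposing $q^i_{mk}=\lambda^i_k/e^{\epsilon}$ for $m\neq k$ forces $\lambda^i_k=P^i_k$ (distribution preservation), which yields $q^{i*}_{mk}=P^i_k/e^{\epsilon}$ and, by row-stochasticity, $q^{i*}_{mm}=1-(1-P^i_m)/e^{\epsilon}$. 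As in Theorem~\ref{thm3}, this leaves a family of $d!$ optimizers obtained by permuting the output labels, and the auxiliary objective \eqref{eqMAE} selects the unique identity-aligned solution.

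The main obstacle is the monotonicity/boundary step, because the factor $\lambda^i_k=\sum_m P^i_m q^i_{mk}$ couples every entry of the perturbation matrix, so the partial derivative of $\sum_k \frac{1}{\lambda^i_k}\sum_m (P^i_m q^i_{mk})^2$ with respect to a single $q^i_{mk}$ carries a cross term of the form $-(\cdot)/(\lambda^i_k)^2$ that must be signed carefully. Unlike the summation objective, this numerator has no cross terms across the input index $m$, so the derivative computation is genuinely different and must be redone rather than quoted from Theorem~\ref{thm3}. I expect the cleanest way around this is an exchange argument: show directly that any feasible matrix with an off-diagonal entry strictly above its LIP lower bound can be perturbed, moving probability mass toward the diagonal, so as to strictly increase the expected posterior collision probability while remaining feasible, which pins the optimum to the claimed corner without a full Hessian analysis. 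The remaining verification that this corner is feasible, namely that $q^{i*}_{mm}\le \lambda^i_k e^{\epsilon}$ and that all entries lie in $[0,1]$ for every $\epsilon\ge 0$, is routine.
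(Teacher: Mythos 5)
Your proposal reaches the correct optimizer, and its first half retraces the paper exactly: the paper's Appendix D likewise reduces the histogram MSE in \eqref{Error} to the per-user quantity $\sum_k (P^i_k)^2\sum_j q^i_{kj}\bigl(\tfrac{q^i_{kj}}{\lambda^i_j}-1\bigr)$, which is precisely your ``expected posterior collision probability'' objective written in a different form. Where you genuinely diverge is at the optimization step. The paper does not redo any monotonicity or boundary analysis: it observes that this expression coincides with the MIMO objective \eqref{parameter_mimo} restricted to its diagonal terms $m=n=k$ with $a_j=1$, notes that the two problems share the same LIP constraints, and concludes that the optimizer must be that of Theorem~\ref{thm3}. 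You instead insist that the derivative analysis be redone from scratch (via monotonicity or an exchange argument), on the grounds that the histogram objective has no cross terms over the input index and so Theorem~\ref{thm3}'s computation cannot simply be quoted. This is a genuinely different, and in fact more careful, route: the paper's ``special case'' inference is loose as written, since one cannot obtain the histogram objective from the MIMO one by any single choice of the $a_j$ (setting all $a_j=1$ makes the MIMO objective vanish identically because of its cross terms), so ``same diagonal structure plus same constraints'' does not by itself imply ``same optimizer.'' Your plan fills exactly that hole; its cost is that the exchange/monotonicity step, which you correctly flag as the main obstacle, is the substantive work and remains unexecuted in your sketch. Your endgame is sound: imposing $q^i_{mk}=\lambda^i_k/e^{\epsilon}$ for $m\neq k$ together with row-stochasticity does force $\lambda^i_k=P^i_k$ for $\epsilon>0$, yielding the claimed parameters, and the collapse of the $d!$ permuted optima to a unique one under the MAE criterion \eqref{eqMAE} matches the paper's treatment of the MIMO case.
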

\begin{proof}
Detailed proof is shown in Appendix. D.
\end{proof}

The application of histogram can be viewed as a special case of the Opt-mimo-LIP because in the Opt-mimo-LIP, the optimal solutions are to make the estimation of each user 's data as close to the real value as possible, while the request of the histogram is to classify each user's data as accurate as possible, when the perturbation parameters are making each of the estimation as accurate as possible, it also handle the problem of identifying its category.

For implementation of the histogram, users are still publishing data according to the Opt-mimo-LIP perturbation mechanism. After receiving the published data by each user, the curator then estimates $\{E[S_1|Y_i],E[S_2|Y_i],...,E[S_d|Y_i]\}=\{Pr(X_i=a_1|Y_i), Pr(X_i=a_2|Y_i),..., Pr(X_i=a_d|Y_i)\}$ according to the optimal perturbation parameters.

\subsection{Centralized Information Privacy (Binary case)}\label{sec:centralized}

For comparison purposes, we now derive the formula for centralized information privacy under the binary perturbation mechanism, we first illustrate the idea by the global prior model:
Consider N users are in a data aggregating model who directly submit their data to a trusted central aggregator, who publishes a perturbed version $Y$ of $S=f(X)$ using a CIP mechanism $\mathcal{M}(S)$. See Fig. \ref{fig:Central}. Assume that each user's input value is taken as a random variable $X_i$ which takes value from a range of $\mathbb{D}_i=\{0,1\}$, $\forall{i\in\{1,2,...,N\}}$. It is known that the prior distribution of $X_i$ is: $P^i_1=Pr(X=1)$, $\forall{i\in\{1,2,...,N\}}$. Denote $\mathcal{D}$ as the database holding all users' data, where $\mathcal{D}_i=X_i$. Thus $Pr(\mathcal{D}_i=1)=P_1$. 
Assume that the data the curator wants to aggregate is $S=\sum^{N}_{i=1}\mathcal{D}_i$ and the trusted third party perturbs $S$ by using a randomized response mechanism $\mathcal{M}$.
Notice that $\mathcal{D}_1,\mathcal{D}_2,...,\mathcal{D}_N$ is a N times Bernoulli trial, Thus:
\begin{equation}
\setlength{\abovedisplayskip}{3pt}
\setlength{\belowdisplayskip}{3pt}
    Pr(S=s)=C^s_NP^s_1(1-P_1)^{N-s},
\end{equation}
\begin{defi}\label{defi_central}
A mechanism $\mathcal{M}$ which takes input values of $\mathcal{D}=\{X_1,X_2,...,X_N\}$ and output $Y$ satisfies centralized information privacy if $\forall{i,y\{1,2,...,N\}}$ :
\begin{equation}\label{def:Central}
\setlength{\abovedisplayskip}{3pt}
\setlength{\belowdisplayskip}{3pt}
    e^{-\epsilon}\le{\frac{Pr(\mathcal{D}_i)}{Pr(\mathcal{D}_i|Y=y)}}\le{e^{\epsilon}},
\end{equation}
\end{defi}
Intuitively, the difference between the centralized and the localized IP is the input of the mechanism: the input of the centralized IP is a database, while the input of the LIP is each user's local data. In terms of the implementation, in the settings of CIP, all the raw data is hold by the trusted server, and the mechanism is run at the server rather than at each user.  The configuration of the centralized IP model is shown in Fig.\ref{fig:Central}.

Notice that the right side of the configuration of the centralized IP is multiple input multiple output LIP for a single user. Based on this observation, we have the following theorem.
\begin{thm}\label{thm:central}
If the mechanism $\mathcal{M}$ which takes the inputs of $S$ and output $Y$ satisfies $\epsilon$-LIP, then it also satisfies the centralized model defined in Definition (\ref{defi_central}).
\end{thm}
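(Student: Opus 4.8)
The plan is to push the LIP guarantee on the aggregate $S$ down to each individual record $\mathcal{D}_i$ by exploiting the fact that the output $Y$ is a randomized function of $S$ alone, so that $\mathcal{D}\to S\to Y$ forms a Markov chain. The mechanics mirror the second part of Theorem 1: bound each conditional output probability, then average using the law of total probability.

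First I would restate both the hypothesis and the goal in the ``likelihood'' form. Applying the Bayes-rule equivalence of Eq. \eqref{eq2} to the input variable $S$, the assumption that $\mathcal{M}$ is $\epsilon$-LIP in $S$ reads $e^{-\epsilon}\le Pr(Y=y\mid S=s)/Pr(Y=y)\le e^{\epsilon}$ for every $s,y$. The same manipulation applied to $\mathcal{D}_i$ shows that the centralized constraint \eqref{def:Central} is equivalent to $e^{-\epsilon}\le Pr(Y=y\mid\mathcal{D}_i=d)/Pr(Y=y)\le e^{\epsilon}$. Hence it suffices to sandwich $Pr(Y=y\mid\mathcal{D}_i=d)$ between $e^{-\epsilon}Pr(Y=y)$ and $e^{\epsilon}Pr(Y=y)$.

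The key step is the Markov property: since the randomized response acts on $S=\sum_{j=1}^N\mathcal{D}_j$ only, we have $Pr(Y=y\mid S=s,\mathcal{D}_i=d)=Pr(Y=y\mid S=s)$. Conditioning on $\mathcal{D}_i=d$ and marginalizing over the intermediate sum gives
\[
Pr(Y=y\mid\mathcal{D}_i=d)=\sum_{s}Pr(Y=y\mid S=s)\,Pr(S=s\mid\mathcal{D}_i=d).
\]
Inserting the per-$s$ bounds from the previous step and using $\sum_s Pr(S=s\mid\mathcal{D}_i=d)=1$, the right-hand side is a convex combination of quantities all lying in the interval $[e^{-\epsilon}Pr(Y=y),\,e^{\epsilon}Pr(Y=y)]$, so it lies in the same interval. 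Converting back through Bayes recovers \eqref{def:Central}, which finishes the argument.

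I expect the only genuine difficulty to be justifying the Markov factorization $Pr(Y\mid S,\mathcal{D}_i)=Pr(Y\mid S)$ cleanly, i.e.\ arguing that once $S$ is fixed the perturbation carries no residual dependence on the individual entry $\mathcal{D}_i$. After that, the averaging argument and the closure of the sandwich bound under convex combinations are routine, exactly as in the proof of Theorem 1.
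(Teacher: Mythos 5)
Your proof is correct and follows essentially the same route as the paper's: both marginalize over $S$ using the conditional independence of $\mathcal{D}_i$ and $Y$ given $S$, apply the per-$s$ LIP sandwich bound, and conclude by normalization. The only difference is cosmetic — the paper works directly with posteriors, decomposing $Pr(\mathcal{D}_i=j\mid Y=y)=\sum_{s}Pr(\mathcal{D}_i=j\mid S=s)\,Pr(S=s\mid Y=y)$ and bounding the weights $Pr(S=s\mid Y=y)\le e^{\epsilon}Pr(S=s)$, whereas you first flip both hypothesis and goal into likelihood form via Bayes and bound the values $Pr(Y=y\mid S=s)$; it is the same averaging argument viewed from the other side of Bayes' rule.
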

\begin{proof}
If the mechanism $\mathcal{M}$ $\epsilon$-LIP:
\begin{equation*}   e^{\epsilon}Pr(S=s)\le{Pr(S=s|Y=y)}\le{e^{\epsilon}Pr(S=s)},
\end{equation*}
then, for any $j\in{0,1}$
\begin{small}
\begin{equation}
\setlength{\abovedisplayskip}{3pt}
\setlength{\belowdisplayskip}{3pt}
\begin{aligned}
    Pr(\mathcal{D}_i=j|Y=y)=&\sum^{N}_{s=1}Pr(\mathcal{D}_i=j|S=s)Pr(S=s|Y=y)\\
               \le&\sum^{N}_{s=1}Pr(\mathcal{D}_i=j|S=s)Pr(S=s)e^{\epsilon}\\
               =&e^{\epsilon}Pr(\mathcal{D}_i=j).\\
\end{aligned}
\end{equation}
\end{small}
Do this again for the other side we can get:
\begin{equation}
\setlength{\abovedisplayskip}{3pt}
\setlength{\belowdisplayskip}{3pt}
    Pr(\mathcal{D}_i=j|Y=y)\ge{e^{-\epsilon}Pr(\mathcal{D}_i=j)}.
\end{equation}
Thus, if the requirement of  localized information privacy is met, then centralized information privacy is also satisfied. It's easy to check it's not vise versa as $D_i$, $Y_i$ and $S$ does not form a Markov chain.
\end{proof}
From Theorem \ref{thm:central}, we know that localized information privacy infers centralized information privacy.
In terms of the utility, we still want to minimize the mean square error: $E[(S-\hat{S})^2]$ using the MMSE estimator: $\hat{S}=E[S|Y]$. Thus the optimization problem with goal to minimize the MSE while subject to the privacy constraints satisfying Definition (5) involves large amount of calculation and is not the goal of this paper, however, we can still find the optimal utility-privacy trade off without specifying each parameter.

\begin{prop}\label{centralized}
When there is a global prior for each user, the utility-privacy trade off of the centralized Information privacy is found at the solutions of the optimization problem that
\begin{equation}
\begin{aligned}
&\mathcal{E}_{CIP}=E[(S-\hat{S})^2],\\
s.t.\quad e^{-\epsilon}&NP_1\le{\hat{S}}\le{N-e^{-\epsilon}(N-P_1N)}.
\end{aligned}
\end{equation}
\end{prop}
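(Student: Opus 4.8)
The plan is to convert the centralized privacy requirement of Definition~\ref{defi_central} into a range constraint on the MMSE estimator $\hat{S}=E[S|Y]$, and then argue that the mechanism-design problem collapses to the one-dimensional constrained minimization in the statement. The two facts I would lean on are that, for the MMSE estimator, $E[(S-\hat{S})^2]=E[\mathrm{Var}(S|Y)]$ depends on the mechanism only through the conditional law of $S$ given $Y$, and that $S=\sum_{i=1}^N\mathcal{D}_i$ is a sum of $N$ Bernoulli$(P_1)$ variables, so $E[S]=NP_1$. This mean is the pivot that produces both ends of the interval.

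For the range constraint, I would write the posterior mean coordinate-wise,
\begin{equation*}
\hat{S}(y)=E[S|Y=y]=\sum_{i=1}^N Pr(\mathcal{D}_i=1|Y=y),
\end{equation*}
and apply the lower side of Definition~\ref{defi_central}, namely $Pr(\mathcal{D}_i=j|Y=y)\ge e^{-\epsilon}Pr(\mathcal{D}_i=j)$. Taking $j=1$ gives $Pr(\mathcal{D}_i=1|Y=y)\ge e^{-\epsilon}P_1$ for every $i$, hence $\hat{S}(y)\ge e^{-\epsilon}NP_1$. Taking $j=0$ gives $Pr(\mathcal{D}_i=0|Y=y)\ge e^{-\epsilon}(1-P_1)$, i.e.\ $Pr(\mathcal{D}_i=1|Y=y)\le 1-e^{-\epsilon}(1-P_1)$, so summing yields $\hat{S}(y)\le N-e^{-\epsilon}(N-P_1N)$. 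Since these hold for every output $y$, every CIP-feasible mechanism confines $\hat{S}$ to the stated interval; combined with $E[(S-\hat{S})^2]=E[\mathrm{Var}(S|Y)]$, this shows the displayed program is a relaxation whose feasible set contains the image of every CIP mechanism, so it lower-bounds the CIP minimization.

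The step I expect to be the main obstacle is upgrading this necessary constraint to an exact characterization, i.e.\ showing the optimum of the reduced program is attained by a genuinely CIP-feasible mechanism. I would need to exhibit a perturbation of the binomial $S$ that reveals as much of $S$ as possible while pinning the induced posterior mean to the boundary values $e^{-\epsilon}NP_1$ and $N-e^{-\epsilon}(N-P_1N)$ (a saturating/clipping mechanism), and verify it still meets Definition~\ref{defi_central}, relying on Theorem~\ref{thm:central} to certify feasibility through $\epsilon$-LIP on $S$. Care is also required because I used only the $e^{-\epsilon}$ side of the constraint: I would have to check that the complementary $e^{\epsilon}$ side does not shrink the admissible range of $\hat{S}$ in the operating regime — a short convexity-in-$\epsilon$ comparison (for fixed $P_1$, again anchored at $E[S]=NP_1$) of the four candidate endpoints — so that the two displayed bounds are the binding ones. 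Granting achievability and this endpoint check, minimizing $E[(S-\hat{S})^2]$ subject to the interval constraint returns exactly the claimed utility–privacy tradeoff.
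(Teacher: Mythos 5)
Your necessity direction is sound and lands on the same interval as the paper, but your proof stops short of what the proposition asserts: by summing the one-sided per-user bounds $Pr(\mathcal{D}_i=1|Y=y)\ge e^{-\epsilon}P_1$ you only show that every CIP-feasible mechanism induces an $\hat{S}$ inside the stated interval, i.e.\ that the displayed program is a relaxation, and you explicitly defer the converse to an unconstructed ``saturating'' mechanism. That converse is the genuine gap, and the paper closes it with a single identity your argument never uses: since the $\mathcal{D}_i$ are i.i.d.\ Bernoulli$(P_1)$ and the mechanism acts only on $S$ (so $\mathcal{D}_i\to S\to Y$ is a Markov chain), exchangeability gives $Pr(\mathcal{D}_i=1|S=s)=s/N$, hence
\begin{equation*}
Pr(\mathcal{D}_i=1|Y=y)=\sum_{s}\frac{s}{N}\,Pr(S=s|Y=y)=\frac{\hat{S}(y)}{N},
\qquad
Pr(\mathcal{D}_i=0|Y=y)=\frac{N-\hat{S}(y)}{N}.
\end{equation*}
With this, the per-user privacy ratios are exactly $NP_1/\hat{S}(y)$ and $N(1-P_1)/\bigl(N-\hat{S}(y)\bigr)$, identical for every $i$, so the CIP constraint is not merely implied by, but \emph{equivalent to}, a pointwise constraint on $\hat{S}$. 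No separate achievability construction is needed: a mechanism on $S$ is CIP-feasible exactly when its posterior mean obeys the stated bounds, so optimizing over CIP mechanisms is the same problem as the displayed one. Your proposed repair is also a detour even if executed: certifying feasibility through Theorem~\ref{thm:central} means imposing $\epsilon$-LIP on $S$, which is a strictly stronger condition than CIP, so the mechanisms you could certify form a strict subset and need not reach the relaxation's optimum.

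One point in your favor: your worry about the $e^{\epsilon}$ side is legitimate. The exact reformulation yields four bounds, $e^{-\epsilon}NP_1\le\hat{S}\le e^{\epsilon}NP_1$ and $N-e^{\epsilon}N(1-P_1)\le\hat{S}\le N-e^{-\epsilon}N(1-P_1)$, and for $P_1<1/2$ the bound $\hat{S}\le e^{\epsilon}NP_1$ is strictly tighter than the one kept in the proposition whenever $1<e^{\epsilon}<(1-P_1)/P_1$. The paper's proof writes down the two-sided constraint and then silently drops these two bounds, so your planned endpoint check is a place where you are more careful than the paper --- but it does not substitute for the missing equivalence, which is the idea your proposal lacks.
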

\begin{proof}
We know by previous results:
\begin{equation}
    E[(S-\hat{S})^2]=Var(S)-Var(\hat{S}),
\end{equation}
where $Var(S)=NP_1(1-P_1)$. We now derive  the $Var(\hat{S})$.
Base on the privacy constraints, we further derive the privacy metric:
\begin{equation}
\begin{aligned}
    \frac{Pr(\mathcal{D}_i=1)}{Pr(\mathcal{D}_i=1|Y)}
    =&\frac{P_1}{\sum^N_{s=1}Pr(\mathcal{D}_i=1|S=s)Pr(S=s|Y)}\\
    =&\frac{NP_1}{E[S|Y]},\\
\end{aligned}
\end{equation}
On the other hand:
\begin{equation}
\begin{aligned}
    \frac{Pr(\mathcal{D}_i=0)}{Pr(\mathcal{D}_i=0|Y)}
    =&\frac{1-P_1}{\sum^N_{s=1}Pr(\mathcal{D}_i=0|S=s)Pr(S=s|Y)}\\
    =&\frac{N-NP_1}{N-E[S|Y]}.\\
\end{aligned}
\end{equation}
So, we have $e^{-\epsilon}\le\{\frac{NP_1}{E[S|Y]},\frac{NP_1}{N-E[S|Y]}\}\le{e^{\epsilon}}$, which further infers $e^{-\epsilon}NP_1\le{\hat{S}}\le{N-e^{-\epsilon}(N-P_1N)}$. 
\end{proof}

To deriving the optimal solutions of each parameters is not a main objective of this paper, however, we want to show the comparison between the centralized IP and the localized IP.



\section{Evaluation}\label{sec:sim}
In this section, we use simulation to validate our analytical results. In the first part, we validate our analysis using synthetic data and via Monte-Carlo simulation. We first show  the advantages of the context-aware privacy notion (based on LIP) versus the context-free notion (based on LDP),   by comparing their utility-privacy tradeoffs. Then we compare different models of LIP and LDP under different privacy budgets and prior distributions in both binary case and MIMO case (with both global priors and local priors). At last, we validate the analytical results through Monte-Carlo simulation, and compare utility-privacy trade offs provided by the localized models and the centralized model In section \ref{sec:centralized}. In the second part, we evaluate on two real-world datasets:  Karosak (click-streams of websites) and Gowalla (location check-in data).


We evaluate utility by  square root average MSE $(\sqrt{\mathcal{E}/N})$. This is because MSE depends on the number of users, for comparison purposes,  we first average the MSE to normalize the influence of user count. In addition, since MSE is square error, we take the square root in order to make it comparable with absolute error, which roughly means how much deviation percentage is the result from the exact value. Note that, doing so does not change the  optimal solutions in any of our optimization problems. In addition, since LIP achieves a relaxed privacy level than LDP, it  is difficult to  compare their utilities under the same privacy level. Thus,   we will compare their optimal utilities under any given privacy budget $\epsilon$.
All the simulations are done in Matlab (R2016a) on a Dell desktop (OptiPlex 7040; CPU: Intel (R) Core (TM) i5-6500 @ 3.2GHz; RAM 8.0 GB; OS: windows 64bit).

\subsection{Simulation Results on Synthetic Data}

\subsubsection{Benefit of   Context-awareness}
First we would like to compare the utility-privacy tradeoffs between $\epsilon$-LIP and $\epsilon$-LDP using the binary model, and the goal is to observe the advantage of  our proposed context-aware notion versus context-free notion of LDP. Intuitively, the utility gain of the former can be attributed to two factors: 1) using the prior in the MMSE estimator, which improves the accuracy compared with estimators that do not use prior knowledge; 2) the privacy guarantee of LIP is relaxed compared with LDP, by explicitly modeling prior in the definition. As a result, less perturbation is needed to satisfy the same privacy budget $\epsilon$. The latter factor is already proven in proposition 3. To decouple the influence of the above two factors, we compare  the utility-privacy tradeoff of Opt-LIP with two other schemes: Opt-LDP (defined earlier), as well as 
context-free  LDP adopted in previous work \cite{Tianhao}. The prior-unaware estimator used in \cite{Tianhao} is denoted as $\hat{C}$, which treats $X_i$ as instances rather than variables:
\begin{equation}
\setlength{\abovedisplayskip}{3pt}
\setlength{\belowdisplayskip}{3pt}
\hat{C}=\frac{\sum^N_{i=1}Y_i-N{p_i}}{1-2p_i},
\end{equation}
where $p_i=\frac{e^{\epsilon}}{e^{\epsilon}+1}$ as we discussed in section \ref{tradeoff}.
This is an unbiased estimator under the binary symmetric  channel (BSC) model. The MSE function of this estimator is:
\begin{equation}
\setlength{\abovedisplayskip}{3pt}
\setlength{\belowdisplayskip}{3pt}
\begin{aligned}
E[(S-\hat{C})^2]=Var[\hat{C}]
=\frac{Np_i(1-p_i)}{(1-2p_i)^2}
\end{aligned}
\end{equation}
\begin{figure}[t] 
\centering 
\includegraphics[width=9cm]{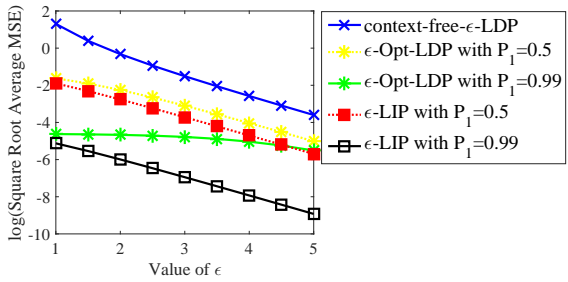} 
\caption{The utility-privacy tradeoff comparison between prior-aware and prior-free models (log scale for y-axis)} 
\label{fig:prior_compare} 
\end{figure}

The comparison is shown in Fig. \ref{fig:prior_compare}. The privacy budget $\epsilon$ changes from $1$ to $5$  with a step of $0.5$.  For now we assume that $N$ users share the same global prior.  We can see that, the square root average MSE of ``$\epsilon$-Opt-LDP" is always smaller than that of  ``context-free LDP" under any given $\epsilon$. When $P_1=0.5$ (prior is uniformly distributed), the distance between these two models is smaller; when the prior is more skewed, advantage of the former is even enhanced. This validates the benefit of  the prior-aware estimator. On the other hand, by comparing the curves of ``$\epsilon$-Opt-LDP" and ``$\epsilon$-Opt-LIP" (using the same MMSE estimator),  the error of Opt-LIP is always smaller than that of Opt-LDP, and the gap between the two models increases when $P_1=0.99$. This result confirms that our relaxed prior-aware privacy notion leads to increased utility.
 When $P_1=0.99$, users' inputs are highly certain, merely considering prior in the estimator can already result  in     accurate aggregation. Thus advantage of  the context-aware 
$\epsilon$-LIP  is even enhanced.

\begin{figure}[ht]
\centering 
\subfigure[Utility-privacy tradeoff comparison with 500 users with each $|\mathbb{D}|=5$] 
{ \includegraphics[width=4cm]{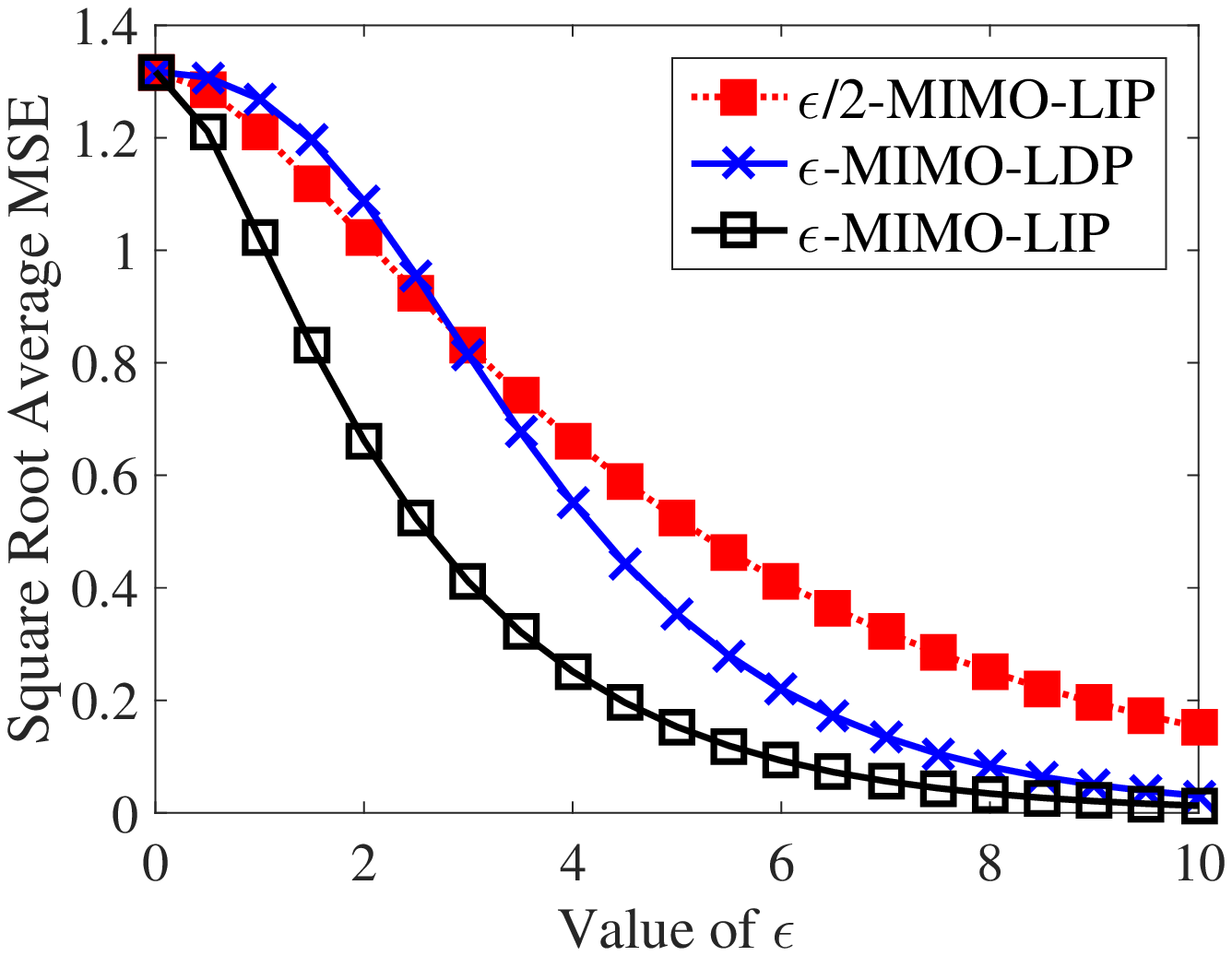} 
\label{dfixed} } 
\subfigure[Utility-privacy tradeoff comparison with $|\mathbb{D}|$ increasing from 2 to 20] 
{ \includegraphics[width=4cm]{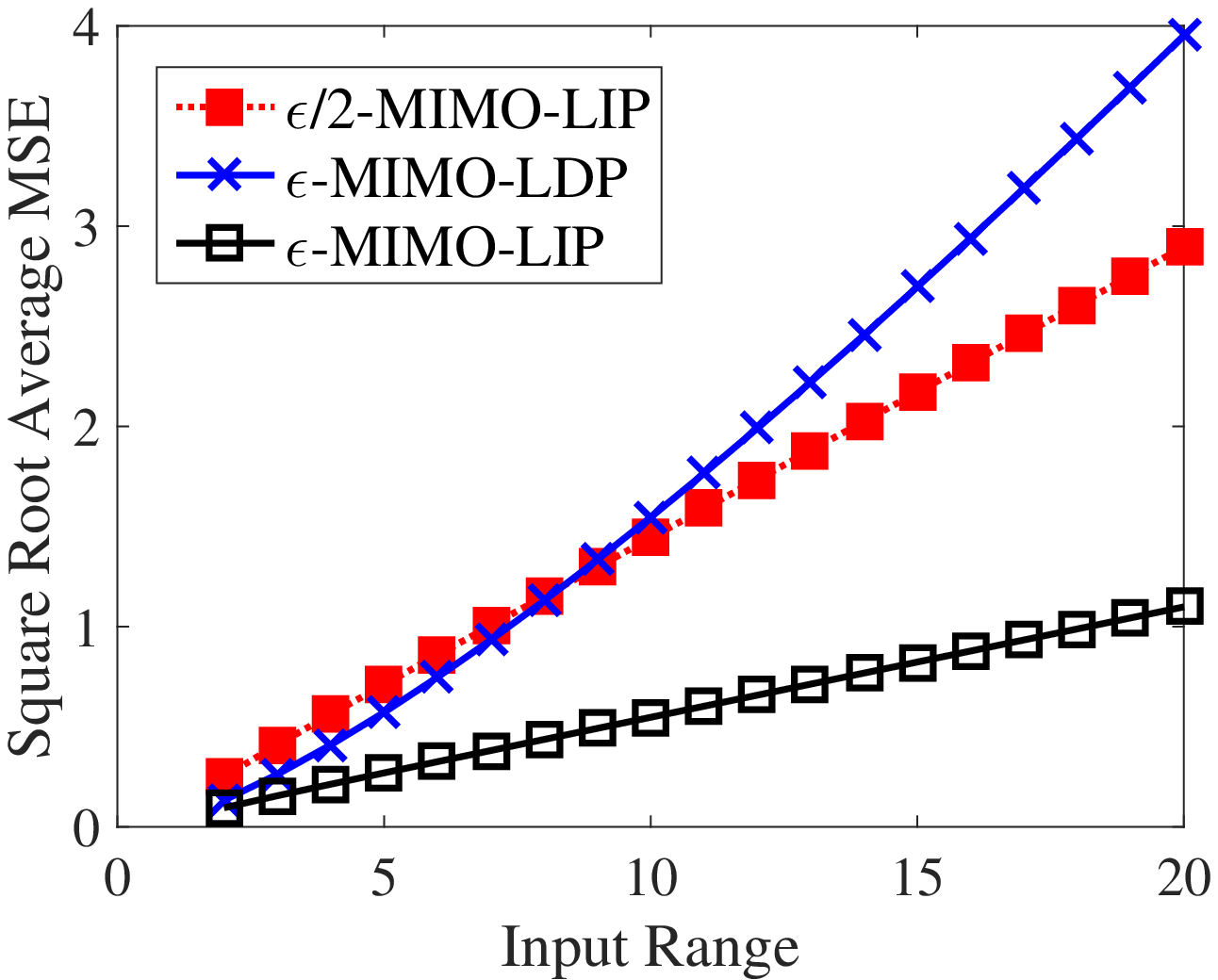} 
\label{epfixed} } 
\label{MIMO} 
\caption{Utility-privacy tradeoff comparisons under MIMO model}
\end{figure}

 \begin{figure}
 \centering
 \includegraphics[width=4.5cm]{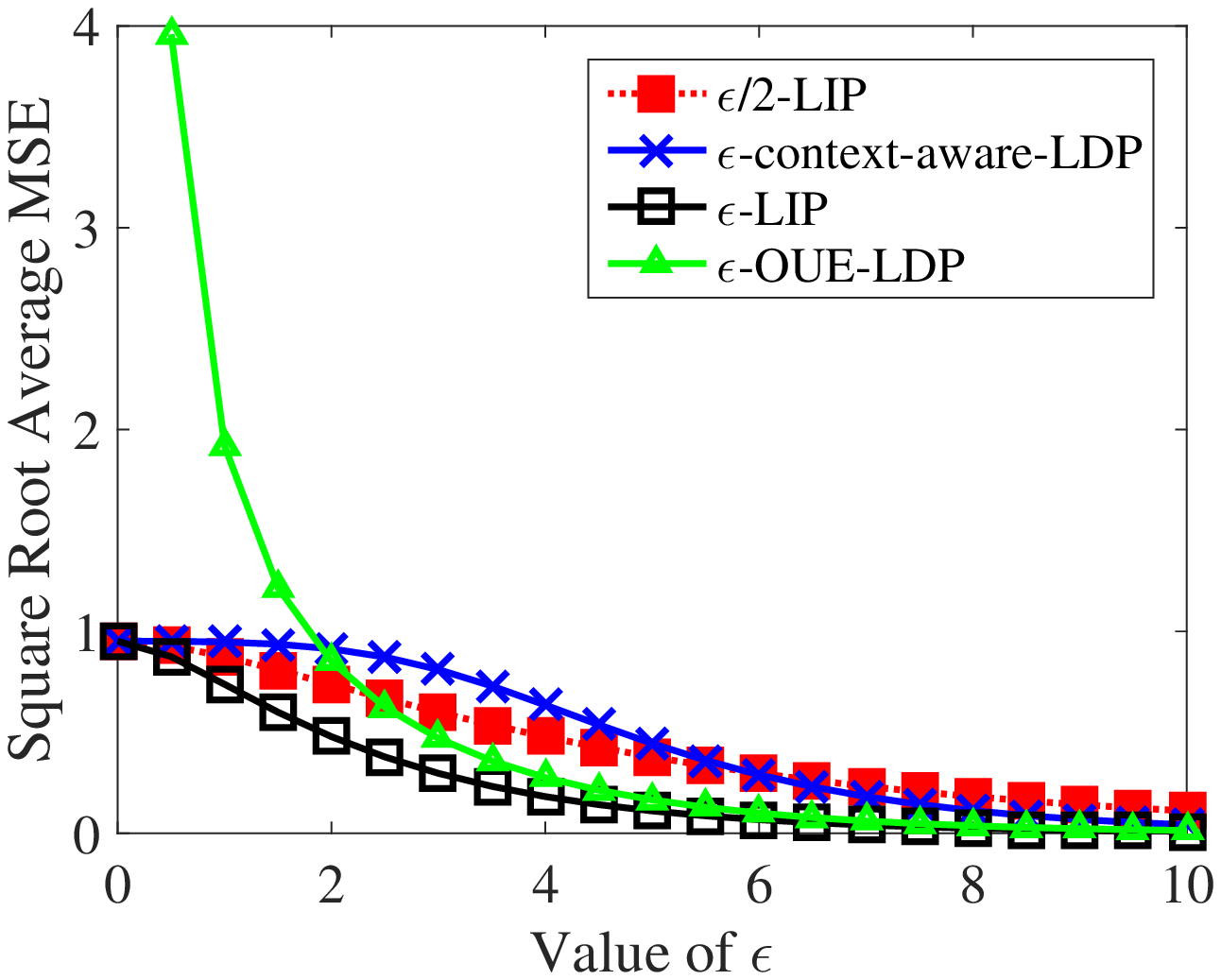}
 \label{hist_LIPLDP2}
 \caption{Utility-privacy tradeoff comparison for the application of histogram}
 \end{figure}

\begin{figure*}[htp]
\centering 
\subfigure[N=5] 
{ \includegraphics[width=5.5cm]{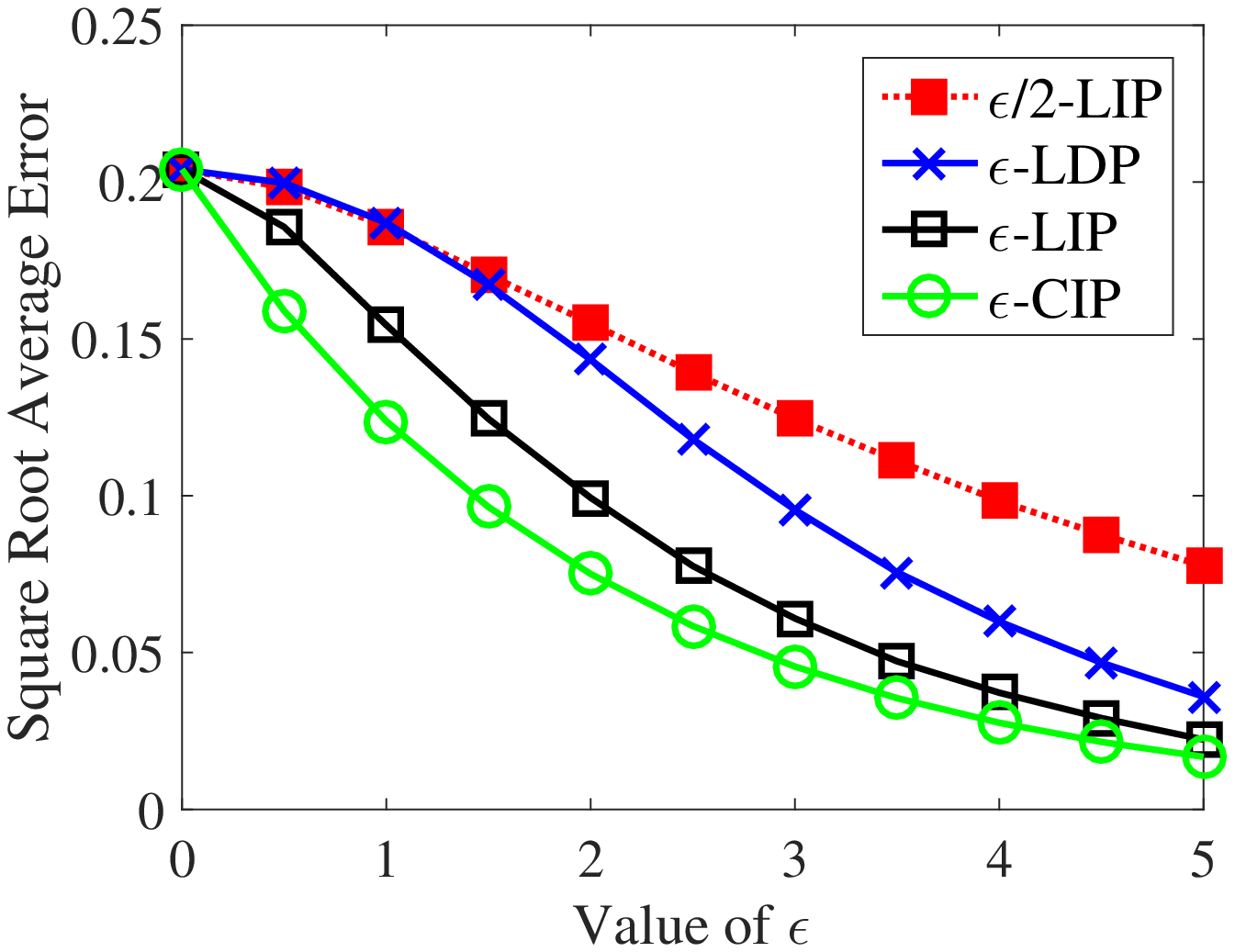} 
\label{fig_firstsub} } 
\subfigure[N=500] 
{ \includegraphics[width=5.5cm]{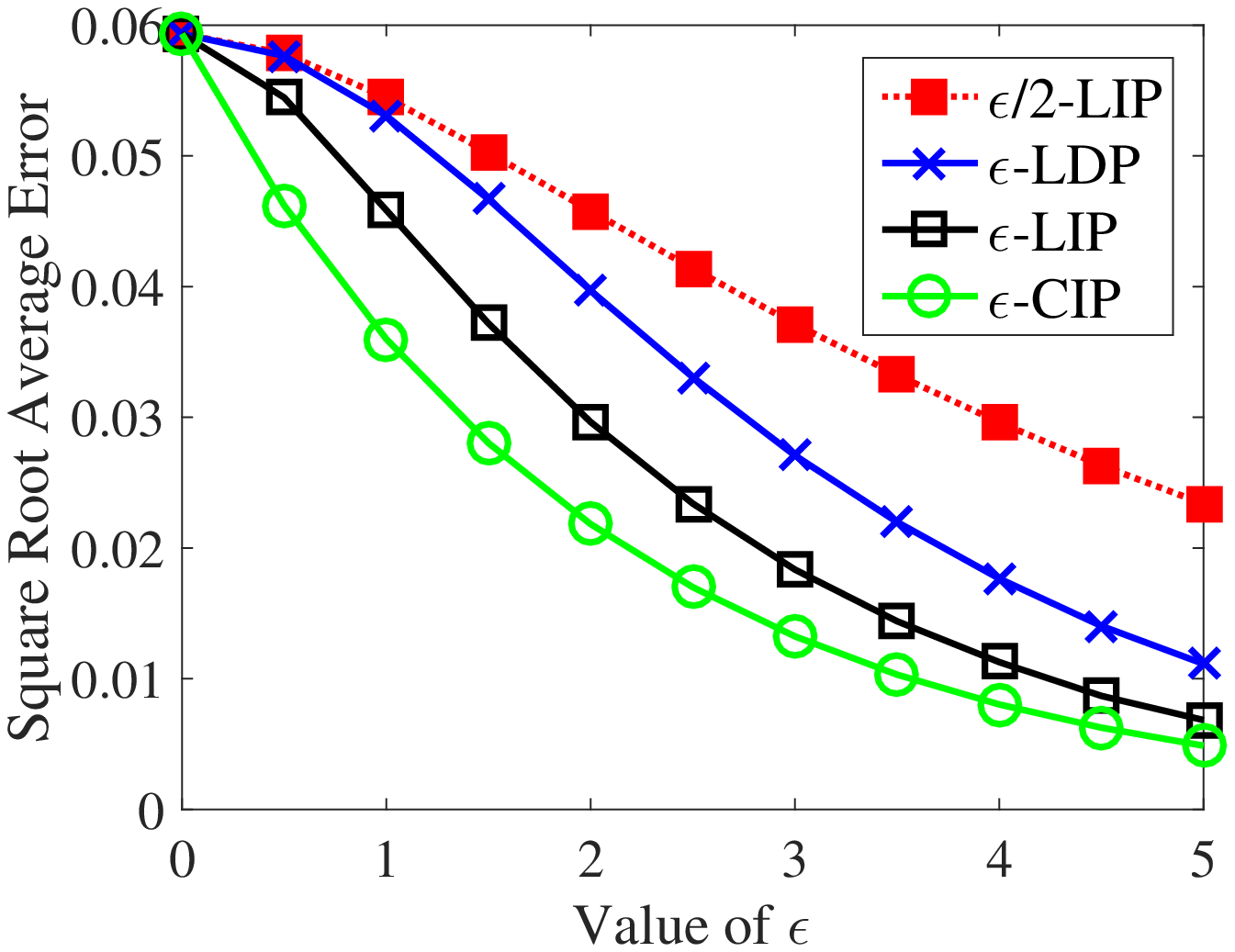} 
\label{fig_secondsub} } 
\subfigure[N=50000] 
{ \includegraphics[width=5.5cm]{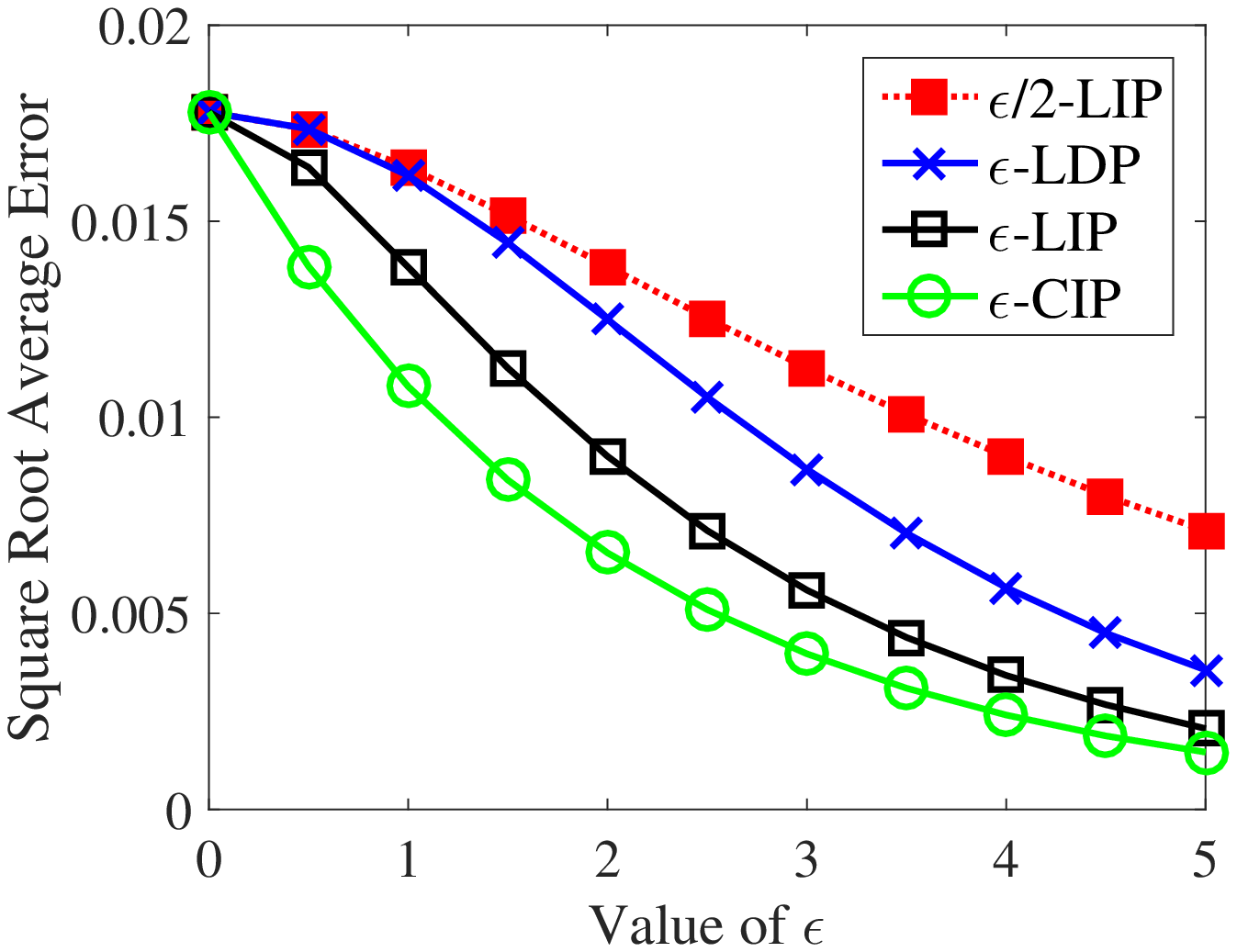} 
\label{fig_firstsub} } 
\caption{Utility-privacy tradeoff  comparison when $N$ users have different local priors, y-axis is square root average MSE: $\sqrt[]{\mathcal{E}/N}$.} 
\label{fig:Monta-Carlo} 
\end{figure*}

\subsubsection{Comparing Models  in MIMO Settings}
We now evaluate the utility-privacy tradeoffs under the optimal solutions of LIP and LDP models when the $\mathbb{D}$ has a large domain.

We first fix the users number as $N=500$ and $|\mathbb{D}|=5$. Without loss of generality, we assume that for each user, $\mathbb{D}=\{0,1,2,3,4\}$, with the prior of each value randomly generated. The utility-privacy trade offs are shown in Fig.\ref{dfixed}. We can see that the figure shows that the $\epsilon$-MIMO-LIP performs better than under the binary model, as the $\epsilon/2$-MIMO-LIP provides higher utility even than the $\epsilon$-MIMO-LDP. This means the advantage of LIP becomes more obvious with $|\mathbb{D}|$ increases. 

We next compare in detail how the data domain affects the LIP and LIP: Consider $500$ users are in the system and each of them has an input range varying from $|\mathbb{D}|=2$ to $|\mathbb{D}|=20$. To explicitly illustrate the comparison of LIP and LDP models with $|\mathbb{D}|$ increasing, we assume that each user's prior is uniformly distributed. We then fixed $\epsilon=1$, and show the utility with different input ranges under $\epsilon=1$. In Fig.\ref{epfixed}, we observe that when $|\mathbb{D}|$ is small, the $\epsilon$-MIMO-LDP model provides better utility than the $\epsilon/2$-MIMO-LIP model. However, as $|\mathbb{D}|$ increases, the $\epsilon/2$-MIMO-LIP eventually outperforms $\epsilon$-MIMO-LDP. We can also see that both the LDP and LIP models suffer from decreased utility when the input range increases, but the LIP models decreases linearly with the input range increasing while the LDP model decreases faster than that. 

\subsubsection{Histogram}
For the application of histogram, the definition of the total MSE is different from the MIMO case, as the estimator is an expected vector. We compare the performance of LIP and the LDP in two ways: (1) the LIP v.s. the context aware LDP, in this case, the utility of LDP is measured by
 the square root average MSE between the real value and the prior-related estimated vector; (2) the LIP v.s. the context-free LDP. To measure the utility of the context-free LDP, we adopt the optimal unary encoding (OUE-LDP) protocol proposed in \cite{Tianhao}.  Unary encoding is first mapping the input data into a binary vector and perturb each bit independently, which is proved to provide the highest utility among all the other LDP protocols, as the binary encoding method reduces the amount of increased error from a large domain. We fix the data domain to $|\mathbb{D}|=20$ and range $\epsilon$ from $0$ to $10$. The MSE resulted by the OUE-LDP in the application of histogram is $n\frac{4e^{\epsilon}}{(e^{\epsilon}-1)^2}$.
 The comparison is shown in Fig. 7. Observe that the tradeoff curve of the $\epsilon$-context-aware LDP is sandwiched between the $\epsilon$-LIP and the $\epsilon/2$-LIP. The LDP with optimal unary encoding results in very large MSE with small $\epsilon$, when $\epsilon$ increasing, it achieves increased utility than the context-aware LDP, but it still sandwiched between the $\epsilon$-LIP and the $\epsilon/2$-LIP. It further shows that the context-aware models are more applicable with strong privacy protection. 

\subsubsection{Monte-Carlo Simulation}
We further study the case when each user has different local priors and use Monte-Carlo Simulation to study the convergence of performance when $N$ increases. Fig. \ref{fig:Monta-Carlo} shows the comparison among three models described above as well as the model with CIP described in section \ref{sec:centralized}. We assume that each user's prior probability is sampled uniformly at random from $[0,1]$. We create three datasets with $N=5$, $500$ and $50000$, each of which contains randomly generated binary data according to their priors. For each dataset, in the localized cases, we assume that each user publishes $Y_i$ using the LIP models or the LDP models and the curator aggregates data using corresponding estimators discussed above. The error is measured by the square root of the averaged error over all users. In the centralized case, as the close form of the optimal parameters are difficult to find, we use build in optimization tools to find the parameters and derive the error. We ran each simulation 10000 times and average the errors, which are shown in Fig.\ref{fig:Monta-Carlo}. Note that when $N=5$,  curves of $\epsilon/2$-Opt-LIP and $\epsilon$-Opt-LDP cross over. This is because some users' $P^i_1$ values are far from 0.5, which makes the MSE of $\epsilon/2$-Opt-LIP smaller than that of $\epsilon$-Opt-LDP for   smaller $\epsilon$.
As $N$ becomes larger, we can see that $\epsilon$-Opt-LIP always    provides higher utility than $\epsilon$-Opt-LDP model under any $\epsilon$. We also observe that the curve of $\epsilon$-Opt-LDP   is almost sandwiched between the ones of $\epsilon/2$-Opt-LIP  and $\epsilon$-Opt-LIP. In comparison with the centralized information privacy, we can observe that the centralized information privacy always provide increased utility than the localized models.


\subsection{Simulation with Real-world Datasets}
\subsubsection{Website Popularity Statistics}
The first comparison is run using the dataset of Karosak which is a collection of anonymized click-stream data of a hungarian on-line news portal. There are around 8 million click events for 41,270 different pages. In this data set, each row stands for a click-stream for a website in different time slots.
Our goal is to estimate the frequency of popular websites (with a total click over 15,000). We treat each website as a user, thus $X_i=1$ if the total clicks of website $i$ is above 15,000; otherwise, $X_i=0$. Since no historical data is available, we regard it as the first special case in which users have a global prior.

In Fig. \ref{fig:Karosak}, we can see that $\epsilon$-Opt-LDP results in larger square root average MSE than $\epsilon$-Opt-LIP. For some smaller value of $\epsilon$, $\epsilon/2$-Opt-LIP performs better than $\epsilon$-Opt-LDP. This is because the popular websites are rare, thus the global prior is relative small. Again, this result   confirms that, the more specific the prior is, the more beneficial is LIP   than LDP. 

\begin{figure}[t]
\centering 
\subfigure[utility-privacy tradeoffs for website click aggregation with Karosak] 
{ \includegraphics[width=4cm]{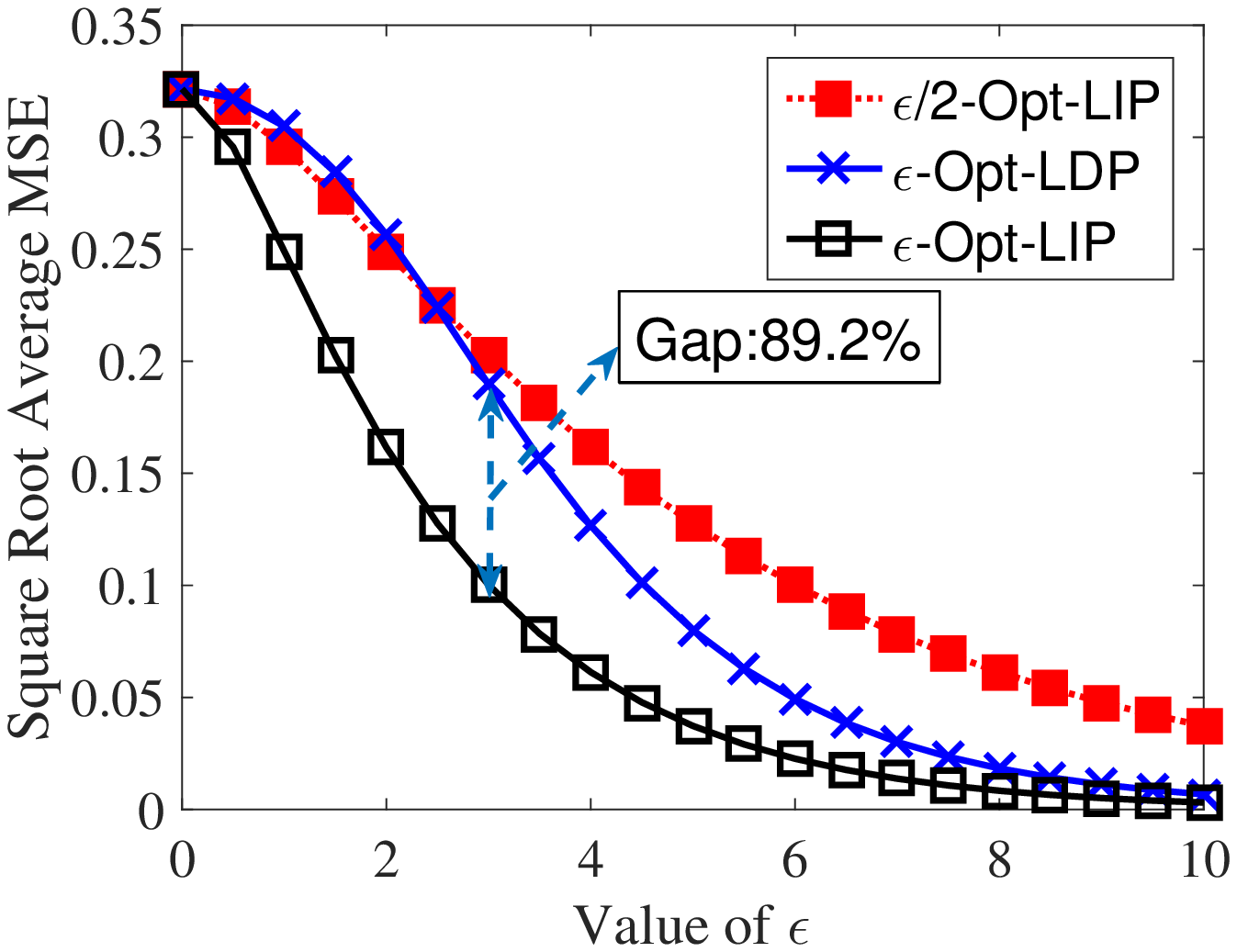} 
\label{fig:Karosak} } 
\subfigure[utility-privacy tradeoffs for location tracking with Gowalla] 
{ \includegraphics[width=4cm]{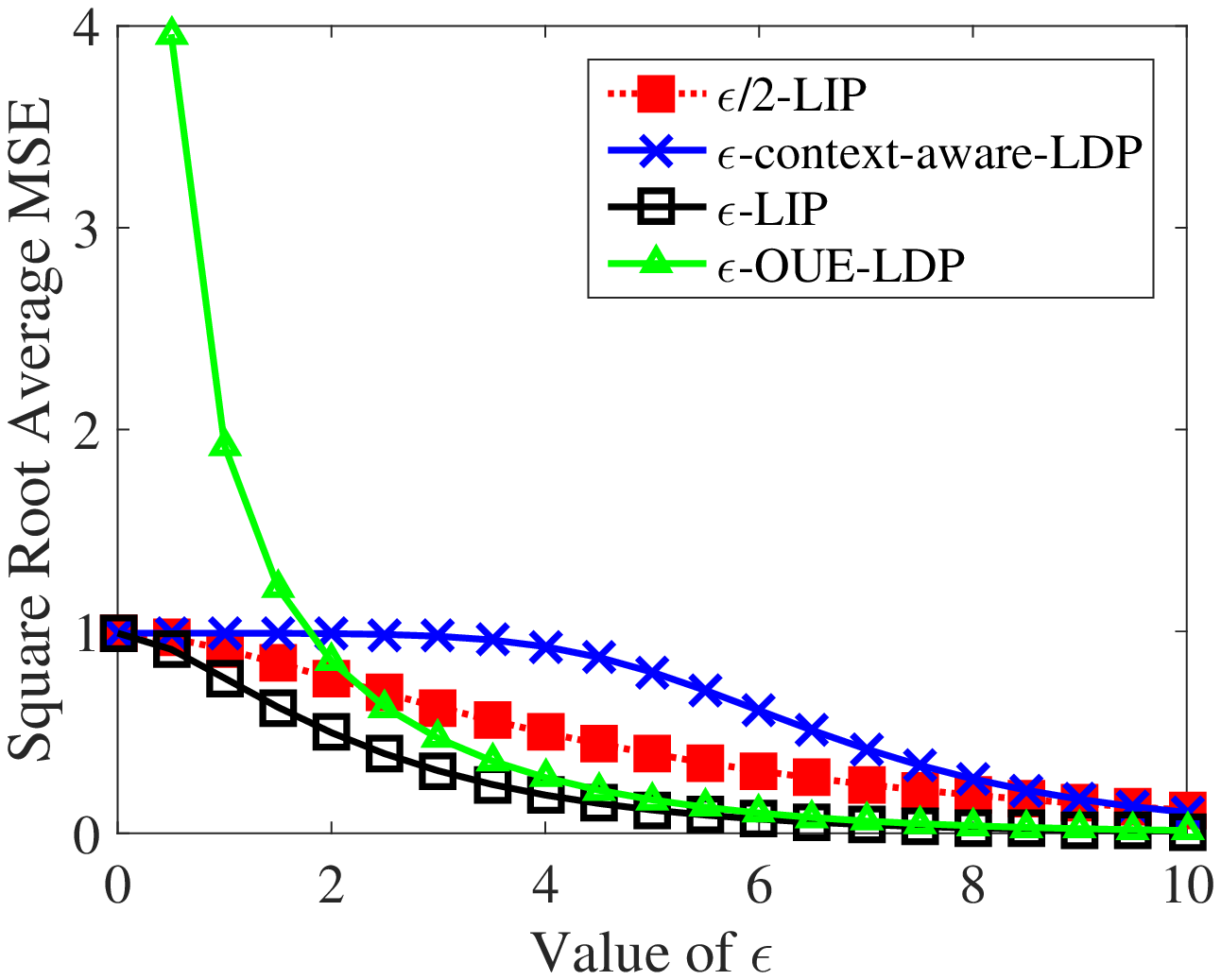} 
\label{fig:Gowalla} } \\
\caption{Utility-privacy tradeoff comparisons using real world data} 
\label{fig:single_model_compare} 
\end{figure}

\subsubsection{Location Check-In Dataset}
We then compare the performance of different models with another real-world dataset Gowalla, which is a social networking application where users share their locations by checking-in. There are 6,442,892 users in this dataset. For each user, a trace of his/her check-in locations are recorded. 
For this dataset, we wish to estimate a histogram of users' last check-in location. We first divide the area into $36\times36$ districts. We then map each user's locations into districts. Each user's past check-in locations are used for calculating a global  prior of the last check-in location for all the users. As studied in section. \ref{model_applications}, for each user, the last check-in location is perturbed according to a MIMO-LIP (LDP) channel and a random vector is used for the adversary to estimate the histogram. 
The  results are shown in Fig. \ref{fig:Gowalla}, where similar trends can be observed as in the empirical results. Note that comparing with the theoretical results from Fig. 7, the advantage of LIP is even more clear in Fig. \ref{fig:Gowalla}, that is because the theoretical analysis uses data from a domain with $|\mathbb{D}|=20$. On the other hand, in the dataset of Gowalla, the  input data is from a domain with $|\mathbb{D}|=36\times{36}$, even though many of the districts has o users checking-in, which results in zero priors for these districts. Based on the MIMO perturbation mechanism, for those districts with 0 prior, the system will also never output those districts, as a result, the data domain is equivalent to a much decreased one. Nevertheless, the effective domain size is $|\mathbb{D}|=83$, which is much larger than 20.  As we discussed, when domain size is large, the advantage of LIP is enhanced.

\section{Conclusion and Future Work}\label{sec:con}
In this paper, the notion of localized information privacy is proposed. As a context-aware privacy notion, it provides relaxed privacy guarantee than LDP by introducing prior knowledge in the privacy definition while achieving increased utility. Combined with an MMSE estimator which also leverages prior knowledge, larger gains in utility can be obtained. We studied the utility-privacy tradeoff of our proposed LIP notion and the traditional LDP notion under different both binary perturbation model and multiple-input and multiple-output perturbation model. In the binary model, we show that our $\epsilon$-LIP always outperform $\epsilon$-LDP for any given   privacy budget $\epsilon$. The advantage is more enhanced when the prior distribution is more skewed (even $\epsilon/2$-LIP with a stronger privacy guarantee than $\epsilon$-LDP is better than the latter for small $\epsilon$ values). In the MIMO model, we show that when the input data has a larger range, both the LIP mechanisms and LDP mechanism suffer decreased utility. However, as the input range increasing, the decreased amount of the LIP mechanism is less than that of the LDP mechanism. Which means the context-aware privacy notion is more applicable than the context-free privacy notion in the MIMO case.

For future work, we will extend our work to handle more general   models. For example,  we will consider an adversary that has less accurate/complete prior knowledge than users, and also understand the impact of user correlations. We will also study the optimality of the perturbation model.

\ifCLASSOPTIONcaptionsoff
  \newpage
\fi

\appendices
\section{Proof of theorem 2}
\begin{proof}
\begin{enumerate}
\item Step 1

Notice that $P^i_1$ is a constant, thus the optimization problem is equivalent to maximize:
\begin{equation}
\begin{aligned}
    L^i(q_0^{i},q_1^{i})=&\frac{(\lambda^i_0-q_1^{i})^2}{\lambda^i_0\lambda^i_1}
    =\frac{(\lambda^i_1q_1^{i}-\lambda^i_0(1-q_1^{i}))^2}{\lambda^i_0\lambda^i_1}.
\end{aligned}
\end{equation}

In order to test the monotoncity of $L^i(q_0^{i},q_1^{i})$, taking partial derivative on it.
\begin{equation}\label{eq14}
\begin{aligned}
&\frac{\partial{L^i(q_0^{i}},q_1^{i})}{\partial{q_0^{i}}}\\
    =&(1-q_1^{i})^2(\frac{\lambda^i_0}{\lambda^i_1})'+{{q_1^{i}}^2}(\frac{\lambda^i_1}{\lambda^i_0})'+[-2q_1^{i}(1-q_1^{i})]'\\
    =&(1-q_1^{i})^2(\frac{P^i_1-1}{{\lambda^i_1}^2})+{q_1^{i}}^2(\frac{1-P^i_1}{{\lambda^i_0}^2})\\
    =&(1-P^i_1)(\frac{(\lambda^i_1q_1^{i})^2-[\lambda^i_0(1-q_1^{i})]^2}{{\lambda^i_1}^2{\lambda^i_0}^2})\\
    =&(1-P^i_1)\frac{[\lambda^i_1q_1^{i}+\lambda^i_0(1-q_1^{i})][\lambda^i_1q_1^{i}-\lambda^i_0(1-q_1^{i})]}{{\lambda^i_1}^2{\lambda^i_0}^2}\\
    =&(1-P^i_1)^2\frac{[\lambda^i_1q_1^{i}+\lambda^i_0(1-q_1^{i})](q_1^{i}+q_0^{i}-1)}{{\lambda^i_1}^2{\lambda^i_0}^2}.\\
\end{aligned}
\end{equation}
In \eqref{eq14}, noticed that $\frac{\partial{L_i(q_0^{i}},q_1^{i})}{\partial{q_0^{i}}}\ge{0}$ when $q_1^{i}+q_0^{i}-1\ge{0}$; $\frac{\partial{L_i(q_0^{i}},q_1^{i})}{\partial{q_0^{i}}}<{0}$ when $q_1^{i}+q_0^{i}-1<{0}$. Which means given any value of $q_1^{i}$,  $L^i(q_0^{i},q_1^{i})$ is monotonically decreasing with $q_0^{i}$ from 0 to $1-q_1^{i}$.\\
By the same way, when $q_0^{i}$ is fixed, $L_i(q_0^{i},q_1^{i})$ is monotonically decreasing with $q_1^{i}$ from 0 to $1-q_0^{i}$.
\item Step 2

To test the symmetry of $L_i(q_0^{i},q_1^{i})$, take $q_0^{i'}=1-q_0^{i}$, $q_1^{i'}=1-q_1^{i}$.
\begin{equation*}
\begin{aligned}
L^i(q_0^{i'},q_1^{i'})=&\frac{(1-P^i_1)^2(1-q_0^i-q_1^{i})^2}{\lambda^i_1\lambda^i_0}\\
=&L^i(q_0^{i},q_1^{i}).
\end{aligned}
\end{equation*}
Thus,  $L^i(q_0^{i},q_1^{i})$ is symmetric about $(0.5, 0.5)$, which means for every point $(q_0^{i},q_1^{i})$ on the left of $q_0^{i}+q_1^{i}=1$, we can find a point on the right side of $q_0^{i}+q_1^{i}=1$ that result in a same $L^i(q_0^{i},q_1^{i})$ value.\\

In terms of the constraints, first derive $F^i_1(q_0^{i},q_1^{i})$, $F^i_2(q_0^{i},q_1^{i})$, $F^i_3(q_0^{i},q_1^{i})$, $F^i_4(q_0^{i},q_1^{i})$ as:
\begin{equation}\label{eq14}
\begin{aligned}
F^i_1(q_0^{i},q_1^{i})=&\frac{Pr(X=1)}{Pr(X=1|Y=0)}=\frac{\lambda^i_0}{q^i_1};\\
F^i_2(q_0^{i},q_1^{i})=&\frac{Pr(X=1)}{Pr(X=1|Y=1)}=\frac{\lambda^i_1}{1-q^i_1};\\
F^i_3(q_0^{i},q_1^{i})=&\frac{Pr(X=0)}{Pr(X=0|Y=0)}=\frac{\lambda^i_0}{1-q^i_0};\\
F^i_4(q_0^{i},q_1^{i})=&\frac{Pr(X=0)}{Pr(X=0|Y=1)}=\frac{\lambda^i_1}{q^i_0}.\\
\end{aligned}
\end{equation}

if taking $q_0^{i'}=1-q_0^{i}$, $q_1^{i'}=1-q_1^{i}$ into $F^i_1(q_0^{i},q_1^{i})$, $F^i_2(q_0^{i},q_1^{i})$, $F^i_3(q_0^{i},q_1^{i})$ and $F^i_4(q_0^{i},q_1^{i})$:\\ $$F^i_1(q_0^{i'},q_1^{i'})=\frac{\lambda^i_1}{1-q_1^{i}}=F^i_2(q_0^{i},q_1^{i}),$$
$$F^i_2(q_0^{i'},q_1^{i'})=\frac{\lambda^i_0}{q_1^{i}}=F^i_1(q_0^{i},q_1^{i}),$$
$$F^i_3(q_0^{i'},q_1^{i'})=\frac{\lambda^i_1}{q_0^{i}}=F^i_4(q_0^{i},q_1^{i}),$$
$$F^i_4(q_0^{i'},q_1^{i'})=\frac{\lambda^i_0}{1-q_0^{i}}=F^i_1(q_0^{i},q_1^{i}).$$
As a result, the four constraints functions are symmetric about point $(0.5, 0.5)$. Assume that $F^i_j(q_0^{i},q_1^{i})$, $j=1,2,3,4$ form a feasible region $\mathcal{T}_i$ for $(q_0^{i},q_1^{i})$, thus for any point in this region, another point can be found on the other side of $q_0^{i}+q_1^{i}=1$ also in $\mathcal{T}_i$.
Now $(q_0^{i*},q_1^{i*})\in{\mathcal{T}_i=\mathcal{T}_i\cap\{{q_0^{i}+q_1^{i}\le{1}}\}}$.

Fig. \ref{contour} illustrates the function of $L_i(q_0^{i},q_1^{i})$ and feasible region $\mathcal{R}^i$ on the plane of $q_0^{i},q_1^{i}$, curves in the figure is the contour line of $L_i(q_0^{i},q_1^{i})$, while the shadow region is the feasible region of $\mathcal{T}_i$ when $\epsilon$ is fixed.
\item Step 3

Since 
\begin{equation*}
\begin{aligned}
F^i_1-F^i_2=&P^i_1+\frac{(1-P^i_1)(1-q^i_0)}{q^i_{1}}-(P^i_1+\frac{(1-P^i_1)q^i_0}{1-q^i_{1}})\\
=&(1-P^i_1)\frac{1-q^i_0-q^i_1}{q^i_1(1-q^i_1)}\\
\end{aligned}
\end{equation*}
and
\begin{equation*}
\begin{aligned}
F^i_4-F^i_3=&1-P^i_1+\frac{P^i_1(1-q^i_1)}{q^i_{0}}-(1-P^i_1+\frac{P^i_1q^i_1}{1-q^i_0})\\
=&(1-P^i_1)\frac{1-q^i_0-q^i_1}{q^i_1(1-q^i_1)},\\
\end{aligned}
\end{equation*}
for any point $(q_0^{i},q_1^{i})\in{\mathcal{T}_i}$, there is $F^i_1\ge{F^i_2}$ and $F^i_4\ge{F^i_3}$\\
\begin{equation}\label{new_region}
\begin{aligned}
\mathcal{T}_i=\{F^i_1\le{e^{\epsilon}}\}&\cap\{e^{-\epsilon}\le{F^i_2}\}\cap\{e^{-\epsilon}\le{F^i_3}\}\\
&\cap\{F^i_4\le{e^{\epsilon}}\}\cap\{{q_0^{i}+q_1^{i}\le{1}}\}
\end{aligned}
\end{equation}
From \eqref{new_region}, assume that $q_0^{i}$ is fixed to be $q_0$, then 
$$q_1^i\ge
\begin{cases}
\frac{(1-P^i_1)(1-q_0)}{e^{\epsilon}-P^i_1}& if \quad  {q_0\ge{P^i_1/e^{\epsilon}}}\\
1-\frac{(e^{\epsilon}+P^i_1-1)(q_0)}{P^i_1}& if \quad  {q_0<{P^i_1/e^{\epsilon}}}.
\end{cases}$$

Since $L^i(q_0^{i},q_1^{i})$ is monotonically decreasing with $q_0^{i}$ and $q_1^{i}$ in $\mathcal{T}_i$, For any $q_0^{i}$ in $\mathcal{T}_i$, its according optimal $q_1^{i*}$ is the smallest value s.t. $(q_0^{i},q_1^{i*})$ is in $\mathcal{T}_i$. It is the same with $q_0^{i*}$. On the other hand, $F^i_1$, $F^i_2$, $F^i_3$, $F^i_4$ are all liner, their bounds value can be achieved when they are equal to their constraints values $e^{\epsilon}$ or $e^{-\epsilon}$.
As a result, 
$$(q^{i*}_0,q^{i*}_1)\in
\begin{cases}
\{F^i_1={e^{\epsilon}}\}& if \quad  {q_0\ge{P^i_1/e^{\epsilon}}}\\
\{F^i_4={e^{\epsilon}}\}& if \quad  {q_0<{P^i_1/e^{\epsilon}}}.
\end{cases}$$
\item Step 4

To find the minimal value of $L^i(q_0^{i},q_1^{i})$, where $(q_0^{i},q_1^{i})\in \mathcal{T}_i$, the last thing we need to do is to test its monotocity of $L^i(q_0^{i},q_1^{i})$ given $F^i_1={e^{\epsilon}}$ or $F^i_4=e^{\epsilon}$.
\begin{equation*}
\begin{aligned}
L^i_{F^i_1={e^{\epsilon}}}(q_0^{i},q_1^{i})=&\frac{(1-P^i_1)^2(q_0^i+q_1^{i}-1)^2}{\lambda^i_0\lambda^i_1}\\
=&\frac{[(1-P^i_1)(e^{\epsilon}-1)q_1^i]^2}{\lambda^i_0\lambda^i_1},
\end{aligned}
\end{equation*}
To find the minimal value, taking derivative over $L^i_{F^i_1={e^{\epsilon}}}$
\begin{equation*}
\begin{aligned}
&\frac{\partial{L^i_{F^i_1={e^{\epsilon}}}(q_0^{i},q_1^{i})}}{\partial{q_1^i}}\\
=&(1-P^i_1)^2(e^{\epsilon}-1)^2(\frac{{q_1^i}^2}{\lambda^i_0\lambda^i_1})'\\
=&(1-P^i_1)^2(e^{\epsilon}-1)^2{q_1^i}\frac{\lambda^i_1(\lambda^i_0-P^i_1q^i_1)+\lambda^i_0\lambda^i_1+P^i_1q^i_1\lambda^i_0}{(\lambda^i_0\lambda^i_1)^2}
\end{aligned}
\end{equation*}
Which is obviously greater than 0, similarly, when $F^i_4=e^{\epsilon}$ is given:
\begin{equation*}
\begin{aligned}
&\frac{\partial{L^i_{F^i_4={e^{\epsilon}}}(q_0^{i},q_1^{i})}}{\partial{q_0^i}}\\
=&{P^i_1}^2(e^{\epsilon}-1)^2(\frac{{q_0^i}^2}{\lambda^i_0\lambda^i_1})'\\
=&{P^i_1}^2(e^{\epsilon}-1)^2{q_0^i}\frac{\lambda^i_0(\lambda^i_1-(1-P^i_1)q^i_0)+\lambda^i_1(1-P^i_1)q^i_0}{(\lambda^i_0\lambda^i_1)^2}
\end{aligned}
\end{equation*}
Which is also greater then 0. Noticed that, when $q_0^i\le{\frac{P^i_1}{e^{\epsilon}}}$, optimal point is $(q^{i*}_0,q^{i*}_1)=(\frac{P^i_1}{e^{\epsilon}},\frac{1-P^i_1}{e^{\epsilon}})$, when $q_0^i>{\frac{P^i_1}{e^{\epsilon}}}$, the optimal point is also $(q^{i*}_0,q^{i*}_1)=(\frac{P^i_1}{e^{\epsilon}},\frac{1-P^i_1}{e^{\epsilon}})$. Thus the global optimal solution is $(\frac{P^i_1}{e^{\epsilon}},\frac{1-P^i_1}{e^{\epsilon}})$.
\end{enumerate}
\end{proof}

\section{Proof of Theorem 3}
\begin{proof}
Notice that $Var[X_i]$ is a non-negative constant, thus minizing MSE is equivalent to maximize $Var[\hat{X_i}]$.
\begin{lem}
Regardless of the privacy constraints, one set of solutions of $\mathbf{q^i}$ that result in the minimal value of $Var[\hat{X_i}]$ is $q^i_{nk}=\lambda^i_k$, $\forall{n,k\in{1,2,3...d}}$; one set of solutions of $\mathbf{q^i}$ that result in the maximal value of $Var[\hat{X_i}]$ is all $q^i_{kj}$s are from the set of $\{0,1\}$, and for any $n\neq{m}$, $q^i_{nj}$ and $q^i_{mj}$ can not be 1 simultaneously. 
\end{lem}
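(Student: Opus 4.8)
The plan is to recognize that $Var[\hat{X_i}]$ is the variance of a conditional expectation, hence automatically confined to the interval $[0, Var[X_i]]$, and then to exhibit the two stated configurations as the two endpoints of this interval. The anchor is the law of total variance already recorded in \eqref{largerlaw}, namely $Var[X_i] = E[Var(X_i|Y_i)] + Var[E(X_i|Y_i)]$ with $Var[\hat{X_i}] = Var[E(X_i|Y_i)]$. Since both summands are non-negative, $0 \le Var[\hat{X_i}] \le Var[X_i]$ for every feasible $\mathbf{q}^i$, so the lemma reduces to showing that the first configuration attains the lower endpoint and the second attains the upper endpoint. Because the statement only claims ``one set of solutions,'' I do not need to characterize all minimizers or maximizers; attainment suffices.

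For the minimum I would substitute $q^i_{nk} = \lambda^i_k$ directly into the explicit expression derived for $Var[\hat{X_i}]$. Every summand there carries the factor $\left(\frac{q^i_{nk}}{\lambda^i_k} - 1\right)$, which vanishes identically under this substitution, giving $Var[\hat{X_i}] = 0$, exactly the lower bound. I would also verify that this choice is a legitimate mechanism and internally consistent: each row of the transition matrix equals the common distribution $(\lambda^i_1,\dots,\lambda^i_d)$, the rows sum to one, and the marginal identity $\lambda^i_k = \sum_m P^i_m q^i_{mk} = \lambda^i_k \sum_m P^i_m = \lambda^i_k$ holds as a fixed point. Conceptually, this is the mechanism that renders $Y_i$ independent of $X_i$, so $E[X_i|Y_i] = E[X_i]$ is constant and carries no residual variance.

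For the maximum I would argue through the complementary term $E[Var(X_i|Y_i)]$. The stated configuration, all $q^i_{kj}\in\{0,1\}$ with no two distinct inputs sending positive mass to the same output, describes a deterministic and injective assignment of inputs to outputs (once one also uses row-stochasticity, each input maps to exactly one output). Under such a map, observing $Y_i = a_j$ pins down $X_i$ uniquely, so $Var(X_i\,|\,Y_i = a_j) = 0$ for every reachable $a_j$; hence $E[Var(X_i|Y_i)] = 0$ and $Var[\hat{X_i}] = Var[X_i]$, the upper endpoint. Equivalently, $E[X_i|Y_i] = X_i$ almost surely.

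The main obstacle is the bookkeeping in the maximum case: I must make precise that the two stated requirements (binary entries and the pairwise ``not both $1$'' condition), together with the implicit constraint that each $\mathbf{q}^i$ is a valid conditional distribution, are jointly equivalent to invertibility of the input-to-output map, which is what forces $Var(X_i|Y_i)=0$. A secondary point is confirming that such an injective deterministic map actually exists for the given ranges, which it does once the output range has size at least $d$, consistent with the same-range setting $n,k\in\{1,\dots,d\}$. Everything else reduces to the non-negativity of the two terms in the total-variance decomposition and a one-line substitution.
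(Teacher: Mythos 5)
Your proof is correct, and its skeleton matches the paper's: both arguments pin $Var[\hat{X}_i]$ inside $[0, Var[X_i]]$ (you via the law of total variance, the paper equivalently via $\mathcal{E}_i = Var[X_i] - Var[\hat{X}_i] \ge 0$, which is the same decomposition from Eq.~\eqref{largerlaw}), and both handle the minimum by the identical substitution $q^i_{nk}=\lambda^i_k$ that kills the factor $\bigl(\tfrac{q^i_{nk}}{\lambda^i_k}-1\bigr)$. Where you genuinely diverge is the maximum: the paper plugs the permutation-type configuration into the explicit triple-sum expression for $Var[\hat{X}_i]$ and grinds through the algebra to verify it equals $\sum_n a_n^2 P^i_n(1-P^i_n) - \sum_n\sum_{m\neq n} a_n a_m P^i_n P^i_m = Var[X_i]$, whereas you bypass the formula entirely by observing that the stated conditions plus row-stochasticity force a deterministic injective map, so $Var(X_i \mid Y_i = a_j) = 0$ on every reachable output and hence $E[Var(X_i|Y_i)]=0$. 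Your route is shorter, cleaner, and makes the structural reason for optimality transparent (it also adds the fixed-point consistency check $\lambda^i_k = \sum_m P^i_m \lambda^i_k$ for the minimizer, which the paper waves off as obvious); the paper's computation, on the other hand, doubles as a verification of the derived variance expression and produces the intermediate identity reused later in the monotonicity lemma. Both are valid; just make sure, as you noted, that the injectivity claim explicitly invokes row-stochasticity, since the lemma statement alone does not say each row has exactly one nonzero entry.
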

\begin{proof}
Obviously, 
\textbf{minimized solution:}\\
Consider a set of parameters: $\mathbf{q_{min}^i}$, when $q^i_{nk}=\lambda^i_k$, $\forall{n,k\in{1,2,3...d}}$, $Var[\hat{X_i}]=0$, as $Var[\hat{X_i}]\ge0$, thus $q^i_{nk}=\lambda^i_k$ results in a minimal value of  $Var[\hat{X_i}]$.\\
On the other hand, \textbf{maximized solution:} Consider a set of parameters: $\mathbf{q_{max}^i}$, assume that for all $n\in{1,2...d}$, there's a $k\in{1,2,...,d}$, such that $q^i_{nk}=1$ and $q^i_{nl}=0$ for all $l\neq{k}$ and for different $n$, $k$ is different, thus $\lambda^i_k=P^i_n$
\begin{equation}
\begin{aligned}
    &\sum^d_{m=1}\sum^d_{n=1}\sum^d_{k=1}a_ma_nP^i_mP^i_nq^i_{mk}(\frac{q^i_{nk}}{\lambda^i_k}-1)\\
    =&\sum^d_{m=1}\sum^d_{n=1}a_na_mP^i_nP^i_mq^i_{mk}(\frac{q^i_{nk}}{P^i_n}-1)\\
    =&\sum^d_{n=1}a^2_nP^i_n(1-P^i_n)-\sum^d_{n=1}\sum^d_{m\neq{n}}a_na_mP^i_nP^i_m\\
    =&Var[X_i]
    \end{aligned}
\end{equation}
Notice that $MSE_i\ge{0}$, $Var[X_i]\ge{Var[\hat{X_i}]}$. Thus $\mathbf{q_min}^i$ is a maximum when for all $n\in{1,2,...,d} $, $q^i_{nk}=1$ and $q^i_{nl}=0$ for all $l\neq{k}$.\\
When $\epsilon=0$, the mechanism satisfies strongest privacy guarantee and $\frac{\lambda^i_k}{q^i_{jk}}=1$. We know that when $q^i_{nk}=\lambda^i_k$, the mechanism achieves minimized utility. 
\end{proof}

We now develop monotocity of the region between minimum and maximum with the following lemma:
\begin{lem}
The sets of solutions in the form in lemma.1 that result in the maximum and minimum of $Var[\hat{X_i}]$ are unique. On the other hand, $Var[\hat{X_i}]$  is monotonically increasing when $q^i_{lk}>\lambda^i_k$;  $Var[\hat{X_i}]$ is monotonically decreasing when $q^i_{lk}<\lambda^i_k$
\end{lem}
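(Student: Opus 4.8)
The plan is to reduce everything to a single closed form for $Var[\hat X_i]$ and then read off both the extremal characterizations and the monotonicity from it. Fix a user $i$ and suppress the superscript $i$. Writing $b_k=\sum_m a_m P_m q_{mk}$ and recalling $\lambda_k=\sum_m P_m q_{mk}$, I would expand the triple sum for $Var[\hat X]$ and use the row-stochasticity $\sum_k q_{mk}=1$ to collapse the ``$-1$'' contribution to $(E[X])^2$, obtaining
\begin{equation}
Var[\hat X]=\sum_{k=1}^d \frac{b_k^2}{\lambda_k}-(E[X])^2=\sum_{k=1}^d \lambda_k\,\hat x_k^{\,2}-(E[X])^2,
\end{equation}
where $\hat x_k:=b_k/\lambda_k=E[X\mid Y=a_k]$. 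Equivalently, by the law of total variance already used in \eqref{largerlaw}, $Var[\hat X]=Var[X]-E[Var(X\mid Y)]$. These two identities are the only inputs needed.

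For \textbf{uniqueness of the minimizer} I would rewrite the compact form as $Var[\hat X]=\sum_k \lambda_k(\hat x_k-E[X])^2\ge 0$, using $\sum_k\lambda_k\hat x_k=E[X]$. Hence the minimum value $0$ is attained iff $\hat x_k=E[X]$ for every $k$ with $\lambda_k>0$, which by Bayes' rule ($P_m q_{mk}/\lambda_k=P_m$) holds iff $q_{mk}=\lambda_k$ for all such $k$ and every $m$. Thus the form $q_{mk}=\lambda_k$ of Lemma 1 is not merely one minimizer but characterizes \emph{all} of them, which is the asserted uniqueness. For \textbf{uniqueness of the maximizer} I would use $Var[\hat X]=Var[X]-E[Var(X\mid Y)]\le Var[X]$, with equality iff $Var(X\mid Y=a_k)=0$ for every $k$ with $\lambda_k>0$, i.e. each realized output pins down the input. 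Since the channel sends every input $a_m$ (with $P_m>0$) to some output, this forces each used output column to concentrate on a single input and each input to be deterministic, i.e. an injective $\{0,1\}$-valued map — exactly the maximal form of Lemma 1, unique up to the $d!$ relabelings of $Y$ noted in the proof of Theorem 3.

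For the \textbf{monotonicity} I would differentiate the compact form, but the delicate point is that $\lambda_k=\sum_m P_m q_{mk}$ itself depends on the parameters and each row obeys $\sum_k q_{mk}=1$, so the controlling quantity is not a naive partial derivative but the derivative along a feasible variation that shifts probability mass within row $l$. Carrying out that computation, the derivative should vanish exactly at the independence configuration $q_{mk}=\lambda_k$ and be sign-definite on each side, the sign tracking the comparison of the perturbed entry with $\lambda_k$: raising an entry above $\lambda_k$ strictly increases $Var[\hat X]$, while pushing it below $\lambda_k$ strictly decreases it. Combined with the fact that $q_{mk}=\lambda_k$ is the unique interior stationary point (the minimizer above), this shows $Var[\hat X]$ admits no interior maximum, so every maximizer lies on the boundary of the privacy region $e^{-\epsilon}\lambda_k\le q_{mk}\le e^{\epsilon}\lambda_k$ — precisely the reduction needed for the remainder of the proof of Theorem 3.

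I expect the monotonicity step to be the \textbf{main obstacle}, for two reasons. First, one must handle the coupled dependence of every $\lambda_k$ on the parameters together with the per-row simplex constraint, so that the one-parameter statement is made rigorous through an admissible reparametrization rather than an unconstrained partial derivative. Second, on the maximizer side, translating the qualitative condition ``$X$ is a function of $Y$'' into the exact deterministic-injective structure requires care, since one must rule out non-deterministic rows while also accounting for the freedom to permute the outputs.
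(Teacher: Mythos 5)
Your closed form $Var[\hat{X}_i]=\sum_k\lambda^i_k(\hat{x}_k-E[X_i])^2$ and the law-of-total-variance treatment of the maximizer are correct and genuinely different from the paper, which instead computes the partial derivative $\partial Var[\hat{X}_i]/\partial q^i_{lk}$ explicitly and, for the maximum, rules out non-extremal solutions by an explicit term-by-term comparison. But there is a genuine error in your minimizer step: from $\sum_k\lambda^i_k(\hat{x}_k-E[X_i])^2=0$ you may conclude only that every conditional \emph{mean} $E[X_i\mid Y_i=a_k]$ equals $E[X_i]$, and this does not imply $q^i_{mk}=\lambda^i_k$. Equality of posterior means is strictly weaker than equality of posterior distributions once $d\ge 3$: take $\mathbb{D}=\{0,1,2\}$ with uniform prior and the channel whose rows are $(1/2,0,1/2)$, $(0,1,0)$, $(1/2,0,1/2)$; then $\lambda^i_k=1/3$ for all $k$, every posterior has mean $1=E[X_i]$, so $Var[\hat{X}_i]=0$, yet $q^i_{11}=1/2\neq\lambda^i_1$. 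Your ``iff via Bayes'' is valid only for $d=2$, where a distribution on two points is determined by its mean. (This counterexample in fact shows the minimizer-uniqueness claim itself is problematic for $d\ge 3$; notably the paper's own proof only argues uniqueness on the maximizer side.)

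The second, more serious gap is that the monotonicity assertion --- the part of the lemma that is actually used downstream in the proof of Theorem 3, since it is what pushes optimal solutions to the boundary of the privacy region --- is never proved in your proposal. You write that ``carrying out that computation, the derivative should vanish exactly at the independence configuration and be sign-definite on each side,'' and you yourself flag this as the main obstacle; describing the expected outcome of a computation is not a proof of it. The paper's proof of this lemma consists essentially of that computation: it differentiates $Var[\hat{X}_i]$ with respect to $q^i_{lk}$, accounting for the dependence of $\lambda^i_k=\sum_m P^i_m q^i_{mk}$ on the parameter, and arrives at an expression proportional to $(q^i_{lk}-\lambda^i_k)$ times a nonnegative factor, from which the stationary point and the two monotone regimes are read off directly. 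Your concern about the row-sum constraint (that an unconstrained partial is not a feasible variation) is legitimate and in fact applies to the paper's argument as well, but raising the difficulty without resolving it leaves the central claim of the lemma unestablished; to complete your route you would need to actually compute the derivative along an admissible within-row variation and verify its sign.
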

\begin{proof}

Taking derivative over $q^i_{lk}$, where $l,k\in{1,2...,d}$:
\begin{equation}\label{der}
\begin{aligned}
   &\frac{\partial{Var[\hat{X_i}]}}{\partial{q^i_{lk}}}\\
   =&\frac{1}{(\lambda^i_k)^2}[a_l\lambda^i_k(2\sum^d_{m=1}(a_mq^i_{mk}-a_j\lambda^i_k))-P^i_l(\sum^d_{m=1}a^2_m(q^i_{mk})^2\\
   &\quad\qquad-2\sum^d_{m=1}\sum^d_{n=1}a_ma_nq^i_{mk}q^i_{nk}]\\
   =&\frac{1}{(\lambda^i_k)^2}[a_l\lambda^i_k(2\sum^d_{m=1}(a_mq^i_{mk}-a_j\lambda^i_k))-P^i_l(\sum^d_{m=1}a_mq^i_{mk})^2]\\
   =&\frac{a_{l}q^i_{lk}(\sum_{m\neq{l}}^da_mq^i_{mk})(1-P^i_k)(q^i_{lk}-\lambda^i_k)}{\lambda^i_k}.
\end{aligned}
\end{equation}
From \eqref{der}, we can see that the station point of $q^i_{lk}$ is $\lambda^i_k$, which we know is the minimal value and $Var[\hat{X_i}]$ is monotonically increasing when $q^i_{lk}>\lambda^i_k$;  $Var[\hat{X_i}]$ is monotonically decreasing when $q^i_{lk}<\lambda^i_k$. As a result, without considering the privacy constraints, the optimal solutions of each $q^i_{mn}$ is either $0$ or $1$. We first show that the maximum value of $Var[\hat{X_i}]$ can only be achieved by the solutions discussed above.

We take one set of optimal solution as an examples, other solutions follow the same idea, the set of optimal solution is: $q^i_{kk}=1$ for any $k\in{1,2,...,d}$, and $q^i_{kj}=0$ for any $j\in{1,2,...,d}$. Know we assume that there is an subset of index $1$ to $n$, s.t: $q^i_{ll_k}\neq{1}\neq{0}$, for any $k\in\{1,2,...,n\}$. By the monotocity, we know that, regardless the privacy constraint and the total probability constraints, $Var[\hat{X_i}]$ with $q^i_{lk}\neq{1}\neq{0}$ and is less than  $Var[\hat{X'_i}]$ with $q^i_{lk}=1$, for any $k\in\{1,2,...,n\}$. Compare with the two solutions, we have:
\begin{equation}
\begin{aligned}
    &Var[X_i]-Var[\hat{X'_i}]\\=&\sum^n_{k=1}a^2_lP^i_l(\frac{P^i_l}{P^i_l+P^i_k})+\sum^n_{k=1}a^2_kP^i_k(\frac{P^i_k}{P^i_l+P^i_k})\\
   +& \sum^d_{m\notin\{1,2,...,n\}}a_lP^i_la_mP^i_m-2\sum^n_{k=1}a_la_k\frac{P^i_lP^i_k}{P^i_l+P^i_k}\\
    =&\sum^n_{k=1}\frac{(a_lP^i_l-a_kP^i_k)^2}{P^i_l+P^i_k}+\sum^d_{m\notin\{1,2,...,n\}}a_lP^i_la_mP^i_m\\
    &>{0}
\end{aligned}
\end{equation}

Thus the optimal solution is: only one of the $q^i_{kk}$ can be one, other $q^i_{kj}$s are all zeros.

\end{proof}

Now, we know that the optimal solution for the miniming MSE problem with $\epsilon$ growing is from the minimal value to its maximal value. As a result of the monotocity property, the optimal solution (with privacy constraints) lies on the boundaries of the constraints:     $e^{-\epsilon}=\frac{\lambda^i_k}{q^i_{jk}}$, or $\frac{\lambda^i_k}{q^i_{jk}}={e^{\epsilon}}$ as well as the constraints: $0\le{q^i_{jk}}$; $\sum^d_{n=1}q^i_{jn}=1;$, $\forall{j,k\in{1,2,...,d}}$.

Based on the solutions which result in the maximum and minimum of $Var[\hat{X_i}]$. We know that one of the probabilities of $q^i_{m1}, q^i_{m2},..., q^i_{md}, \forall{m}$ approaches 1 and others approaches 0. As we can randomly permute the sequence of $Y_i$, there are $d!$ feasible solutions. We now consider the case in which $q^i_{kk}$s approach 1 for all $k\in{1,2,...,d}$, and other $q^i_{kj}$s are approaching  0, where $j\neq{k}$. For the $q^i_{kk}$s which approach 1, the upper bounds is restricted, and for $q^i_{kj}$s which approach 0, the lower bounds are mounted. Considering the privacy constraints, we know the upper of $q^i_{kk}$ is $\lambda^i_k/e^{-\epsilon}$ and the lower bound of $q^i_{kj}$ is $\lambda^i_k/e^{\epsilon}$. As  $q^i_{kk}+\sum_{j=1,j\neq{k}}^dq^i_{kj}=1$, for all $j$s $q^i_{kj}$s are approaching boundaries simultaneously, as a result, they may not reach the boundaries at the same time.

We now discuss whether lower bounds or upper bounds are reached first.

\begin{lem}
The parameters that decrease when $\epsilon$ grows reach the boundary first.
\end{lem}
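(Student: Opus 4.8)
The plan is to reduce Lemma~3 to a single row of the perturbation matrix and to compare the only two ways the box (privacy) constraints can bind. Fix an input index $k$ and consider the row $(q^i_{k1},\dots,q^i_{kd})$, which must satisfy the simplex constraint $\sum_{j=1}^d q^i_{kj}=1$ together with the per-output privacy bounds $\lambda^i_j e^{-\epsilon}\le q^i_{kj}\le \lambda^i_j e^{\epsilon}$. From the monotonicity established in Lemma~2, $Var[\hat X_i]$ strictly increases as the diagonal entry $q^i_{kk}$ is raised above $\lambda^i_k$ and as each off-diagonal entry $q^i_{kj}$ ($j\neq k$) is pushed below $\lambda^i_j$. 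Hence at the optimum the increasing parameter is the diagonal, the decreasing parameters are the off-diagonals, and the statement to prove is that the latter all sit at their lower bounds $\lambda^i_j e^{-\epsilon}$ while the diagonal stays strictly interior to its own box.

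First I would drive every off-diagonal to its lower bound, $q^i_{kj}=\lambda^i_j e^{-\epsilon}$ for $j\neq k$, which is the variance-maximizing choice whenever it is feasible. Using $\sum_{j=1}^d\lambda^i_j=1$, the simplex constraint then forces the diagonal to the residual value $q^i_{kk}=1-e^{-\epsilon}\sum_{j\neq k}\lambda^i_j=1-e^{-\epsilon}(1-\lambda^i_k)$. The decisive step is to verify that this residual diagonal does not overshoot its own upper bound, i.e. that $1-e^{-\epsilon}(1-\lambda^i_k)\le \lambda^i_k e^{\epsilon}$. A short rearrangement turns this into $\lambda^i_k(e^{\epsilon}-e^{-\epsilon})\ge 1-e^{-\epsilon}$, equivalently $\lambda^i_k\ge \tfrac{1}{e^{\epsilon}+1}$. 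When this holds, the off-diagonal lower bounds are attainable simultaneously while the diagonal privacy bound is slack, which is exactly the assertion that the decreasing parameters reach the boundary first; the diagonal is then pinned by the row sum rather than by its own constraint, recovering $q^{i*}_{kk}=1-(1-P^i_k)/e^{\epsilon}$ once self-consistency is imposed.

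To close the loop I would substitute the candidate solution back into $\lambda^i_k=\sum_m q^i_{mk}P^i_m$ and check the fixed-point identity $\lambda^i_k=P^i_k$ (the same computation that later selects the unique MAE-minimizing permutation), so that the feasibility inequality becomes the clean prior condition $P^i_k\ge \tfrac{1}{e^{\epsilon}+1}$. Global optimality of this vertex then follows because any feasible perturbation away from it must, by the simplex constraint, either raise some off-diagonal above its lower bound, which strictly lowers $Var[\hat X_i]$ by Lemma~2, or push the diagonal still higher, which is blocked by its box; hence no improving direction exists.

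I expect the main obstacle to be the coupling through the output marginals $\lambda^i_j$: unlike a fixed box program, the bounds themselves depend on the variables being optimized, so the ``which bound binds first'' comparison must be carried out self-consistently rather than column by column. Closely related, the residual-diagonal check $\lambda^i_k\ge \tfrac{1}{e^{\epsilon}+1}$ can fail for an input value whose prior is very small relative to $e^{\epsilon}$; handling that regime, in which the diagonal upper bound would bind instead, is the delicate part, and I would either invoke the assumption that the privacy budget is large enough for the given prior or treat such small-prior coordinates separately by saturating their diagonal at the upper bound and redistributing the residual mass across the off-diagonals.
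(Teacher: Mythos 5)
Your proposal follows essentially the same route as the paper's own proof: push every off-diagonal (decreasing) entry of the row to its lower bound $\lambda^i_j/e^{\epsilon}$, let the row-sum constraint pin the diagonal at $1-e^{-\epsilon}(1-\lambda^i_k)$, verify the fixed point $\lambda^i_k=P^i_k$, and then check that this residual diagonal respects its own privacy box. Where you diverge is the feasibility check itself, and there your algebra is right while the paper's is wrong. The paper computes $\frac{\lambda^i_k}{q^i_{kk}}-e^{-\epsilon}$ with $q^i_{kk}=1-(1-P^i_k)/e^{\epsilon}$ and reports the numerator as $1-e^{-\epsilon}+P^i_k(e^{\epsilon}-e^{-\epsilon})$, hence ``always nonnegative''; the correct numerator is $P^i_k(e^{\epsilon}-e^{-\epsilon})-(1-e^{-\epsilon})=e^{-\epsilon}(e^{\epsilon}-1)\bigl(P^i_k(e^{\epsilon}+1)-1\bigr)$, which is nonnegative precisely when $P^i_k\ge 1/(e^{\epsilon}+1)$ --- exactly the condition you isolate. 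So your ``delicate regime'' is not excess caution but a genuine gap in the lemma as the paper states and uses it: in the paper's own illustration ($d=3$, $P_1=0.1$, $\epsilon=1$, so $1/(e^{\epsilon}+1)\approx 0.27$), the claimed optimum $q^{*}_{11}=1-0.9/e^{\epsilon}\approx 0.67$ exceeds the diagonal's upper bound $\lambda^i_1 e^{\epsilon}\approx 0.27$, so the solution of Theorem 3 violates $\epsilon$-LIP there. Your two suggested repairs (assume $\min_k P^i_k\ge 1/(e^{\epsilon}+1)$, or saturate such diagonals at their upper bounds and redistribute the residual mass) are what is needed to make the lemma true; the paper offers neither, and its unproved closing remark that the off-diagonals become infeasible if the diagonal reaches its bound first is subsumed by your no-improving-direction argument in the regime where your condition holds.
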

\begin{proof}

When lower bounds are reached, $q^i_{j,k}=\frac{\lambda_k}{e^{\epsilon}}$ for all $j,k\in{1,2,3,...,d}, j\neq{k}$. Thus $q^i_{kk}=1-(1-P^i_k)/e^{\epsilon}$.
\begin{equation}
\begin{aligned}
    \lambda^i_k=&\sum_{j\neq{k}}^d\frac{P^i_k}{e^{\epsilon}}+P^i_k(1-\frac{(1-P^i_k)}{e^{\epsilon}})\\
    =&(1-P^i_k)\frac{P^i_k}{e^{\epsilon}}+P^i_k(1-\frac{(1-P^i_k)}{e^{\epsilon}})\\
    =&P^i_k
    \end{aligned}
\end{equation}
We can check whether $q^i_{kk}$s are in the feasible region:
\begin{equation}
\begin{aligned}
    \frac{\lambda^i_k}{q^i_{kk}}-e^{-\epsilon}=&\frac{e^{\epsilon}P^i_k}{e^{\epsilon}+P^i_k-1}-e^{-\epsilon}\\
    =&\frac{1-e^{-\epsilon}+P^i_k(e^{\epsilon}-e^{-\epsilon})}{e^{\epsilon}+P^i_k-1}\ge{0}
    \end{aligned}
\end{equation}
\begin{equation}
\begin{aligned}
    e^{\epsilon}-\frac{\lambda^i_k}{q^i_{kk}}=&e^{\epsilon}-\frac{e^{\epsilon}P^i_k}{e^{\epsilon}+P^i_k-1}\\
    =&\frac{e^{\epsilon}(e^{\epsilon}-1)}{e^{\epsilon}+P^i_k-1}\ge{0}
    \end{aligned}
\end{equation}
So, we know that when $q^i_{kj}$s reach the lower bound, $q^i_{kk}$ is still in the feasible region, it's easy to check that when $q^i_{kk}$ reaches the upper bound, $q^i_{kj}$s do not satisfies the privacy constraints. 
\end{proof}

As a result, one optimal solution is: $q^i_{kj}=P^i_{j}/e^{\epsilon}$ for all $k,j\in{1,2,3...d}$ and $j\neq{k}$; $q^i_{kk}=1-(1-P^i_{k})/e^{\epsilon}$ for all  $k\in{1,2,3...d}$. We can randomly permute the sequence of $Y_i$, thus there are $d!$ optimal solutions.
\end{proof}

\section{Proof of Theorem.4}
\begin{proof}
 We know the optimal solution of the parameters of any input $a_k$ are in the form of: $q^i_{kk}$ is approaching to 1 while other $q^i_{kj}$s are approaching to 0 so that each input value can be inferred by a particular output. For example, given $Y^i=a_k$, we can probably infer that $X^i$ is also $a_k$ and the confidence increases with $\epsilon$. 
 \begin{itemize}
     \item  Assume that $f<d$, intuitively, this will achieve decreased utility because at least one of input values with index $k\in\{1,2,...,d\}$ can not find an parameter that approaches to 1, as a result, inputs with indexed $\{f+1,f+2,...,d\}$ can not be inferred by any outputs. 
     
     Mathematically, as the $d$ is fixed, $Var(X)$ is also fixed. denote  $Var(\hat{X_i})$ as the variance of the estimator with $d=f$ and $Var(\hat{X'_i})$ as the variance of the estimator with $d>f$.
     recall that 
     \begin{equation}
        \hat{X_i}=\sum^d_{j=1}\sum^d_{k=1}a_jPr(X_i=a_j|Y_i=a_k)\mathbbm{1}^i_{k},  
     \end{equation}
     derive the $\hat{X'_i}$:
     \begin{equation}
        \hat{X'_i}=\sum^d_{j=1}\sum^f_{k=1}a_jPr(X_i=a_j|Y_i=a_k)\mathbbm{1}^i_{k},  
     \end{equation}
     First assume that for each $j\in\{1,2,...,d\}$, $k\in\{1,2,...,f\}$, the parameters of  $\hat{X_i}$ and $\hat{X'_i}$ are identical. We know that for each $j\in\{1,2,...,d\}$, $k\in\{1,2,...,f\}$, $a_jPr(X_i=a_j|Y_i=a_k)\ge{0}$, thus $Var(\hat{X'}_i)$ is monotonically increasing with $f$. 
     
     Notice that the parameters of $\hat{X_i}$ and $\hat{X'_i}$ can not be identical as for at least one $j$, $q^i_{kj}$ will increase for $k\in\{f+1,f+2,...,d\}$, $j\in\{1,2,...,f\}$. However, the increase of these values will make each $Pr(X_i=a_k|Y_i=a_j)$ smaller, thus  $Pr(X_i=a_k|Y_i=a_j)>Pr(X'_i=a_k|Y'_i=a_j)$.
     
     As a result: $Var(X_i)>Var(X'_i)$.
     \item Assume that $d<f$, this case can be viewed as a special case of the general model with $P^i_{d+1}=P^i_{d+2}=...=P^i_{f}$. Thus the optimal solutions is straightforward: $q^i_{kk}=1-(1-P^i_k)/e^{\epsilon}$, $q^i_{kj}=P^i_j/e^{\epsilon}$ for $k,j\in\{1,2,...,d\}$; $q^i_{kj}=0$, for $k\in\{1,2,...,d\}$, $j\in\{d+1,d+2,...,f\}$. As a result, the optimal solution is equivalent to the case of the general model with $d=f$.
 \end{itemize}
 In summary, the optimal range of output is $f=d$.

\end{proof}
\section{Proof of theorem 5}\label{sec:proof_histomgram}
\begin{proof}
The parameter-related term in \eqref{Error} can be further expressed as: 

\begin{equation}\label{MSE_histogram}
\begin{aligned}
=&\sum^N_{i=1}\sum^d_{k=1}\{Var(\sum^d_{j=1}Pr(X_i=a_k|Y_i=a_j)\mathbbm{1}_{\{Y_i=a_j\}})\}\\
=&\sum^N_{i=1}\sum^d_{k=1}\{(\sum^d_{j=1}Pr(X_i=a_k|Y_i=a_j)^2Var(\mathbbm{1}_{\{Y_i=a_j\}}))\\
+&\sum^d_{j=1}\sum^d_{l\neq{j}}Pr(X_i=a_k|Y_i=a_j)Pr(X_i=a_k|Y_i=a_l)\\
&\cdot{Cov(\mathbbm{1}_{\{Y_i=a_j\}}\cdot{\mathbbm{1}_{\{Y_i=a_l\}}})\}}\\
=&\sum^N_{i=1}\sum^d_{k=1}\{[\sum^d_{j=1}\frac{(q^i_{kj}P^i_{k})^2}{\lambda^i_j}(1-\lambda^i_j)-\sum^d_{j=1}\sum^d_{l\neq{j}}q^i_{kj}q^i_{kl}(P^i_{k})^2]\}\\
=&\sum^N_{i=1}\sum^d_{k=1}\{(P^i_{k})^2\sum^d_{j=1}q^i_{kj}(\frac{q^i_{kj}}{\lambda^i_j}-1)\}\\
\end{aligned}
\end{equation}
Comparing with the parameter-related term in the MSE formulation of the mimo model:
\begin{equation}\label{parameter_mimo}
  \sum^N_{i=1}\sum^d_{m=1}\sum^d_{n=1}\sum^d_{j=1}a_ma_nP^i_mP^i_nq^i_{mj}(\frac{q^i_{nj}}{\lambda^i_j}-1).
\end{equation}
when each $m=n=k$, and $a_j=1$ for $j\in\{1,2,...,d\}$, \eqref{parameter_mimo} becomes:
\begin{equation}\label{eq_mn}
    \sum^N_{i=1}\sum^d_{k=1}\sum^d_{j=1}(P^i_k)^2q^i_{kj}(\frac{q^i_{kj}}{\lambda^i_k}-1)=\sum^N_{i=1}\sum^d_{k=1}(P^i_k)^2\sum^d_{j=1}q^i_{kj}(\frac{q^i_{kj}}{\lambda^i_k}-1).
\end{equation}
Notice that \eqref{eq_mn} is the same with the parameter-related term in  and we can consider $a_j=1$, $\forall{j\in\{1,2,...,d\}}$, $m=n=k$ as a special case of \eqref{MSE_histogram}; On the other hand, these two minimization problem has the same privacy constraints. Thus their optimal solutions are identical.

\end{proof}

\end{document}